 \newcommand{\bs}{\bigskip}
 \newcommand{\ms}{\medskip}
 \newcommand{\n}{\noindent}
 \newcommand{\s}{\smallskip}
 \newcommand{\hs}[1]{\hspace*{ #1 mm}}
 \newcommand{\vs}[1]{\vspace*{ #1 mm}}
 \newcommand{\setempty}{\varnothing}
 \newcommand{\nat}{\mathbb{N}}
 \newcommand{\integer}{\mathbb{Z}}
 \newcommand{\rational}{\mathbb{Q}}
 \newcommand{\co}{\mathrm{co}\mbox{-}}
 \newcommand{\etal}{\textrm{et al.}\hspace*{2mm}}
 \newcommand{\BB}{{\cal B}}
 \newcommand{\CC}{{\cal C}}
 \newcommand{\DD}{{\cal D}}
 \newcommand{\SSS}{{\cal S}}
 \newcommand{\VV}{{\cal V}}
 \newcommand{\dl}{\mathrm{L}}
 \newcommand{\nl}{\mathrm{NL}}
 \newcommand{\p}{\mathrm{P}}
 \newcommand{\np}{\mathrm{NP}}
 \newcommand{\fl}{\mathrm{FL}}
\theoremstyle{plain}
 \newtheorem{theorem}{Theorem}[section]
 \newtheorem{lemma}[theorem]{Lemma}
\newtheorem{proposition}[theorem]{{\bf Proposition}}
 \newtheorem{corollary}[theorem]{Corollary}
  \newtheorem{definition}[theorem]{Definition}}
\newtheorem{example}[theorem]{Example}}
 \newtheorem{claim}{Claim}
 \newtheorem{yexample}[theorem]{Example}
 \newenvironment{proofof}[1]{
         \par \noindent
         {\bf Proof of #1.\hs{2}}}{\hfill$\Box$ \vspace*{3mm}}
 \newenvironment{proof}{\par \noindent
            {\bf Proof. \hs{2}}}{\hfill$\Box$ \vspace*{3mm}}
 \newenvironment{yproof}{\par \noindent
            {\bf Proof. \hs{2}}}{\hfill$\Box$ \vspace*{3mm}}
\newcommand{\ignore}[1]{}
\newcommand{\psublin}{\mathrm{PsubLIN}}
\newcommand{\dstcon}{\mathrm{DSTCON}}
\newcommand{\Lreduces}{\leq^{\mathrm{L}}_{m}}
\newcommand{\Lequiv}{\equiv^{\mathrm{L}}_{m}}
\newcommand{\snl}{\mathrm{SNL}}
\newcommand{\snp}{\mathrm{SNP}}
\newcommand{\para}{\mathrm{para}\mbox{-}}
\newcommand{\boldvec}[1]{\mbox{\boldmath $ #1 $}}
\newcommand{\twosat}{\mathrm{2SAT}}
\newcommand{\logdcfl}{\mathrm{LOGDCFL}}
\newcommand{\logcfl}{\mathrm{LOGCFL}}
\begin{document}
\pagestyle{plain}
\setcounter{page}{1}
\setcounter{footnote}{0}

\begin{center}
{\Large {\bf Logical Expressibility of Syntactic NL for Complementarity, Monotonicity,  and Maximization}}\footnote{An extended abstract \cite{Yam24} appeared under a slightly different title in the Proceedings of the 30th International Workshop on Logic, Language, Information, and Computation (WoLLIC 2024), Bern, Switzerland, June 10--13, 2024, Lecture Notes in Computer Science,  vol. 14672, pp.  261--277, Springer, 2024.} \bs\ms\\

{\sc Tomoyuki Yamakami}\footnote{Present Affiliation: Faculty of Engineering, University of Fukui, 3-9-1 Bunkyo, Fukui 910-8507, Japan}
\bs\\
\end{center}


\sloppy
\begin{abstract}
Syntactic NL or succinctly SNL was first introduced in 2017, analogously to SNP, as a ``syntactically''-defined natural subclass of NL (nondeterministic logarithmic-space complexity class) using a restricted form of logical sentences, starting with second-order  ``functional'' existential quantifiers followed by first-order universal quantifiers, in close connection to the so-called linear space hypothesis.
We further explore various properties of this complexity class SNL to achieve the better understandings of logical expressibility in NL.
For instance, SNL does not enjoy the dichotomy theorem unless L$=$NL.
To express the ``complementary''
problems of SNL problems logically, we introduce $\mu\snl$, which is an extension of SNL by allowing the use of $\mu$-terms.
As natural variants of $\snl$, we further study the computational complexity of monotone and optimization versions of SNL, respectively called MonoSNL and MAXSNL.
We further consider maximization problems that are logarithmic-space approximable with only constant approximation ratios.
We then introduce a natural subclass of MAXSNL, called MAX$\tau$SNL, which enjoys such limited approximability.

\s
\n{keywords:} second-order logic, NL, SNL, monotone SNL, optimization problem,  MAX-CUT, MAX-UK
\end{abstract}

\sloppy
\section{Background and Major Contributions}

\subsection{Motivational Discussion on Syntactic NL}\label{sec:introduction}

Since its importance was first recognized in the 1970s, the \emph{nondeterministic polynomial-time complexity class} $\np$ has been a centerfold of intensive research in the field of computer science. The ``complexity'' of each $\np$ problem has measured mostly in terms of the algorithmic behaviors of its underlying nondeterministic Turing machine (NTM) that solves it in polynomial time.
From a completely distinct perspective, another significant method in measuring the complexity of $\np$ problems can be given by the logical expressibility of how to describe (or express) a given problem using only logical symbols (i.e., variables, connectives, quantifiers, etc.).
In the late 1990s, Papadimitriou and Yannakakis \cite{PY91} and Feder and Vardi \cite{FV93,FV99} studied a logically-expressed subclass of $\np$, known now as $\mathrm{SNP}$, to capture a certain aspect of nondeterministic polynomial-time computation
in terms of second-order logical sentences starting with a second-order existential quantifier followed by a first-order universal quantifier
(with no use of the first-order existential quantifiers).
As Impagliazzo and Paturi \cite{IP01} demonstrated, the satisfiability problem whose inputs are Boolean formulas of $k$-conjunctive normal form ($k$CNF), $k\mathrm{SAT}$, is complete for $\mathrm{SNP}$ under so-called SERF reductions.
This complexity class $\mathrm{SNP}$ turns out to play an important role in promoting the better understanding of the syntactic expressibility of capturing nondeterministic polynomial-time computing.

\emph{Nondeterministic logarithmic-space (or log-space) computation} is also an important resource-bounded  computation in theory and also in practice.
Analogously to $\np$, such log-space computation formulates the \emph{nondeterministic log-space complexity class} $\nl$.
Typical $\nl$ decision problems include the 2CNF formula satisfiability problem ($\twosat$) and the directed $s$-$t$ connectivity problem ($\dstcon$). Interestingly, numerous properties that have been unknown for $\np$ are already settled for $\nl$ due to the log-space restriction of work tapes of NTMs.
For instance, $\nl$ is closed under complementation \cite{Imm88,Sze88} whereas $\np$ is believed by many researchers not to be closed under the same set operation.
In due course of a study on the complexity of ``parameterized'' decision problems, analogously to SNP, a ``syntactically''-defined  natural subclass of $\nl$ dubbed as \emph{Syntactic NL} (or succinctly, $\emph{SNL}$) and its variant $\snl_{\omega}$ were introduced in \cite{Yam17a} based on restricted forms of second-order sentences starting with second-order ``functional'' existential quantifiers (for their detailed definitions, refer to Section \ref{sec:syntactic-NL}). These logic-based complexity classes have played an important role in the field of parameterized problems with size parameters \cite{Yam17a}.
The logical expressibility of the ``parameterized'' version of SNL (resp.,  SNL$_{(\omega)}$), denoted $\para\snl$ (resp., $\para\snl_{\omega}$) for clarity, was discussed in \cite{Yam17a} within the theory of sub-linear space computation.
The complexity class $\para\snl$ naturally contains a parameterized version of $\nl$-complete problem, known as the directed $s$-$t$ connectivity problem of degree at most $3$ ($3\dstcon$), and $\para\snl_{\omega}$ contains a parameterized version of its variant, called
$\mathrm{exact3DSTCON}$, whose input graphs are restricted to vertices of degree exactly $3$ \cite{Yam17a}.
Moreover, $\para\snl_{\omega}$ is closely related to
a practical working hypothesis, known as the \emph{linear space hypothesis}\footnote{The \emph{linear space hypothesis} (LSH) states the existence of a parameterized (decision) problem that is not solvable in polynomial time using $O(n^{\varepsilon})$ space for any constant $\varepsilon\in[0,1)$. See Section \ref{sec:numbers-machines} for its precise definition.} (LSH), which was also introduced in \cite{Yam17a} and further developed in, e.g., \cite{Yam17b,Yam18a,Yam19,Yam22a,Yam22b}.
This LSH is regarded as a log-space analogue of the exponential time hypothesis
(ETH) and the strong exponential time hypothesis (SETH) of \cite{IP01,IPZ01}. The importance of LSH partly comes from the fact that, Under LSH, we can derive the long-awaiting separations: $\dl\neq\nl$, $\logdcfl\neq \logcfl$, and $\mathrm{SC}\neq\mathrm{NSC}$ \cite{Yam17a}.

Up to now, little is known for the properties of $\snl$ and $\snl_{\omega}$. The power of logical expressibility in the log-space setting has been vastly unexplored.
The primary purpose of this work is therefore to explore their fundamental properties,
in straight comparison with $\snp$.

In the past literature, natural variants of SNP have been studied to promote our basic understandings of the logical expressibility.
Papadimitriou and Yannakakis \cite{PY91} investigated in 1991 an optimization version of $\mathrm{SNP}$, called $\mathrm{MAXSNP}$, in a discussion of the development of fast approximation algorithms. Notably, they showed that MAXSNP is contained in APX; namely, all optimization problems in $\mathrm{MAXSNP}$ can be approximated in polynomial time within certain fixed approximation ratios.
They also demonstrated that many of the typical $\np$ optimization problems, including $\mathrm{MAX}\mbox{-}\mathrm{2SAT}$ and $\mathrm{MAX}\mbox{-}\mathrm{CUT}$, are in fact complete for $\mathrm{MAXSNP}$ under polynomial-time linear reductions (later, Lemma \ref{MAX-CUT-complete} shows that they are complete even under log-space AP-reductions). Other natural MAXSNP-complete problems were discussed in, e.g., \cite{Kan91,Kan92}. Lately, Bringman, Cassis, Fisher, and K\"{u}nnmann \cite{BCFK21} studied a subclass of MAXSNP, called MAXSP.
By taking a similar approach, it is possible to introduce an optimization version of $\snl$, which we intend to call $\mathrm{MAXSNL}$ (Definition \ref{def-MAXSNL}). What similarities and differences lay between $\mathrm{MAXSNL}$ and $\mathrm{MAXSNP}$? Which optimization problems in MAXSNL are approximately solvable using only log space with constant approximation ratios? Those approximable problems form the complexity class APXL \cite{Tan07,Yam13,Yam13b}.

As another variant of $\mathrm{SNP}$, Feder and Vardi \cite{FV93,FV99} studied in the late 1990s natural subclasses of $\mathrm{SNP}$ in hopes of proving  the so-called \emph{dichotomy theorem}, which asserts that all problems in a target complexity class are either in $\p$ or $\np$-complete. In particular, they considered
three restricted subclasses of SNP, namely, monotone SNP, monotone monadic SNP with disequalities, and monadic SNP. Monotone monadic SNP (dubbed as MMSNP), for example, is shown to be polynomially equivalent to constraint satisfaction problems (CSPs).
Notice that the class of all CSPs on the two element domain is proven by Schaefer \cite{Sch78} to enjoy the dichotomy theorem. A characterization of MMSBP under a natural restriction was also discussed in \cite{BCF12}.
A similar approach can be taken to introduce the monotone $\snl$, denoted by MonoSNL (Definition \ref{def-monoSNL}), and its binary variation, called MonoBSNL (Definition \ref{def-BSNL}).
What fundamental properties does MonoSNL own in comparison with $\mathrm{MMSNP}$? Do all CSPs restricted to MonoSNL enjoy the dichotomy theorem?

\subsection{Major Contributions and the Organization of This Work}


\begin{figure}[t]
\centering
\includegraphics*[height=5.9cm]{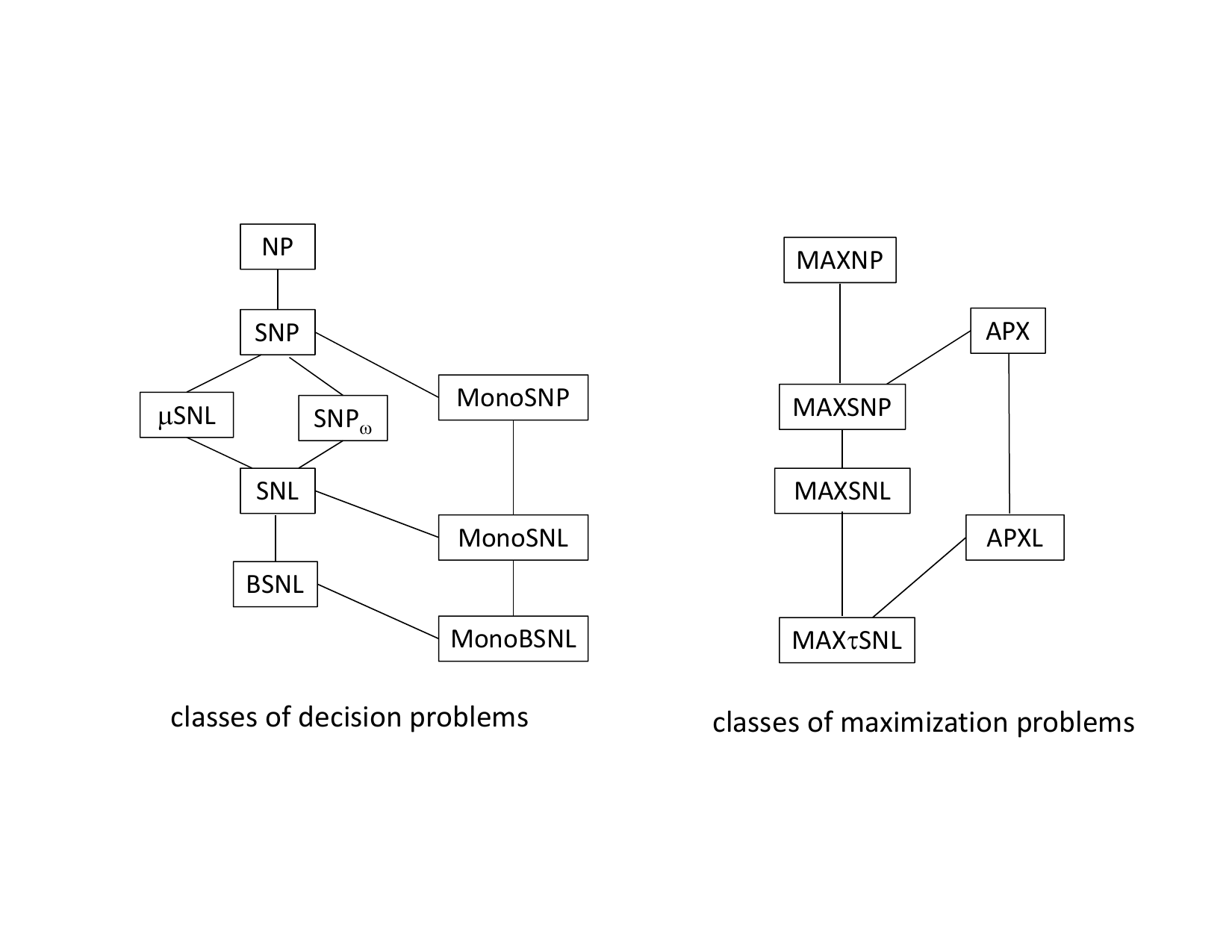}
\caption{Inclusion relationships among complexity classes discussed in this work.}\label{fig:SNL-hierarchy}
\end{figure}


We briefly describe three major contributions of this work on the complexity class $\snl$ and its natural variants defined later in this work. Section \ref{sec:syntactic-NL} will present the brief explanation of $\snl$ and its underlying notions, including vocabulary and relational and domain structures. Figure \ref{fig:SNL-hierarchy} illustrates inclusion relationships among the complexity classes discussed in this work.

The first major result presented in Section \ref{sec:structure} is concerning the structure of decision problems (or equivalently, languages) in $\snl$.
To measure the relative complexity of decision problems in $\snl$, we use \emph{logarithmic-space many-one reductions} (or $\dl$-m-reductions, for short) and
show in Section \ref{sec:basic-closure} that every decision problem in $\nl$ has an $\dl$-m-equivalent problem in $\snl$ (Proposition \ref{equivalent-MonoSNL}), where two problems are said to be \emph{L-m-equivalent} if one problem is reducible to another and vice visa under $\dl$-m-reductions.
This does not seem to imply that SNL is also closed under complementation in spite of $\nl=\co\nl$.
We will show that $\mathrm{2COLOR}$  (2-colorability problem) and its ``complementary'' problem (dubbed as NBG) belong to $\snl$ (Proposition \ref{NBG-mu}).
In contrast, the ``complementary'' problem of $\dstcon$, called $\mathrm{DSTNCON}$, is proven to be contained in a natural superclass of $\snl$ (Theorem  \ref{DSTNCON-mu}), which is called $\mu\snl$. This is proven in Section \ref{sec:mu-operator} by syntactically implementing a well-known technique of \emph{inductive counting} \cite{Imm88,Sze88}. Unfortunately, it remains unknown that $\mathrm{SNL}= \mu\mathrm{SNL}$.

Secondly, we will study in Section \ref{sec:monotone} a relationship between the monotone variant of SNL, called  $\mathrm{MonoSNL}$, and a dichotomy theorem.
A \emph{dichotomy theorem} classifies all languages in question to  either ones that are easy to solve or ones that are extremely difficult to solve. Such a dichotomy theorem is known for several restricted subclasses of $\mathrm{SNP}$.
In comparison, we will show that $\snl$ does not enjoy the dichotomy theorem unless $\dl=\nl$ (Corollary \ref{dichotomy-L-NL}).
We further require second-order functional variables to behave as functions mapping natural numbers to $\{0,1\}$.
We write $\mathrm{BSNL}$ when all of its underlying $\snl$ sentences satisfy this additional requirement.
$\mathrm{MonoSNL}$ and $\mathrm{MonoBSNL}$ (which are respectively monotone versions of $\mathrm{SNL}$ and $\mathrm{BSNL}$) are closely related to restricted forms of constraint satisfaction problems, which we respectively call $\mathrm{CSP}_2(\Gamma)$ and $\mathrm{BCSP}_2(\Gamma)$ for an arbitrary set $\Gamma$ of constraint functions.
Although it is unknown that $\mathrm{MonoBSNL}$ enjoys the dichotomy theorem, we will note in Section \ref{sec:monotone} that $BCSP_2(\Gamma)$ is either in $\dl$ or $\nl$-complete  for any set $\Gamma$  (Corollary \ref{BCSP-dichotomy}).

Thirdly, analogously to MAXSNP, we will consider the class $\mathrm{MAXSNL}$ of optimization problems in Section \ref{sec:maximal}. Instead of polynomial-time linear reductions of \cite{PY91}, we use logarithmic-space AP-reductions of  \cite{Yam13b}. The maximization problem $\mathrm{MAX}\mbox{-}\mathrm{CUT}$ is complete for $\mathrm{MAXSNL}$ under these reductions (Lemma \ref{MAX-CUT-complete}).
Regarding log-space approximation schemes of \cite{Yam13,Yam13b}, we will  construct them for a maximization version of the unary 0-1 knapsack problem (UK), called $\mathrm{MAX}\mbox{-}\mathrm{UK}$ (Proposition \ref{MAXUK-approximation}), and for a subclass of $\mathrm{MAXSNL}$, called  $\mathrm{MAX}\tau\mathrm{SNL}$ (Theorem \ref{MAX-tau-SNL}).

In Section \ref{sec:open-problem}, we will provide a short list of open problems for future research on $\snl$ and its relevant notions.

\vs{-2}
\section{Basic Notions and Notation}\label{sec:notions}

We briefly discuss basic notions and notation used in the rest of this work.

\subsection{Numbers, Machines, and Reducibility}\label{sec:numbers-machines}

Concerning numbers, we use three special notations $\nat$, $\integer$, and $\rational$, which respectively consist of all \emph{natural numbers} (including $0$), of all \emph{integers}, and all \emph{rational numbers}.  Moreover, we set $\nat^{+}=\nat-\{0\}$ and $\rational^{>1} = \{r\in\rational\mid r>1\}$. For two numbers $m,n\in\integer$ with $m\leq n$, the notation $[m,n]_{\integer}$ denotes an \emph{integer interval} $\{m,m+1,m+2,\ldots,n\}$. Given a number
$n\in\nat^{+}$, $[n]$ is a shorthand for $[1,n]_{\integer}$.
For a finite set $S$, $|S|$ denotes the \emph{cardinality} of $S$; that is, the total number of elements in $S$. As customary, we freely identify decision problems with their associated languages.

We assume the reader's familiarity with \emph{deterministic Turing machines} (or DTMs, for short) with random-access mechanism, each of which is equipped with a read-only input tape, multiple work tapes, and an index tape used to specify an address of the input tape for an instant access to a target input tape cell.
For any decision problem (which is freely identified with its corresponding language) $L$, a DTM $M$ is said to \emph{solve} $L$ if, for any instance $x$ in $L$, $M$ accepts it and, for any instance $x$ in $\overline{L}$ ($= \Sigma^*-L$), $M$ rejects it.
A function $f$ from $\Sigma^*$ to $\Gamma^*$ for two alphabets $\Sigma$ and $\Gamma$ is called \emph{logarithmic-space (or log-space) computable} if there exists a DTM equipped further with a write-once\footnote{A tape is \emph{write-once} if its tape head never moves to the left and, whenever the tape head writes a non-blank symbol, it must move to the next blank cell.} output tape that, on input $x\in\Sigma^*$, produces $f(x)$ on this output tape using $O(\log{|x|})$ work space. We write $\fl$ to denote the class of all polynomial-time log-space computable functions.


We briefly explain the notion of parameterized decision problems and  introduce the linear space hypothesis (LSH), discussed in \cite{Yam17a}.
A \emph{size parameter} $m$ over alphabet $\Sigma$ is a map from $\Sigma^*$ to $\nat$.
In particular, a \emph{log-space size parameter} refers to a size parameter that is computable in $n^{O(1)}$ time using $O(\log{n})$ space on any inputs of length $n$. A typical example of such a size parameter is $m_{\|}$ defined by $m_{\|}(x) = |x|$ for any string $x$.
A \emph{parameterized decision problem} has the form $(L,m)$ with a language $L$ over a certain alphabet $\Sigma$ and a size parameter $m$ over $\Sigma$.
Given such a parameterized decision problem $(L,m)$, we say that a DTM $M$   \emph{solves} $(L,m)$ \emph{in polynomial time using sublinear space} if $M$ solves $L$ and $M$ runs in time polynomial in $m(x)$ using space at most $m(x)^{\varepsilon}\ell(|x|)$ for a constant $\varepsilon\in[0,1)$ and polylogarithmic functions $\ell$ for all inputs $x$.
The complexity class $\psublin$, introduced in \cite{Yam17a}, is composed of all parameterized decision problems together with log-space size parameters solvable in polynomial time using sublinear space.

We often deal with Boolean formulas expressed in \emph{$k$-conjunctive normal form} ($k$CNF); that is, a conjunction of disjunctions of at most $k$ literals, where $k$ is a fixed positive number. The notation $\mathrm{2SAT}$ expresses the set of all satisfiable 2CNF Boolean formulas. For a fixed number $d\in\nat^{+}$,  we use the special notation of $\mathrm{2SAT}_d$ when 2CNF Boolean formulas are limited to the ones in which each variable appears at most $d$ times in the form of literals.
We write $m_{vbl}(\phi)$ and $m_{cls}(\phi)$ respectively for the total numbers of variables and of clauses used in a given Boolean formula $\phi$.
The \emph{linear space hypothesis} (LSH) is a statement that the parameterized decision problem $(\mathrm{2SAT}_3,m_{vbl})$ does not belong to $\psublin$.

In this work, however, we fix our size parameter $m$ used for $\psublin$ to the standard one $m_{\|}$  and we use the specific notation $\psublin_{\|}$ to denote the collection of all (standard) decision problems $L$ for which its parameterization $(L,m_{\|})$ belongs to $\psublin$.
The following proposition allows us to focus on the standard size parameter $m_{\|}$ when discussing LSH.

\begin{proposition}
LSH holds if and only if $\twosat_3\notin \psublin_{\|}$.
\end{proposition}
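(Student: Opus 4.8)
The plan is to unfold both sides of the claimed equivalence into statements about the class $\psublin$ and then to exploit the fact that, \emph{for instances of $\twosat_3$}, the size parameter $m_{vbl}$ and the length $m_{\|}=|\cdot|$ are nearly linearly related. By the very definition of $\psublin_{\|}$, the condition ``$\twosat_3\notin\psublin_{\|}$'' is literally ``$(\twosat_3,m_{\|})\notin\psublin$'', whereas LSH is the assertion ``$(\twosat_3,m_{vbl})\notin\psublin$''. So it suffices to establish the contrapositive equivalence $(\twosat_3,m_{vbl})\in\psublin \iff (\twosat_3,m_{\|})\in\psublin$.

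The crucial elementary observation I would isolate first is that, under the standard encoding of 2CNF formulas, every legitimate instance $\phi$ of $\twosat_3$ with $n=m_{vbl}(\phi)\ge 1$ variables satisfies $n\le|\phi|\le c\,n\log(n+2)$ for an absolute constant $c$. The lower bound holds because each of the $n$ variables must be written at least once; the upper bound is precisely where the degree restriction built into $\twosat_3$ enters: since every variable occurs at most three times, $\phi$ has at most $3n$ literal occurrences, hence at most $3n$ clauses, and each clause costs only $O(\log n)$ bits to encode. (Strings that do not encode a legitimate instance are handled by the standard convention $m_{vbl}(x)=|x|$, and variable-free or otherwise degenerate formulas are decided outright; neither case causes trouble.)

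Granting this, I expect both implications to be bookkeeping. For ``$(\twosat_3,m_{\|})\in\psublin \Rightarrow (\twosat_3,m_{vbl})\in\psublin$'', I would take a DTM $M$ witnessing the hypothesis --- running in time $\poly(|x|)$ and space at most $|x|^{\varepsilon}\ell(|x|)$ with $\varepsilon\in[0,1)$ and $\ell$ polylogarithmic --- and check that on a legitimate instance $\phi$ with $n=m_{vbl}(\phi)$ one has $\poly(|\phi|)=\poly(n)$ for the running time and $|\phi|^{\varepsilon}\ell(|\phi|)\le c^{\varepsilon}n^{\varepsilon}(\log(n+2))^{\varepsilon}\ell(|\phi|)\le n^{\varepsilon}\ell'(|\phi|)$ for the polylogarithmic function $\ell'(k)=c^{\varepsilon}\log(k+2)\ell(k)$, using $\varepsilon<1$ and $n\le|\phi|$; thus the same $M$ also witnesses $(\twosat_3,m_{vbl})\in\psublin$. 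The converse is easier still: since $m_{vbl}(x)\le|x|$ for every $x$, any DTM solving $(\twosat_3,m_{vbl})$ within time $\poly(m_{vbl}(x))$ and space $m_{vbl}(x)^{\varepsilon}\ell(|x|)$ automatically meets the bounds $\poly(|x|)$ and $|x|^{\varepsilon}\ell(|x|)$ required for $(\twosat_3,m_{\|})\in\psublin$.

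The one genuinely load-bearing step is the near-linear bound $|\phi|=O(m_{vbl}(\phi)\log m_{vbl}(\phi))$ for $\twosat_3$: if one dropped the degree-$3$ restriction, the length of a 2CNF formula could be quadratic in its number of variables, and then the estimate $|\phi|^{\varepsilon}\le n^{\varepsilon}\cdot\mathrm{polylog}$ would break down for $\varepsilon$ close to $1$, so the equivalence would no longer be transparent. The remaining care is only about which argument feeds the polylogarithmic factor (it is $|x|$, not $m(x)$, in the definition of $\psublin$) and about the convention adopted for ill-formed inputs.
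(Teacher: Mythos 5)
Your proposal is correct and follows essentially the same route as the paper: both hinge on the near-linear relation between $|\phi|$ and the structural size parameters of a degree-bounded 2CNF formula, and the rest is the same bookkeeping with the time and space bounds of a witnessing machine. The only (minor) difference is that you bound $|\phi|=O(m_{vbl}(\phi)\log m_{vbl}(\phi))$ directly from the degree-$3$ restriction, whereas the paper detours through the clause-count parameter $m_{cls}$ and invokes the known equivalence $(\twosat_3,m_{vbl})\in\psublin$ iff $(\twosat_3,m_{cls})\in\psublin$ from the earlier work; your version is marginally more self-contained but not a different argument.
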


\begin{proof}
Given any 2CNF Boolean formula $\phi$, we write $V_{\phi}$ and $C_{\phi}$ for the set of all variables in $\phi$ and the set of all clauses in $\phi$, respectively. Notice that $m_{vbl}(\phi)=|V_{\phi}|$ and $m_{cls}(\phi)=|C_{\phi}|$. Moreover, we write $|\phi|$ for the length of binary representation of $\phi$, assuming the existence of a reasonable binary encoding of Boolean formulas.
Let us recall that LSH holds iff $(\twosat_3,m_{vbl})\notin \psublin$.
Assume first that LSH is false. We then take a DTM $M$ solving $(\twosat_3,m_{vbl})$ in time polynomial in $m_{vbl}(\phi)$ using space at most $m_{vbl}(\phi)^{\varepsilon}\ell(|\phi|)$ for a constant $\varepsilon\in[0,1)$ and a polylog function $\ell$.
Clear, $\max\{m_{vbl}(\phi),m_{cls}(\phi)\} \leq |\phi|$ follows. We then obtain  $m_{vbl}(\phi)\leq 2m_{cls}(\phi)\leq 2|\phi|$ and $m_{vbl}(\phi)^{\varepsilon}\ell(|\phi|)\leq (2|\phi|)^{\varepsilon}\ell(|\phi|)\leq |\phi|^{\varepsilon}\ell'(|\phi|)$, where $\ell'(n) = 2\ell(n)$. From this, we conclude that $M$ solves $\twosat_3$ in $|\phi|^{O(1)}$ time using at most $|\phi|^{\varepsilon}\ell'(|\phi|)$ space. This implies that $\twosat_3\in\psublin_{\|}$.

Conversely, assume that $\twosat_3\in \psublin_{\|}$.
It follows that $|\phi| \leq d |C_{\phi}|\log|V_{\phi}|$ for a certain constant $d>0$, which is independent of the choice of $\phi$. From this fact, we conclude that $|\phi|\leq m_{cls}(\phi)\log{m_{vbl}(\phi)} \leq m_{cls}(\phi) \log2m_{cls}(\phi)$. This implies that $M$ runs in time $m_{cls}(\phi)^{O(1)}$ using space at most $m_{cls}(\phi)^{\varepsilon}\ell'(m_{cls}(\phi))$ for an appropriate polylog function $\ell$. It then follows that $(\twosat_3,m_{cls})$ belongs to $\psublin$. It is known in \cite{Yam17a} that $(\twosat_3,m_{vbl})\in \psublin$ iff $(\twosat_3,m_{cls})\in\psublin$. Therefore, LSH does not hold.
\end{proof}


Given two decision problems $L_1$ and $L_2$, we say that $L_1$ is \emph{$\dl$-m-reducible to} $L_2$ (denoted $L_1\Lreduces L_2$) if there is a function $f$ in $\fl$ such that, for all $x$, $x\in L_1$ holds exactly when $f(x)\in L_2$. Moreover, $L_1$ is said to be \emph{$\dl$-m-equivalent to} $L_2$ (denoted $L_1\Lequiv L_2$) if $L_1\Lreduces L_2$ and $L_2\Lreduces L_1$ both hold.

\subsection{Syntactic NL (or SNL)}\label{sec:syntactic-NL}

Let us explain the fundamental terminology given in \cite{Yam17a}.
Although the original concepts were introduced in a discussion on the computational complexity of ``parameterized'' decision problems, in this work, we fix $m_{\|}(x)=|x|$ as our size parameter and we modify these concepts to fit in the setting of ``standard'' (i.e., non-parameterized) decision problems. Whenever we need to refer to the original ``parameterized''  $\snl$, we emphatically write $\para\snl$, as in \cite{Yam17a}, to avoid any confusion of the reader.

As an introduction of the syntax of our logical system, we start with explaining \emph{syntactic NL sentences} (or \emph{SNL sentences}, for short).

\begin{definition}
A \emph{vocabulary} (an \emph{input signature} or an \emph{input relation})  is a finite set composed of $(S_i,k_i)$, $c_j$, $0$, $n$, $suc$, $pred$ for all $i\in[d]$ and $j\in[d']$, where $S_i$ is a
\emph{predicate symbol} of arity $k_i\geq0$ (or a $k_i$-arity predicate symbol), $0$ and $n$ are \emph{constant symbols}, $c_j$ is another specific symbol expressing an ``input object'' (such as a number, a vertex or an edge of a graph, and a size of columns or rows of a matrix) of the target computational problem, and $pred$ and $suc$ are
two designated function symbols called respectively by the \emph{successor function} and the \emph{predecessor function}.
\end{definition}

The meanings of $suc(\cdot)$ and $pred(\cdot)$ are, as their names suggest,  $suc(i)=i+1$ and $pred(i)=\max\{0,i-1\}$ for any $i\in\nat$. We often abbreviate $suc(suc(i))$ as $suc^2(i)$ and $suc(suc^2(i))$ as $suc^3(i)$, etc. We further write $i+e$ for $suc^e(i)$ when $e$ is a constant in $\nat^{+}$.

To types of variables were used in \cite{Yam17a}. \emph{First-order variables}, denoted by $i,j,\ldots, u,v,\ldots$, range over all natural numbers and input objects  (such as vertices or edges of a graph and entries of a matrix) used to describe various parts of an instance of a target computational problem.
\emph{Second-order variables}\footnote{In \cite{Yam17a}, a second-order variable is limited to have only two argument places. To make it more general, we here allow the second-order variable to have more than two argument places.},
denoted by $P,Q,\ldots$, in this work range over a specific form of  \emph{relations} whose first argument takes a natural number and the other arguments take multiple input objects. This first-argument restriction of  second-order variables is necessary to ``express'' log-space computing. See \cite{Yam17a} for more information.

\emph{Terms} include first-order variables, constant symbols, and function symbols.
An \emph{atomic formula} has one of the following forms:  $S_j(u_1,\ldots,u_{k_i})$, $P(i,v_1,v_2,\ldots,v_k)$, $u=v$, and $i\leq j$, where $i,j,u,v,u_1,\ldots,u_{k_i},v_1,\ldots,v_k$ are terms, and $P$ is a second-order variable.
For clarity reason, we use $i,j$ for number-related terms and $u,v,u_1,\ldots,u_{k_i},v_1,\ldots,v_k$ for terms associated with other input objects.
\emph{Formulas} are built inductively from atomic formulas by connecting them with logical connectives ($\to$, $\neg$, $\vee$, $\wedge$) and first/second-order quantifiers ($\forall$, $\exists$). If a formula contains no free variables, then it is called a \emph{sentence}.
Notice that $\to$ and $pred$ are included here for our convenience although they are redundant because $\phi\to\psi$ is equivalent to $\neg\phi\vee \psi$, and $pred(i)=j$ is equivalent to $i=j=0\vee suc(j)=i$.

In this work, we concentrate on the specific case where second-order variables represent only ``functions''. It is therefore convenient to introduce a functional variant of the second-order quantifier.
For this purpose, we use the special notation $\exists^fP[\psi(P)]$ with a formula $\psi$ containing no second-order quantifiers as a shorthand for $\exists P[\psi(P) \wedge Func(P)]$, where
$Func(P)$ is a unique sentence over a second-order variable $P$ expressing that $P(\cdot,\cdot)$ works as a ``function''; namely, $Func(P) \equiv Func_1(P)\wedge Func_2(P)$, where $Func_1(P) \equiv (\forall i) (\exists w_1,\ldots,w_k)[P(i,w_1,\ldots,w_k)]$ and  $Func_2(P) \equiv (\forall i, u_1,\ldots,u_k,v_1,\ldots,v_k) [P(i,u_1,\ldots,u_k)\wedge P(i,v_1,\ldots,v_k)\rightarrow \bigwedge_{j=1}^{k}(u_i=v_i)]$.
Here, the symbol ``$\exists^f$'' is emphatically called the \emph{functional existential quantifier} and the variable $P$ (in the scope of $\exists^f$) is called a \emph{functional variable}.

\begin{definition}\label{SNL-sentence}
Let $\VV=\{(S_i,k_i),c_j, 0, n, suc, pred \mid i\in[d],j\in[d']\}$ denote a  vocabulary. A \emph{syntactic NL sentence} (or an \emph{SNL sentence}) over $\VV$ is a second-order sentence $\Phi$ of the form:
\begin{eqnarray*}
\lefteqn{\Phi \equiv \exists^f P_1\cdots \exists^f P_l\: \forall i_1\cdots \forall i_r \: \forall y_1\cdots \forall y_s} \hs{10}\\
&& [ \bigwedge_{j=1}^{t} \psi_j(P_1,\ldots,P_l,i_1, \ldots,i_r,y_1, \ldots,y_s,S_1,\ldots, S_d,c_1,\ldots,c_{d'})],
\end{eqnarray*}
where $l,r,s,t\in\nat$ and each $\psi_j$ ($j\in[t]$) is a quantifier-free second-order formula for which no two $\psi_j$'s share any common first-order variables, where all variables are listed on the above expression of $\psi_j$ only for simplicity.
Here, $P_1,\ldots,P_l$ are second-order functional variables  $i_1,\ldots,i_r$ are first-order variables representing natural numbers, and $y_1,\ldots,y_s$ are also first-order variables representing all other input objects. Each formula $\psi_j$ should satisfy the following two \emph{second-order variable requirements}.
\begin{enumerate}\vs{-1}
  \setlength{\topsep}{-2mm}%
  \setlength{\itemsep}{0mm}%
  \setlength{\parskip}{0cm}%

\item[(i)] Each $\psi_j$ contains only second-order variables of the form {$P_k(i,\boldvec{v}_1), P_k(suc(i),\boldvec{v}_2), P_k(suc^2(i),\boldvec{v}_3), \ldots$, $P_k(suc^a(i),\boldvec{v}_{a+1})$} for a fixed constant $a\in\nat^{+}$, where each of $\boldvec{v}_1,\ldots,\boldvec{v}_{a+1}$ is a $k'$ tuple of terms for a fixed constant $k'\in\nat^{+}$.

\item[(ii)] $\psi_j$ can be rewritten in the logically-equivalent form of finite ``disjunctions''  satisfying the following condition: among those disjuncts, there are only at most two disjuncts containing second-order variables and each of them must have the form $(\bigwedge_{k,i,\boldvec{v}} P_k(i,\boldvec{v}))\wedge (\bigwedge_{k',i',\boldvec{v'}} \neg P_{k'}(i',\boldvec{v'})) \wedge R$, where $R$ is an appropriate subformula including no second-order variable.
\end{enumerate}
\end{definition}

The requirement (ii) was originally introduced as a natural analogue of the formulation of 2SAT. It is significantly important to guarantee that SNL (defined in Definition \ref{def-of-SNL}) is contained in NL.

Next, we explain the semantics of $\snl$ sentences.

\begin{definition}\label{SNL-model}
Let $\VV=\{(S_i,k_i),c_j, 0, n, suc, pred \mid i\in[d],j\in[d']\}$ denote any vocabulary.

(1) A \emph{relational structure} $\SSS$ over $\VV$ is a set of tuples $(U_i,D_i,k_i)$ and $(\underline{c}_j,V_j)$ with finite universes $U_i$ and $V_j$ of ``input objects'' (including natural numbers) and domains $D_i$ associated with predicate symbols $S_i$ in $\VV$ satisfying $D_i\subseteq U_i^{k_i}$, and constants $\underline{c}_j$ in $V_j$.
The constant symbols $c_j$ are interpreted as $\bar{c}_j$ and the predicate symbols $S_i$ are interpreted as $D_i$ so that, if input objects $\underline{s}_1,\underline{s}_2,\ldots,\underline{s}_{k_i}$ in $U_i$ are assigned respectively to variables $x_1,x_2,\ldots,x_{k_i}$ used for $S_i$, the formula $S_i(x_1,x_2,\ldots,x_{k_i})$ is true exactly when $(\underline{s}_1,\underline{s}_2,\ldots,\underline{s}_{k_i})\in D_i$.

(2) Let $\Phi$ denote any SNL sentence of the form of Definition \ref{SNL-sentence} with variables $P_1,\ldots,P_l,i_1,\ldots,i_r,y_1,\ldots,y_s$ over $\VV$.
A \emph{domain structure} $\DD$ for $\Phi$ is the union of three sets $\{(P_j,[0,e_j]_{\integer}\times U'_{j_1}\times\cdots \times U'_{j_{k'}},k'+1)\}_{j\in[l]}$, $\{(i_j,[0,e'_j]_{\integer})\}_{j\in[r]}$, and $\{(y_j,U''_j)\}_{j\in[s]}$, which provide the scopes of variables of $\Phi$ in the following manner for fixed constants $e_j,e'_j\in\nat^{+}$. Each second-order variable $P_j$ ($j\in[l]$) ranges over all elements in $[0,e_j]_{\integer} \times U'_{j_1}\times\cdots\times U'_{j_{k'}}$, each first-order variable $i_j$ ($j\in[r]$) ranges over all numbers in $[0,e'_j]_{\integer}$, and each variable $y_j$ ($j\in[s]$) ranges over all elements in $U''_{j}$.
\end{definition}

Concrete examples of relational and domain structures will be given in Examples \ref{2COLOR} and \ref{example-UK}.


\begin{definition}
A relational structure $\SSS$ over vocabulary $\VV$ is said to \emph{describe} (or \emph{represent}) an instance $x$ of the target computational problem if every input object appearing in $x$ has either its corresponding predicate symbol in $\VV$ with its universe and domain in $\SSS$ or its corresponding  constant symbol in $\VV$ with its universe in $\SSS$.
\end{definition}

It is important to remark that, when $\SSS$ describes $x$, since the universes $U_i$ and $V_i$ must be completely specified inside $x$, their sizes $|U_i|$ and $|V_i|$ should be upper-bounded by $O(|x|)$.

When a relational structure $\SSS$ and a domain structure $\DD$ are given for an SNL sentence $\Phi$, it is possible to determine the \emph{validity} of $\Phi$ by interpreting all predicate symbols $S_i$ and all constant symbols $c_j$ appearing in $\Phi$ as domains $D_i$ and constants $\underline{c}_j$ in $\SSS$ and by assigning input objects in $\SSS$  and $\DD$ to variables appropriately.
This interpretation makes $\Phi$ either ``true'' or ``false''.
Notationally, we write $(\SSS,\DD)\models \Phi$ if $\Phi$ is true on $\SSS$ and $\DD$.
When $(\SSS,\DD)$ are clear from the context, nevertheless, we further omit $(\SSS,\DD)$ and simply write $\models \Phi$.

\begin{definition}\label{def-of-SNL}
Given a decision problem $A$ and an $\snl$ sentence $\Phi$ over vocabulary $\VV$, we say that $\Phi$ \emph{syntactically expresses} $A$ if, for any instance $x$ to $A$, there are a relational structure $\SSS_x$ over $\VV$  describing $x$ and a domain structure $\DD_x$ for $\Phi$ satisfying the following condition: $x\in A$ iff $\Phi$ is true on $\SSS_x$ and $\DD_x$.
\end{definition}

It is possible to view the syntactical expressibility by an $\snl$ sentence as a (non-probabilistic) Merlin-Arthur interactive proof system, in which Merlin provides a polynomial-size ``proof'' to Arthur who check the validity of the proof by a logarithmically space-bounded algorithm.

\begin{definition}
We denote by $\mathrm{SNL}$ the collection of all decision problems $A$ such that there exist a vocabulary $\VV$ and an SNL sentence $\Phi$ over $\VV$ for which $\Phi$ syntactically expresses $A$.
\end{definition}

In \cite{Yam17a}, the parameterized decision problem $(\dstcon,m_{ver})$ was shown to be in $\para\snl$, where the size parameter $m_{ver}$ indicates the total number of vertices of a given graph.
By reviewing the corresponding proof of this fact, we can conclude that $\dstcon$ belongs to (the ``non-parameterized'' class) $\snl$.
As another quick example, we see how to construct an $\snl$ sentence to express the decision problem $\mathrm{2COLOR}$, in which one asks whether a given undirected graph is colorable using only two colors. This is the same as checking that a given graph is bipartite.
It is known that $\mathrm{2COLOR}$ falls in $\dl$ (see \cite{AG00} with the fact that $\mathrm{SL}=\dl$).

\begin{yexample}\label{2COLOR}
{\rm
We wish to show that $\mathrm{2COLOR}$ belongs to $\snl$ by constructing an appropriate $\snl$ sentence for $\mathrm{2COLOR}$.
Given an instance $x$ of an undirected graph $G=(V_G,E_G)$, we assume that $V_G =\{v_1,v_2,\ldots,v_n\}$. We identify each vertex $v_i$ with the integer $i$.
Hence, $V_G$ is viewed as $[n]$ and $E_G$ is viewed as a subset of $[n]\times[n]$.
Let $E$ denote a predicate symbol associated with $E_G$.
We define $\VV=\{(E,2),0,1\}$  and  $\SSS_x=\{(U_x,D_x,2), (0,V_{1}), (1,V_{2})\}$ with $U_x=[n]$, $D_x=\{(i,j)\mid (x_i,x_j)\in \hat{E}\}$, $V_1=\{0\}$, and $V_2=\{1\}$. Clearly, $\SSS_x$ describes $x$.
Next, we define a sentence $\Phi$ to be $(\exists^{f}C) (\forall i, d, i',  j', d', e')  [\Phi_1(C,i,d) \wedge  \Phi_2(C,E,i',j',d',e') ]$, where $i,j,i',d,d',e'$ are symbols expressing the first-order variables ranging over $[n]$,   $\Phi_1(C,i,d) \equiv C(i,d)\to 0\leq d\leq 1$, and $\Phi_2(C,E,i',j',d',e')  \equiv  E(i',j')\wedge C(i',d')\wedge C(j',e')\to d'\neq e'$. The sentence $\Phi$ informally asserts that, for an appropriate coloring of vertices, (i) we use only two colors and (ii) two endpoints of each edge are colored by distinct colors.
By rewriting $\Phi_1$ and $\Phi_2$ in the disjunction form, we can show that $\Phi_1$ and $\Phi_2$ both satisfy the second-order variable requirements.
To see this fact, we note that $\Phi_1$ is logically equivalent to $\neg C(i,d)\vee 0\leq d\leq 1$ and that $\Phi_2$ is to $\neg E(i',j')\vee C(i',d')\vee C(j',e') \vee d'\neq e'$. The two last formulas show that the required conditions on the second-order variables are clearly satisfied.
We further define $\DD_x=\{(C,[n]\times \{0,1\},2)\}\cup
\{ (s,[n])\mid s\in\{i,i',j'\}\}  \cup \{(s',[2]) \mid s'\in\{d,d',e'\}\}$.
It then follows that $\Phi$ is true on $\SSS_x$ and $\DD_x$ iff $x\in \mathrm{2COLOR}$.
}
\end{yexample}

Another example is the \emph{unary 0-1 knapsack problem} (UK), which was discussed by Cook \cite{Coo85}. An instance of UK is a series $(1^b,1^{a_1},1^{a_2},\ldots,1^{a_n})$ of unary strings with $b,a_1,a_2,\ldots,a_n\in\nat^{+}$ and one asks to determine the existence of a subset $S$ of $[n]$ satisfying $\sum_{i\in S}a_i=b$.
It was shown in \cite{Yam23} that $\mathrm{UK}$ belongs to a subclass of $\nl$, called $\mathrm{1t1NCA}$ (see \cite{Yam23} for details).

\begin{yexample}\label{example-UK}
{\rm
We claim that the decision problem $\mathrm{UK}$ is also in $\mathrm{SNL}$. To see this, let $x=(1^b,1^{a_1},1^{a_2},\ldots,1^{a_n})$ denote any instance given to $ \mathrm{UK}$. For simplicity, we assume that $a_i\leq b$ for all indices $i\in[n]$.
We then prepare two predicate symbols $I$ and $ADD$ for which $I(i,a)$ means that $a$ is the $i$th input value $a_i$ of $x$ and $ADD(c,a,b)$ means $c=a+b$. We set $\Phi\equiv (\exists^fP)(\forall i,s,t) [P(0,0)\wedge P(n,b)\wedge (\psi_1(P,i,s,t)\to \psi_2(P,I,ADD,s,t))]$, where $\psi_1\equiv i<n \wedge P(i,s)\wedge P(i+1,t)$ and $\psi_2\equiv  s=t\leq b\vee (s<t\leq b\wedge (\forall z)[I(i+1,z)\wedge z>0\to ADD(t,s,z)])$.
Notice that the formula $\psi_1\to\psi_2$ satisfies the second-order variable requirements since $\psi_1\to\psi_2$ can be rewritten as $\neg P(i,s)\vee \neg P(i+1,t)\vee R$ for an appropriate formula $R$ containing no second-order variables. We then conclude that $\Phi$ is an $\snl$ sentence.

We set $\VV=\{(I,2),(ADD,3),0,n,b\}$, where $b$ is treated as a constant. We define $\SSS_x=\{(U_x,D_I,2), (U_x,D_{ADD},3), (\bar{b},U_x)\}$, where $U_x=[0,b]_{\integer}$, $D_I = \{(\underline{i},\underline{a})\mid \underline{i}\in[0,n]_{\integer}, \underline{a}\in U_x\}$, and $D_{ADD}=\{(\underline{t},\underline{s},\underline{z})\mid \underline{t},\underline{s},\underline{z}\in U_x, \underline{t}=\underline{s}+\underline{z} \}$. Moreover, we set $\DD_x=\{(P,U_n\times U_x,2)\} \cup \{(i,U_n)\} \cup \{(u,U_x)\mid u\in\{s,t,z\}\}$ with $U_n=[0,n]_{\integer}$.
It then follows that $x=(1^b,1^{a_1},1^{a_2},\ldots,1^{a_n})\in \mathrm{UK}$ iff $\Phi$ is true on $\SSS_x$ and $\DD_x$.
}
\end{yexample}


A subclass of $\para\snl$, which is called $\para\snl_{\omega}$, was also  introduced in \cite{Yam17a}. This subclass has a direct connection to the linear space hypothesis (LSH). Here, we introduce its ``non-parameterized'' version as follows.

\begin{definition}\label{def:SNL-omega}
The complexity class $\mathrm{SNL}_{\omega}$ is composed of all decision problems $A$ in $\mathrm{SNL}$ that enjoys the following extra requirements. Let $\Phi$ denote any SNL-sentence of the form given in Definition \ref{SNL-sentence} with $t$ quantifier-free subformulas $\psi_j(P_1,\ldots,P_l,\boldvec{i}, \boldvec{y}, S_1,\ldots,S_d,c_1, \ldots,c_{d'})$ together with (hidden) sentence $Func(P_i)$ for all $i\in[l]$, where $\boldvec{i}= (i_1, \ldots,i_r)$ and $\boldvec{y}=(y_1, \ldots,y_s)$.
Assume that $\Phi$ syntactically expresses $A$ by a certain relational structure $\SSS_x$ and a certain domain structure $\DD_x$ associated with each instance $x$ given to $A$.
Here, each $\psi_j$ must satisfy the second-order variable requirements.
We further demand that the sentence $(\bigwedge_{h=1}^{l} Func(P_h))$ must be  ``expressed'' inside $\Phi$ with no use of existential quantifiers ``$\exists$'' in the following sense: $\exists P_1\cdots \exists P_l \forall \boldvec{i}\forall \boldvec{y} [(\bigwedge_{j=1}^{t}\psi_j) \wedge (\bigwedge_{h=1}^{l} Func(P_h))]$ is true iff $\exists P_1\cdots \exists P_l \forall \boldvec{i}
\forall \boldvec{y} [\bigwedge_{j=1}^{l}\psi_j]$ is true, where ``$\forall \boldvec{i}$'' and ``$\forall\boldvec{y}$''  are respectively shorthands for $\forall i_1\forall i_2\cdots \forall i_r$ and $\forall y_1\forall y_2\cdots \forall y_s$.
\end{definition}

As in \cite{Yam17a}, let us consider $\mathrm{exact}3\dstcon$, which is the directed $s$-$t$ connectivity problem restricted to directed graphs of degree exactly $3$. In the parameterized setting, it was shown in \cite{Yam17a}, the parameterized decision problem $(\mathrm{exact}3\dstcon,m_{ver})$ belongs to $\para\snl_{\omega}$. In essence, a similar argument leads to the claim that $\mathrm{exact}2\dstcon$ is in $\snl_{\omega}$.

\section{Structural Properties of SNL}\label{sec:structure}

Through Section \ref{sec:notions}, we have reviewed from \cite{Yam17a} the logical notion of \emph{$\snl$ sentences} and the associated complexity class \emph{SNL}. In succession to the previous section, we intend to study the structural properties of $\snl$ in depth. In particular, we are focused on the closure properties of SNL under Boolean operations.

\subsection{Basic Closure Properties and L-m-Reductions}\label{sec:basic-closure}

It is known that $\nl$ is closed under union, intersection, and complementation. Similarly, $\snl$ enjoys the closure properties under union and intersection.

\begin{proposition}\label{intersection-union}
$\snl$ is closed under union and intersection.
\end{proposition}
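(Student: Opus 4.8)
The plan is to show closure under intersection and union separately by combining two $\snl$ sentences into a single one, being careful that the resulting sentence still satisfies the syntactic constraints of Definition~\ref{SNL-sentence} (the prenex shape $\exists^f\cdots\exists^f\,\forall\cdots\forall$ and the two second-order variable requirements). Suppose $A$ is syntactically expressed by $\Phi^A \equiv \exists^f P_1\cdots\exists^f P_{l}\,\forall\boldvec{i}\,\forall\boldvec{y}\,[\bigwedge_{j=1}^{t}\psi_j]$ over vocabulary $\VV^A$, and $B$ by $\Phi^B \equiv \exists^f Q_1\cdots\exists^f Q_{m}\,\forall\boldvec{i'}\,\forall\boldvec{y'}\,[\bigwedge_{j=1}^{t'}\chi_j]$ over $\VV^B$. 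By renaming, I may assume the second-order variables, the number variables, and the object variables of the two sentences are pairwise disjoint, and I take $\VV = \VV^A \cup \VV^B$ (with the shared logical symbols $0,n,suc,pred$ identified). For a given instance $x$ I will build a combined relational structure $\SSS_x$ by taking the union of the tuples from $\SSS^A_x$ and $\SSS^B_x$ on a common universe $U_x$ obtained by merging $U^A_x$ and $U^B_x$ (padding with dummy elements so that both component structures still describe $x$), and similarly a combined domain structure $\DD_x = \DD^A_x \cup \DD^B_x$.

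For \emph{intersection}, set
\[
\Phi^{A\cap B} \;\equiv\; \exists^f P_1\cdots\exists^f P_l\,\exists^f Q_1\cdots\exists^f Q_m\;\forall\boldvec{i}\,\forall\boldvec{y}\,\forall\boldvec{i'}\,\forall\boldvec{y'}\;\Bigl[\,\bigwedge_{j=1}^{t}\psi_j \;\wedge\; \bigwedge_{j=1}^{t'}\chi_j\,\Bigr].
\]
This is immediately in the required prenex form, and since no $\psi_j$ and $\chi_{j'}$ share first-order variables (by the disjointness arrangement), the ``no two conjuncts share a first-order variable'' condition is preserved; each conjunct individually already satisfies the two second-order variable requirements, so $\Phi^{A\cap B}$ is a legitimate $\snl$ sentence. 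One then checks $(\SSS_x,\DD_x)\models\Phi^{A\cap B}$ iff $(\SSS^A_x,\DD^A_x)\models\Phi^A$ and $(\SSS^B_x,\DD^B_x)\models\Phi^B$, i.e.\ iff $x\in A\cap B$; the only point needing care is that the witnessing functions for $P$'s and $Q$'s can be chosen independently because they range over disjoint pieces of the domain structure.

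For \emph{union} the obstacle — and the part I expect to be the real content of the argument — is that one cannot simply disjoin $\bigwedge_j\psi_j$ with $\bigwedge_{j'}\chi_{j'}$, since that breaks both the ``conjunction of $\psi_j$'s'' format and the second-order variable requirements (a disjunct would then contain second-order variables from both families in an uncontrolled way). The standard fix is to guess a selector. I will add one more functional variable $T$ of arity $1$ (after the first-order restriction, a function $[e]\to\{0,1\}$, constant in value), intuitively a single bit deciding whether we are verifying membership in $A$ or in $B$, and replace each conjunct $\psi_j$ by $(\,\neg T(0,0) \to \psi_j\,)$ — equivalently $T(0,0)\vee\psi_j$ in disjunctive form — and each $\chi_{j'}$ by $T(0,0)\vee\chi_{j'}$; wait, I must be careful to phrase it so the extra literal $T(0,0)$ (or $\neg T(0,0)$) is attached as a disjunct that contains only $T$, so that requirement~(ii) is still met, since adding a single atomic second-order literal $T(\cdot)$ to an existing disjunct keeps it of the allowed form $(\bigwedge P_k(\cdots))\wedge(\bigwedge\neg P_{k'}(\cdots))\wedge R$ (fold $T$ into the $P$-block or the $R$-block appropriately). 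The resulting sentence
\[
\Phi^{A\cup B} \;\equiv\; \exists^f T\,\exists^f P_1\cdots\exists^f P_l\,\exists^f Q_1\cdots\exists^f Q_m\;\forall(\text{all})\;\Bigl[\,\bigwedge_{j=1}^{t}(T(0,0)\vee\psi_j)\;\wedge\;\bigwedge_{j=1}^{t'}(\neg T(0,0)\vee\chi_j)\,\Bigr]
\]
is an $\snl$ sentence; if $T(0,0)$ is true it forces all $\psi_j$-conjuncts to hold (so we are checking $x\in A$), and if false it forces the $\chi_j$-conjuncts (checking $x\in B$), while the ``off'' family of second-order variables is witnessed arbitrarily. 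Hence $(\SSS_x,\DD_x)\models\Phi^{A\cup B}$ iff $x\in A$ or $x\in B$. The main things to verify in detail are exactly the two second-order variable requirements for the modified conjuncts and the bookkeeping that $\SSS_x$ still describes $x$ and that $|U_x|=O(|x|)$ after merging the two universes — both routine once the selector-bit trick is in place.
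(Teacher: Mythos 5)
Your intersection argument is essentially the paper's: conjoin the two sentences, pull the quantifiers out into prenex form after renaming variables apart, and observe that each conjunct still satisfies the second-order variable requirements. Nothing to add there.

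For union, your diagnosis of the obstacle is right and your selector idea is the same in spirit as the paper's, but your implementation of the selector has a genuine problem. You realize the selector as a second-order literal $T(0,0)$ and form $T(0,0)\vee\psi_j$ and $\neg T(0,0)\vee\chi_{j'}$. Requirement (ii) allows \emph{at most two} disjuncts containing second-order variables in each conjunct, and a conjunct $\psi_j$ of a general $\snl$ sentence may already use both of them (e.g.\ the clause $\neg E(i',j')\vee\neg C(i',d')\vee\neg C(j',e')\vee d'\neq e'$ from the $\mathrm{2COLOR}$ example). Adding the disjunct $T(0,0)$ then yields \emph{three} disjuncts containing second-order variables, which is exactly the syntactic constraint the proposition is about preserving. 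Your parenthetical fix --- ``fold $T$ into the $P$-block or the $R$-block'' --- does not work: $T(0,0)$ sits as a separate \emph{disjunct}, so it cannot be absorbed as a conjunct of an existing disjunct without changing the semantics, and it cannot go into $R$ because $R$ must be free of second-order variables. The paper sidesteps this by making the selector \emph{first-order}: it introduces a variable $k$ ranging over $\{1,2\}$ and uses the guarded conjuncts $k=1\to\bigwedge_j\psi_j$ and $k=2\to\bigwedge_{j'}\xi_{j'}$ inside $\forall k[1\leq k\leq 2\to\cdots]$; the guards contribute only the second-order-free disjuncts $\neg(1\leq k\leq 2)$ and $k\neq 1$ (resp.\ $k\neq 2$), so the count of second-order-containing disjuncts per conjunct is unchanged. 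If you replace your second-order bit $T(0,0)$ by such a first-order guard, your argument goes through. (Separately, a minor slip: in your own construction $T(0,0)$ true forces the $\chi_{j'}$'s, i.e.\ the $B$-branch, not the $A$-branch as your prose states.)
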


\begin{yproof}
Let $A$ and $B$ denote two arbitrary decision problems in $\snl$. Take $\snl$-sentences $\Phi_A$ and $\Phi_B$ that syntactically express $A$ and $B$, respectively. Since $\Phi_A$ and $\Phi_B$ are $\snl$ sentences, we assume that $\Phi_A\equiv \exists^f\boldvec{P}  \forall \boldvec{i}\forall \boldvec{y} [\bigwedge_{j=1}^{t} \psi_j]$ and $\Phi_B\equiv \exists^f\boldvec{R} \forall \boldvec{i'}\forall \boldvec{y'} [\bigwedge_{j'=1}^{s} \xi_{j'}]$, where $\boldvec{P}=(P_1,P_2,\ldots, P_l)$, $\boldvec{R}=(R_1,R_2,\ldots,R_{l'})$, and $\psi_j$'s and $\xi_{j'}$'s are all quantifier-free formulas and that $\Phi_{A}$ and $\Phi_{B}$ satisfy the second-order variable requirements.
Assume further that each $\psi_k$ has the form $\bigvee_{k_j}\hat{\psi}_{k_j}$ and each $\xi_{j'}$ has the form $\bigvee_{l_{j'}}\hat{\xi}_{l_{j'}}$.
For simplicity, all elements in $(\boldvec{P}, \boldvec{i},\boldvec{y})$ and $(\boldvec{R}, \boldvec{i'},\boldvec{y'})$ do not share any common variables.

For the target intersection $C=A\cap B$, we define $\Phi \equiv \Phi_A\wedge \Phi_B$, which is logically equivalent to $\exists^f\boldvec{P} \exists^f \boldvec{R} \forall \boldvec{i} \forall  \boldvec{i'} \forall \boldvec{y} \forall  \boldvec{y'} [(\bigwedge_{j} \psi_j)\wedge (\bigwedge_{j'} \xi_{j'})]$.
Since all $\psi_j$'s and $\xi_{j'}$'s satisfy the second-order variable requirements, so does the formula $\phi\equiv (\bigwedge_{j} \psi_j)\wedge (\bigwedge_{j'} \xi_{j'})$. Hence, $\Phi$ is also an $\snl$ sentence. By definition, $\Phi$ syntactically expresses $C$.

For the case of union, $C'=A\cup B$, we cannot simply define a sentence $\Phi'$ as $\Phi'\equiv \Phi_A\vee \Phi_B$ using $\Phi_A$ and $\Phi_B$. Instead, we need to define $\Phi'$ as follows. Let us introduce a new variable $k$, which is assumed to take either $1$ or $2$. This $k$ is intended to indicate which of $\Phi_A$ and $\Phi_B$ is true.
Let us first define  $\Xi_1\equiv k=1\to \bigwedge_{j}\psi_j$ and $\Xi_2\equiv k=2\to \bigwedge_{j'}\xi_{j'}$. We then define $\Phi' \equiv  \exists^f\boldvec{P} \exists^f\boldvec{R} \forall\boldvec{i} \forall\boldvec{y}\forall\boldvec{i'}\forall\boldvec{y'}\forall k[1\leq k\leq 2 \to \Xi_1\wedge \Xi_2]$. Notice that $\Xi_1$ and $\Xi_2$ are rephrased as $\Xi_1\equiv \bigwedge_{j}(\neg(1\leq k\leq 2)\vee k\neq1\vee \psi_j)$ and $\Xi_2\equiv \bigwedge_{j'}(\neg(1\leq k\leq 2)\vee k\neq2\vee \xi_{j'})$.
Clearly, $\Xi$ is logically equivalent to the conjunction of $\Xi_1$ and $\Xi_2$.
It is not difficult to check that the formula $1\leq k\leq 2 \to \Xi_1\wedge \Xi_2$ satisfies the second-order variable requirements. It thus follows that $\Phi'$ syntactically expresses $C'$.
\end{yproof}


Given a decision problem $A$, the notation $\Lreduces\!\!(A)$ expresses the collection of all decision problems that are $\dl$-m-reducible to $A$. Furthermore, for a given complexity class $\CC$, $\Lreduces\!\!(\CC)$ denotes the union $\bigcup_{A\in \CC} \Lreduces\!\!(A)$.
Since $\nl$ is closed under $\dl$-m-reductions, $\Lreduces\!\!(\nl)=\nl$ follows.
Concerning SNL, we obtain the following.

\begin{proposition}\label{SNL-closure-NL}
$\nl = \;
\Lreduces\!(\snl_{\omega}) =\;
\Lreduces\!\!(\snl)$.
\end{proposition}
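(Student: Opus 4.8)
The plan is to prove the two equalities by closing the chain
$\nl \subseteq \Lreduces(\snl_{\omega}) \subseteq \Lreduces(\snl) \subseteq \nl$. The middle inclusion is immediate: by Definition \ref{def:SNL-omega} we have $\snl_{\omega}\subseteq\snl$, and $\Lreduces(\cdot)$ applied to a class is a union over its members, so $\Lreduces(\snl_{\omega})=\bigcup_{A\in\snl_{\omega}}\Lreduces(A)\subseteq\bigcup_{B\in\snl}\Lreduces(B)=\Lreduces(\snl)$. The two substantive halves are therefore the outermost inclusions, and each is obtained by transcribing to the present ($m_{\|}(x)=|x|$) setting the corresponding fact about $\para\snl$, $\para\snl_{\omega}$, $\para\nl$ from \cite{Yam17a}.

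For $\Lreduces(\snl)\subseteq\nl$, I would first recall $\snl\subseteq\nl$, which is the $m_{\|}$-specialization of $\para\snl\subseteq\para\nl$ in \cite{Yam17a}: given an $\snl$ sentence $\Phi\equiv\exists^fP_1\cdots\exists^fP_l\,\forall\boldvec{i}\,\forall\boldvec{y}[\bigwedge_j\psi_j]$ syntactically expressing $A$ via $(\SSS_x,\DD_x)$, an $\nl$ machine nondeterministically builds the function tables of $P_1,\ldots,P_l$; each $P_k$ ranges over $[e_k]\times U'_{k_1}\times\cdots\times U'_{k_{k'}}$, of size $|x|^{O(1)}$, and by the second-order variable requirements each $\psi_j$ mentions only the at most $a+1$ consecutive layers $P_k(i,\cdot),P_k(suc(i),\cdot),\ldots,P_k(suc^a(i),\cdot)$ and, in disjunctive form, only through positive-then-negated conjunctions of these atoms, so the machine need carry only $O(1)$ layers of each $P_k$ at a time while sweeping the first-order variables, using $O(\log|x|)$ space; hence $A\in\nl$. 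Since $\nl$ is closed under $\dl$-m-reductions — an $\fl$-computable many-one reduction composed with an $\nl$-computation is again an $\nl$-computation, each queried bit of $f(x)$ being recomputed on demand in logspace — we get $\Lreduces(\snl)\subseteq\Lreduces(\nl)=\nl$.

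For $\nl\subseteq\Lreduces(\snl_{\omega})$, it suffices to name one problem that is simultaneously in $\snl_{\omega}$ and $\nl$-complete under $\dl$-m-reductions, since then every $L\in\nl$ reduces to it. The problem $\mathrm{exact}2\dstcon$ qualifies: it lies in $\snl_{\omega}$ by the remark following Definition \ref{def:SNL-omega}, and it is $\nl$-hard under $\dl$-m-reductions because the classical $\nl$-complete $\dstcon$ admits a logspace many-one reduction to it — first replace each vertex of large in/out-degree by a small tree of dummy vertices to bound all degrees by a constant, then attach logspace-constructible dummy gadgets to bring every degree up to exactly the prescribed value, all while preserving $s$–$t$ reachability (the gadgets are arranged so as neither to lie on nor to create any $s$–$t$ path). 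Composing with the reduction of an arbitrary $L\in\nl$ to $\dstcon$ gives $L\Lreduces\mathrm{exact}2\dstcon$, whence $L\in\Lreduces(\snl_{\omega})$; chaining the three inclusions yields both equalities. (As a byproduct, $\nl=\Lreduces(\snl)$ alone also follows more cheaply from $\dstcon\in\snl$, without the ``exact'' variant.)

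The only step needing genuine, if routine, work is the degree-padding inside $\dstcon\Lreduces\mathrm{exact}2\dstcon$: one must exhibit, uniformly and in logspace, dummy gadgets realizing the exact degree constraint of $\mathrm{exact}2\dstcon$ that can be hung off vertices of smaller degree without altering $s$–$t$ connectivity, in particular without creating shortcut paths among the original vertices. Everything else is definitional ($\snl_{\omega}\subseteq\snl$), classical (the $\nl$-completeness of $\dstcon$ and the closure of $\nl$ under $\dl$-m-reductions), or a verbatim specialization of \cite{Yam17a} ($\snl\subseteq\nl$ and $\mathrm{exact}2\dstcon\in\snl_{\omega}$, whose padding gadget is already constructed there for $(\mathrm{exact}3\dstcon,m_{ver})\in\para\snl_{\omega}$).
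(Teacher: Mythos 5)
Your overall decomposition ($\nl \subseteq \Lreduces\!(\snl_{\omega}) \subseteq \Lreduces\!(\snl) \subseteq \nl$) is exactly the paper's, and the upper half --- $\snl\subseteq\nl$ plus closure of $\nl$ under $\dl$-m-reductions --- matches the paper's argument. The gap is in the lower bound: you anchor $\nl\subseteq\Lreduces\!(\snl_{\omega})$ on the claim that $\dstcon\Lreduces \mathrm{exact}2\dstcon$ via degree-padding gadgets, and that reduction cannot exist unless $\dl=\nl$. In this paper ``degree'' means indegree plus outdegree (see Example \ref{eaxct3DSTCON-monotone}, where $s$ has three out-neighbours and indegree $0$ while intermediate vertices have two out-neighbours and indegree at least $1$), so a directed graph in which every vertex has degree exactly $2$ has a $2$-regular underlying undirected graph, i.e., a disjoint union of simple cycles; $s$-$t$ reachability there is decidable in deterministic log space by walking the unique cycle through $s$ in both directions. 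Your own gadget exposes the obstruction: the binary trees used to cap the degrees produce internal vertices of degree $3$, and no further padding brings a branching vertex down to degree exactly $2$. The remark after Definition \ref{def:SNL-omega} gives you only the membership $\mathrm{exact}2\dstcon\in\snl_{\omega}$, not $\nl$-hardness.

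The repair is precisely what the paper does: replace $\mathrm{exact}2\dstcon$ by $\mathrm{exact}3\dstcon$, which \cite{Yam17a} shows to be both in $\para\snl_{\omega}$ (hence in $\snl_{\omega}$ under the $m_{\|}$ specialization) and complete for $\nl$ under $\dl$-m-reductions; with that substitution the rest of your argument goes through. Your parenthetical observation that $\dstcon\in\snl$ already yields $\nl=\Lreduces\!(\snl)$ is correct, but it does not cover the $\snl_{\omega}$ half of the statement.
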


\begin{yproof}
By definition, $\snl_{\omega}\subseteq \snl$ follows. We thus obtain $\Lreduces\!\!(\snl_{\omega}) \subseteq\; \Lreduces\!\!(\snl)$.

In the ``non-parameterized'' setting of this work, it is possible to rephrase this inclusion
as $\snl\subseteq \nl$ by restricting corresponding size parameters to $m_{\|}$.
Therefore, we obtain 
$\Lreduces\!\!(\snl)\subseteq \; \Lreduces\!\!(\nl) =\nl$.

In the parameterized setting, it is shown in \cite{Yam17a} that $(\mathrm{exact3}\dstcon,m_{ver})$  is complete for $\para\snl_{\omega}$ under so-called \emph{short SLRF-T-reductions} and it is also \emph{short L-m-reducible} to $(3\dstcon,m_{ver})$.
It is also shown in \cite{Yam17a} that the decision problem $\mathrm{exact3}\dstcon$ is complete for $\nl$ under $\dl$-m-reductions. These results together imply that $\nl \subseteq \; \Lreduces\!\!(\snl_{\omega})$.
\end{yproof}


Actually, we can assert a stronger statement than Proposition \ref{SNL-closure-NL}. Here, we intend to claim that $\snl$ occupies a ``structurally'' important portion  of $\nl$ in the sense described in the following theorem.

\begin{theorem}\label{equivalent-MonoSNL}
For any decision problem in $\nl$, there always exists its $\dl$-m-equivalent problem in $\snl$.
\end{theorem}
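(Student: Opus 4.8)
The plan is to take an arbitrary decision problem $A \in \nl$ and produce an $\dl$-m-equivalent problem $A' \in \snl$. The natural strategy is to ``pad'' or ``interleave'' $A$ with an $\snl$ problem we already have in hand, namely $\dstcon$, which was observed earlier in the excerpt to belong to $\snl$. Since $A \in \nl$ and $\dstcon$ is $\nl$-complete under $\dl$-m-reductions, there is a reduction $f \in \fl$ with $x \in A \iff f(x) \in \dstcon$. I would define $A'$ to be essentially $\dstcon$ itself, or a mild variant of it, and then verify the two directions of $\dl$-m-equivalence: $A \Lreduces A'$ is immediate from $f$, and $A' \Lreduces A$ requires that $A'$ be no harder than $A$.

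The subtlety is that $\dstcon$ is $\nl$-complete, so $\dstcon \Lreduces A$ holds only if $A$ is itself $\nl$-hard, which a generic $A \in \nl$ need not be. Hence one cannot simply take $A' = \dstcon$. The fix is the standard ``join'' construction: encode into a single problem both the instance $x$ of $A$ and an $\snl$-expressible certificate structure, in such a way that membership in $A'$ is decided by running the $\snl$ sentence for $\dstcon$ on $f(x)$, while simultaneously the map $x \mapsto \langle x, f(x)\rangle$ and its inverse projection $\langle x, w\rangle \mapsto x$ are both in $\fl$. Concretely, I would let $A' = \{\, x \# f(x) : x \in A \,\} \cup \{\text{strings not of the correct syntactic shape}\}$ --- no, more carefully: $A' = \{\, x \# y : y = f(x) \text{ and } x \in A\,\}$. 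Then $x \mapsto x \# f(x)$ reduces $A$ to $A'$ (it is in $\fl$ since $f \in \fl$), and $x\#y \mapsto$ (check in log space whether $y = f(x)$; if not, output a fixed non-instance of $A$; if so, output $x$) reduces $A'$ to $A$. Verifying $y = f(x)$ in log space is fine because $f$ is log-space computable and we can recompute it bit-by-bit and compare. So $A' \Lequiv A$.

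It remains to show $A' \in \snl$. Here I would start from the $\snl$ sentence $\Phi_{\dstcon}$ witnessing $\dstcon \in \snl$ and adapt it. Given an instance $w = x\#y$, the relational structure $\SSS_w$ will describe $y$ exactly as the $\dstcon$ structure describes a graph, plus it will describe $x$ via additional predicate symbols, together with enough predicate symbols to encode (the graph of) the function $f$ restricted to the relevant range of inputs --- but the cleanest route is to have the relational structure $\SSS_w$ simply assert, via a predicate, the consistency relation ``$y = f(x)$'' as part of the input description, so that the $\snl$ sentence only needs to conjoin $\Phi_{\dstcon}$ (run on the part of the structure describing $y$) with a quantifier-free check that this consistency predicate holds. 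Because $\snl$ is closed under intersection (Proposition~\ref{intersection-union}) and the consistency check is first-order quantifier-free (hence trivially an $\snl$ sentence with no second-order variables), the conjunction is again an $\snl$ sentence, and it syntactically expresses $A'$.

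The main obstacle I anticipate is the last step: making the ``$y = f(x)$'' check genuinely expressible within the rigid syntactic form of Definition~\ref{SNL-sentence} (second-order functional existentials, then first-order universals, with the second-order variable requirements). The safe way is to push the burden entirely into the relational structure: define $\SSS_w$ so that it describes $w$ only when $w$ has the form $x\#f(x)$, and otherwise let $\SSS_w$ describe a trivially-false instance; since the reduction defining ``syntactically expresses'' lets us choose $\SSS_w$ by any means (it need not itself be produced in log space --- only the underlying reduction $A \Lreduces A'$ needs to be), this sidesteps the need to express the consistency check logically at all, and $A' \in \snl$ follows directly from $\dstcon \in \snl$ applied to the $y$-part of the structure. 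I would double-check that the definition of ``syntactically expresses'' in the excerpt indeed permits this freedom in choosing $\SSS_x$; it appears to, which makes the argument go through cleanly.
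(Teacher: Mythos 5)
Your construction of $A'=\{x\#f(x) : x\in A\}$ and the verification that $A'\Lequiv A$ are fine, but the step $A'\in\snl$ has a genuine gap, and it is exactly at the point you flag as "the main obstacle." Your escape route is to push the consistency check $y=f(x)$ into the choice of the relational structure: when $w$ is not of the form $x\#f(x)$, you let $\SSS_w$ "describe a trivially-false instance." This appeals to a literal reading of the definition of \emph{syntactically expresses} under which $\SSS_x$ may be chosen by any means whatsoever. But that reading proves far too much: for an arbitrary language $A$ (not even in $\nl$) one could add a $0$-ary predicate $S$ to the vocabulary, interpret it as true in $\SSS_x$ exactly when $x\in A$, and take $\Phi\equiv S$ (padded with dummy quantifiers), placing every language in $\snl$ and trivializing both the class and the theorem. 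The intended reading, consistent with every example in the paper (the structures for $\mathrm{2COLOR}$, $\mathrm{UK}$, and $\dstcon$ are all transparent, locally computable encodings of the instance), is that $\SSS_w$ faithfully and canonically describes $w$; a predicate whose truth value is the outcome of a full log-space computation $f$ on the whole input is not such a description. Once $\SSS_w$ is required to be canonical, the check "$y=f(x)$" must be carried by the sentence itself, and expressing a log-space computation inside the rigid $\exists^f\cdots\forall\cdots$ syntax with the second-order variable requirements is precisely the original difficulty --- your reduction to $\dstcon$ has not removed it, only relocated it.

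The paper's proof attacks that difficulty head-on and needs no detour through $\dstcon$: it takes a $4$-counter two-way nondeterministic counter automaton $M$ solving $L$ in polynomial time, introduces a second-order functional variable $P$ with $P(i,q,l,\boldvec{w})$ meaning "the configuration at time $i$ is $(q,l,\boldvec{w})$," and uses the universally quantified first-order part, together with locally computable predicates ($Top$, $Chan$, $Delt$) describing the input symbols and the transition function, to assert that consecutive configurations are consistent and that the final configuration is accepting. The resulting problem $H\!ALT_M$ is then the $\dl$-m-equivalent $\snl$ problem. If you want to salvage your padding approach, you would have to express "$y=f(x)$" by the same configuration-sequence technique applied to the deterministic log-space transducer computing $f$ --- at which point your argument collapses into the paper's.
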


\begin{yproof}
It is known that all decision problems in $\nl$ are solvable  by appropriate \emph{4-counter two-way nondeterministic  counter automata}\footnote{This computation model is also known as \emph{counter machines}, where a \emph{counter} refers to a stack with a single stack symbol except for the bottom marker $\bot$.} (2ncta's) in polynomial time. See, e.g., \cite[Proposition 2.3]{Yam23} for the proof of this fact.

Let $L$ denote an arbitrary decision problem in $\nl$ and take a 4-counter 2ncta $M$ of the form $(Q,\Sigma,\{1\},\{\triangleright,\triangleleft\}, \delta,q_0, \bot, Q_{acc},Q_{rej})$ that solves $L$ in polynomial time. Note that $\delta$ maps $(Q-Q_{halt})\times \check{\Sigma}_{\lambda} \times \{1,\bot\}^4$ to $Q\times D \times (\{1\}^*\cup\{\varepsilon\})^4$, where $\check{\Sigma}_{\lambda} = \Sigma\cup\{\lambda,\triangleright,\triangleleft\}$, $D=\{-1,+1\}$ (tape head directions) and $Q_{ha,t} = Q_{acc}\cup Q_{rej}$.
To ease the description of the following construction, $M$ is assumed to halt exactly in $n^k$ steps (for an appropriate constant $k\in\nat^{+}$) with the empty counters (except for $\bot$). Moreover, we assume that $Q_{acc}=\{q_{acc}\}$ and that $M$ takes exactly two nondeterministic choices at any step (i.e., $|\delta(q,l,\boldvec{a})|=2$ for any $(q,l,\boldvec{a})$).

Let us consider the decision problem $H\!ALT_{M}$, in which, for any given instance $x$, we must determine whether there exists an accepting computation path of $M$ on $x$. In what follows, we wish to show that $H\!ALT_{M}$ belongs to $\snl$.

Hereafter, we fix an instance $x$ arbitrarily and intend to express an accepting computation path of $M$ on $x$.
A \emph{configuration} of $M$ on $x$ is of the form $(q,l,\boldvec{w})$ with $q\in Q$, $l\in[0,|x|+1]_{\integer}$, and $\boldvec{w}=(w_1,w_2,w_3,w_4)\in(\{1\}^*\bot)^4$.
This means that $M$ is in inner state $q$, scanning the $l$th tape cell with the $i$th counter holding $w_i$ for any $i\in[4]$.
For two configurations $(q,l,\boldvec{w})$ and $(p,m,\boldvec{v})$,
we write $(q,l,\boldvec{w})\vdash (p,m,\boldvec{v})$ if $M$ changes  $(q,l,\boldvec{w})$ to $(p,m,\boldvec{v})$ in a single step.
To describe a transition, we prepare three predicate symbols, $Top$, $Chan$, and $Delt$, whose intended meanings are given as follows.
(i) $Top(l,\boldvec{w},c,\boldvec{a})$ is true iff $c = x_{(l)}$ and $\boldvec{a}$ is top symbols of the counters,
(ii) $Chan(\boldvec{w},\boldvec{b},\boldvec{v})$ is true iff $\boldvec{w}$ is changed to $\boldvec{v}$ by modifying the top symbols of $\boldvec{w}$ to $\boldvec{b}$ by applying $\delta$, and
(iii) $Delt(q,c,\boldvec{a},p,d,\boldvec{b})$ is true iff $(p,d,\boldvec{b})\in \delta(q,c,\boldvec{a})$.

We also prepare a second-order variable $P$ so that $P(i,q,l,\boldvec{w})$ is true iff  $(q,l,\boldvec{w})$ is a configuration at time $i$.
We then define $\Phi$ to be $(\exists^fP) (\forall u,u',z,p,q,l,c,d,\boldvec{w},\boldvec{a},\boldvec{b})[\Phi_1\wedge \Phi_2]$, where
$\Phi_1 \equiv 0\leq i<n^k \wedge P(i,q,l,\boldvec{w})\wedge P(i+1,p,l+d,\boldvec{v}) \wedge Top(l,\boldvec{w},c,\boldvec{a})
\to \bigvee_{(p,d,\boldvec{b})\in \delta(q,l,\boldvec{a})} (Delt(q,c,\boldvec{a},p,d,\boldvec{b})  \wedge Chan(\boldvec{w},\boldvec{b},\boldvec{v}))$  and
$\Phi_2\equiv P(n^k,q,l,\boldvec{w})\to (q,l,\boldvec{w}) = (q_{acc},n,\bot,\ldots,\bot)$.
This formula $\Phi$ is clearly  an $\snl$ sentence.
It then follows by definition that $\Phi$ is true iff $M$ has an accepting computation path on $x$ iff $x\in HALT_A$. Therefore, $\Phi$ syntactically expresses $HALT_A$.
\end{yproof}

Another consequence of Theorem \ref{equivalent-MonoSNL} is the following statement concerning the so-called \emph{dichotomy theorem}, which asserts that every problem in $\snl$ is either in $\dl$ or $\nl$-complete.

\begin{corollary}\label{dichotomy-L-NL}
If $\dl\neq\nl$, then the dichotomy theorem does not hold for $\mathrm{SNL}$.
\end{corollary}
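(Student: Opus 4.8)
The plan is to derive Corollary \ref{dichotomy-L-NL} as a direct consequence of Theorem \ref{equivalent-MonoSNL}. The key observation is that an L-m-equivalence preserves membership in $\dl$ and preserves $\nl$-completeness (under $\dl$-m-reductions), so whatever dichotomy behavior holds for a problem is inherited by any $\dl$-m-equivalent problem. I would argue by contraposition: suppose the dichotomy theorem \emph{does} hold for $\snl$, meaning every problem in $\snl$ is either in $\dl$ or $\nl$-complete; I will show this forces $\dl=\nl$.

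First I would fix an arbitrary decision problem $L\in\nl$. By Theorem \ref{equivalent-MonoSNL} there is a problem $L'\in\snl$ with $L\Lequiv L'$. Applying the assumed dichotomy to $L'$, either $L'\in\dl$ or $L'$ is $\nl$-complete under $\dl$-m-reductions. In the first case, since $L\Lreduces L'$ and $\dl$ is closed under $\dl$-m-reductions, we get $L\in\dl$. In the second case, $L'$ being $\nl$-complete means in particular that every problem in $\nl$ — so, applying this to $L'$ itself viewed as a member of $\nl$, and also using $L\Lreduces L'\Lreduces (\text{any fixed }\nl\text{-complete problem})$ — one concludes that $L'$, hence $L$, reduces to and from a canonical $\nl$-complete problem. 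To make the two cases collapse cleanly, I would instead phrase it as: the dichotomy applied to a specific $\snl$ problem that is $\dl$-m-equivalent to a fixed $\nl$-complete problem (such problems exist by Theorem \ref{equivalent-MonoSNL} applied to $\dstcon$ or $\mathrm{exact}2\dstcon$). That equivalent problem $L_0'\in\snl$ cannot be $\nl$-complete-or-in-$\dl$ in a way that avoids $\dl=\nl$: if $L_0'\in\dl$, then since $L_0'$ is $\dl$-m-equivalent to an $\nl$-complete problem, every $\nl$ problem reduces to $L_0'\in\dl$, whence $\nl\subseteq\dl$, i.e. $\dl=\nl$; if instead $L_0'$ is $\nl$-complete, that gives no contradiction and no collapse, so this branch alone does not finish the argument.

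Because of that, the cleaner route — and the one I would actually carry out — is: assume $\dl\neq\nl$ and exhibit a problem in $\snl$ that is \emph{neither} in $\dl$ nor $\nl$-complete, directly contradicting the dichotomy. Take any $L\in\nl$ that is neither in $\dl$ nor $\nl$-complete; such an $L$ exists by Ladner-type arguments (a Ladner/Sch\"oning-style diagonalization relativizing to $\nl$) precisely \emph{because} $\dl\neq\nl$, since if $\dl\neq\nl$ there are intermediate problems strictly between $\dl$ and the $\nl$-complete degree. By Theorem \ref{equivalent-MonoSNL}, there is $L'\in\snl$ with $L'\Lequiv L$. Then $L'\in\dl$ would force $L\in\dl$ (as $L\Lreduces L'$), contradiction; and $L'$ being $\nl$-complete would force $L$ to be $\nl$-complete (as $L'\Lreduces L$ and $\nl$-completeness transfers across $\dl$-m-reductions in the easy direction), again a contradiction. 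Hence $L'$ witnesses the failure of the dichotomy for $\snl$.

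The main obstacle is justifying the existence of an $\nl$-intermediate problem under the sole hypothesis $\dl\neq\nl$: this is the log-space analogue of Ladner's theorem, and while it is folklore (the diagonalization argument relativizes and uses only that $\dl$ is closed under $\dl$-m-reductions, that $\nl$ has complete problems, and that $\dl$-m-reductions compose and are enumerable), I would either cite it or sketch the standard blow-up/diagonalization construction. Everything else — closure of $\dl$ under $\dl$-m-reductions, transfer of $\nl$-completeness along the easy direction of an $\dl$-m-equivalence, and the application of Theorem \ref{equivalent-MonoSNL} — is routine. I expect the write-up to be short: one paragraph invoking the log-space Ladner theorem and Theorem \ref{equivalent-MonoSNL}, then the two-case contradiction.
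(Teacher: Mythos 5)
Your proposal is correct and follows essentially the same route as the paper: the paper also combines Theorem \ref{equivalent-MonoSNL} with a log-space Ladner-type fact (citing \cite{Yam17a} for the existence of infinitely many $\Lequiv$-classes within $\nl$ under $\dl\neq\nl$) to transfer an intermediate degree into $\snl$. The exploratory first paragraph is a false start that you correctly abandon; the argument you actually carry out matches the paper's.
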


\begin{yproof}
It was shown in \cite{Yam17a} that, under the assumption of $\dl\neq\nl$, there are an infinite number of $\Lequiv$-equivalent classes within $\nl$ in the setting of parameterized problems. This fact can be easily translated into the non-parameterized version.
By Theorem \ref{equivalent-MonoSNL}, we can conclude that $\mathrm{SNL}$ also contains an infinite number of $\Lequiv$-equivalent classes. Thus, the dichotomy theorem does not hold for $\mathrm{SNL}$.
\end{yproof}

\subsection{Complementary Problems and $\mu$SNL}\label{sec:mu-operator}

We have demonstrated in Proposition \ref{intersection-union} that $\mathrm{SNL}$ is closed under union and intersection.
Now, we wonder if, for any decision problem in $\snl$, its complementary problem also falls in $\snl$. As a simple example, we consider  $\mathrm{2COLOR}$, which is logically equivalent to checking whether a given undirected graph is bipartite.
Let us consider its ``complementary'' decision problem, known as the \emph{non-bipartite graph problem} (NBG), in which one asks to determine whether a given undirected graph is not bipartite.
We show that not only $\mathrm{2COLOR}$ but also $\mathrm{NBG}$ are expressible by appropriate $\snl$ sentences.

\begin{proposition}\label{NBG-mu}
$\mathrm{2COLOR}$ and its complementary problem $\mathrm{NBG}$ are both in $\snl$.
\end{proposition}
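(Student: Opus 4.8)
The plan is to express both $\mathrm{2COLOR}$ and $\mathrm{NBG}$ by $\snl$ sentences, reusing as much as possible the construction already given for $\mathrm{2COLOR}$ in Example~\ref{2COLOR}. For $\mathrm{2COLOR}$ nothing new is needed: the sentence $\Phi$ built there, asserting the existence of a proper $2$-coloring $C$, together with the relational structure $\SSS_x$ and domain structure $\DD_x$, already witnesses $\mathrm{2COLOR}\in\snl$. So the real content is $\mathrm{NBG}$.

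For $\mathrm{NBG}$ I would exploit the well-known combinatorial characterization: an undirected graph $G$ is \emph{not} bipartite if and only if it contains an odd-length cycle. Since $G$ has $n$ vertices, such a cycle, if it exists, has length at most $n$, so I can guess a closed walk $v_0,v_1,\ldots,v_m$ of odd length $m\le n$ with $v_0=v_m$, consecutive vertices adjacent. The trick for encoding this in an $\snl$ sentence is to use a functional second-order variable $W$ with $W(i,v)$ meaning ``the $i$th vertex on the guessed walk is $v$'', where $i$ ranges over $[0,n]_{\integer}$, and an auxiliary functional variable (or extra argument) recording the parity of the index along the walk, together with the guessed odd length $m$ encoded via a constant or a further functional value. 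The universally quantified body then checks locally: $W(0,\cdot)=W(m,\cdot)$ (the walk closes), for each $i<m$ that $(W(i,\cdot),W(i{+}1,\cdot))\in E$ (consecutive vertices are adjacent), and that the length $m$ is odd, e.g.\ by threading a parity bit $p(i)$ with $p(0)=0$, $p(i{+}1)=1-p(i)$, and demanding $p(m)=1$. Crucially, the second-order atoms occurring are only of the form $W(i,\boldvec v_1)$, $W(i{+}1,\boldvec v_2)$ (and similarly for the parity variable), matching the shifted-argument pattern $P_k(suc^a(i),\boldvec v)$ required in Definition~\ref{SNL-sentence}, and the implication $\psi_j$ rewrites as a disjunction $\neg W(i,\boldvec v_1)\vee \neg W(i{+}1,\boldvec v_2)\vee R$ with $R$ second-order-free, so the second-order variable requirements (i) and (ii) hold.

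The steps, in order: (1) restate the $\mathrm{2COLOR}$ construction from Example~\ref{2COLOR} to dispatch the first half; (2) fix the odd-cycle characterization of non-bipartiteness and the length bound $n$; (3) choose the vocabulary $\VV$ (the edge predicate $E$, constants $0,n$, and a predicate or constant for ``halting index'' $m$) and, for an instance $x$ encoding $G=(\hat V,\hat E)$ with $\hat V=[n]$, the relational structure $\SSS_x$ describing $x$; (4) write the $\snl$ sentence $\Psi\equiv(\exists^f W)(\exists^f P)(\forall i,u,v,\ldots)[\,\Psi_{\mathrm{close}}\wedge\Psi_{\mathrm{edge}}\wedge\Psi_{\mathrm{parity}}\wedge\Psi_{\mathrm{odd}}\,]$, being careful to let the guessed walk length be read off from $W$ itself (e.g.\ the walk ``stabilizes'' after index $m$, i.e.\ $W(i,v)\wedge i\ge m\to W(i{+}1,v)$, and $m$ is signalled by a functional flag), so that no first-order existential quantifier is needed; (5) specify the domain structure $\DD_x$ (the variable $i$ over $[0,n]_{\integer}$, the walk-value coordinates over $[n]$, the parity coordinate over $[2]$); (6) verify the second-order variable requirements by rewriting each conjunct as a finite disjunction with at most two second-order disjuncts of the prescribed form; and (7) check correctness: $\Psi$ is true on $(\SSS_x,\DD_x)$ iff $G$ has an odd closed walk iff $G$ has an odd cycle iff $G$ is not bipartite iff $x\in\mathrm{NBG}$.

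The main obstacle will be step~(4)--(6): arranging the ``guessed length $m$'' and the parity bookkeeping strictly within the $\snl$ format, i.e.\ with only the allowed shifted second-order atoms $P_k(suc^a(i),\cdot)$ and no ordinary first-order $\exists$, while keeping every conjunct rewritable as a disjunction containing at most two second-order disjuncts, each of the exact shape $(\bigwedge P_k(i,\boldvec v))\wedge(\bigwedge\neg P_{k'}(i',\boldvec v'))\wedge R$. The parity trick (a functional variable $P$ with $P(0)=0$ and the local update $P(i{+}1)=1-P(i)$, plus a stabilization clause past the endpoint) handles both the odd-length requirement and the need to avoid existentially quantifying the cycle length, but one must double-check that the conjuncts mixing $W$ and $P$ still decompose into disjunctions with at most two second-order disjuncts — if not, they can be split into several separate $\psi_j$'s (the definition allows $t$ of them) with disjoint first-order variables, which is always legitimate. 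Everything else is routine encoding.
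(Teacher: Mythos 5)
Your proposal takes essentially the same route as the paper: the paper also certifies $\mathrm{NBG}$ by guessing an odd closed walk with a single functional variable indexed by walk position ($P(i,u,k)$, where $k$ is a step counter that increments only when the walk actually moves), checking adjacency and closure locally and checking oddness of $k$ at the fixed final index $n$ via an $ODD$ predicate --- exactly your stabilization-plus-parity idea. The one point to watch is that the parity/counter should be realized as an extra argument of the single walk variable (your parenthetical alternative) rather than as a second functional variable, since a conjunct coordinating two functional variables at indices $i$ and $i+1$ rewrites to four second-order disjuncts, violating requirement (ii), and your fallback of splitting such a conjunct into separate $\psi_j$'s does not preserve logical equivalence.
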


\begin{yproof}
Given an undirected graph $G=(V_G,E_G)$, we prepare two predicate symbols, $E$ and $ODD$, where $E$ corresponds to $E_G$ and $ODD(k)$ indicates that $k$ is an odd number. We then define $\Phi\equiv (\exists^fP)(\forall i,k,l)(\forall u,v)[1\leq i<n\wedge (\bigwedge_{m=1}^{5}\Phi_m)]$, where $\Phi_1\equiv P(1,u,k)\to k=1$, $\Phi_2\equiv P(i,u,k)\wedge P(i+1,v,l)\to ((l=k\wedge u=v)\vee (l=k+1\wedge u\neq v))$, $\Phi_3\equiv P(i,u,k)\wedge P(i+1,v,k+1)\to E(u,v)$, $\Phi_4\equiv P(1,u,k)\wedge P(n,v,l) \to u=v$, and $\Phi_5\equiv P(n,u,k)\to ODD(k)$. By definition, $\Phi$ syntactically expresses NBG.
\end{yproof}

Next, let us consider the decision problem $\dstcon$, which is known to be in $\snl$ \cite{Yam17a}, and its complementary decision problem, called $\mathrm{DSTNCON}$, in which one asks to determine whether, given a directed graph $G$ and two vertices $s$ and $t$, no path exists in $G$ from $s$ to $t$.
Since $\nl$ is closed under complementation \cite{Imm88,Sze88},  $\mathrm{DSTNCON}$ belongs to $\nl$.
Is it true that $\mathrm{DSTNCON}$ belongs to $\snl$ as well?
Although we know that there exists its $\dl$-m-equivalent problem in $\snl$ by Theorem \ref{equivalent-MonoSNL}, it is not clear that $\mathrm{DSTNCON}$ itself falls in $\snl$.

To tackle this question, we intend to expand the complexity class $\mathrm{SNL}$ by introducing the additional \emph{$\mu$-operator} applied to second-order variables. In the definition of SNL, second-order variables are treated as functional variables, indicating ``functions'' from natural numbers to tuples of various objects.
Given a second-order variable $P$, a \emph{$\mu$-term} is of the form $\mu z.P(i,z)$  indicating a ``unique''  object $z$ satisfying $P(i,z)$ for a given number $i$.
However, we do \emph{not} allow any nested application of the $\mu$-operator, such as $\mu y.Q(i,y, \mu z.P(i,z))$ for two second-order variables $P$ and $Q$.
This new term allows us to write, e.g., $P(i+1,\mu z.P(i,z)+2)$ in order to mean that $(\forall z) [P(i,z)\to P(i+1,z+2)]$ as well as $(\exists z) [P(i,z)\wedge P(i+1,z+2)]$ (because $P$ indicates a function) by eliminating any use of quantifiers associated with $z$.
Notice that $P(i,z)$ and $P(i+1,z+2)$ satisfy the second-order variable requirement (i). For this perspective, we do not allow, e.g., $P(\mu z.P(i,z),y)$ because $P(z,y)$ and $P(i,z)$ may not in general satisfy the requirement (i).

\begin{definition}\label{def-mu-term}
A \emph{$\mu$-term} has the form $\mu z.P(i,z)$ for a second-order functional variable $P$ with the following requirement: there is no nested application of the $\mu$-operator.
\end{definition}

\begin{definition}
We naturally expand $\snl$ sentences by including $\mu$-terms obtained with no use of the nested $\mu$-operator and by demanding that each formula $\psi_j$ in Definition  \ref{SNL-sentence} must contain at most one $\mu$-term.
It is important to remark that each sentence with $\mu$-terms must satisfy the second-order requirements (i)--(ii). We further demand that
(iii) $\mu$-terms are not exempt from the requirement (i). This means that, for example, in a subformula $P_k(i,\mu z.P_{k'}(i',z))$, the inequality $|i-i'|\leq a$ holds for a fixed constant $a$.
These three requirements (i)--(iii) are briefly referred to as the \emph{$\mu$-term requirements}.
The sentences in this expanded logical system of $\snl$ with $\mu$-terms are succinctly called \emph{$\mu$SNL sentences}. All decision problems syntactically expressed by those $\mu\snl$ sentences form the complexity class $\mu\snl$.
\end{definition}

\begin{lemma}
$\snl \subseteq \mu\snl \subseteq \nl$.
\end{lemma}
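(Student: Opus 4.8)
The plan is to establish the two inclusions separately. The first inclusion $\snl \subseteq \mu\snl$ is essentially immediate from the definitions: every $\snl$ sentence is already a $\mu\snl$ sentence, since one is allowed (but not required) to use $\mu$-terms, and vacuously each $\psi_j$ contains at most one $\mu$-term (namely zero). The $\mu$-term requirements (iii)--(iv) are also satisfied vacuously. Hence every decision problem syntactically expressed by an $\snl$ sentence is syntactically expressed by a $\mu\snl$ sentence, and the relational and domain structures transfer verbatim.

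The substance is in the second inclusion $\mu\snl \subseteq \nl$. The approach I would take is to show that, given any $\mu\snl$ sentence $\Phi$ over a vocabulary $\VV$ together with relational and domain structures $(\SSS_x, \DD_x)$ describing an instance $x$, a nondeterministic machine can decide $(\SSS_x,\DD_x)\models\Phi$ in logarithmic space. The first step is to argue that each $\mu$-term $\mu z.P_k(i,z)$ can be eliminated in favor of ordinary quantified first-order variables without leaving $\snl$: a subformula containing $P_k(i', \mu z.P_k(i,z))$ is logically equivalent, using $Func(P_k)$, to $(\forall z)[P_k(i,z) \to P_k(i', z+c)]$ for the relevant offset, and since the $\mu$-term requirement (iii) forces $|i-i'|\le a$, the resulting second-order atoms $P_k(i,z)$ and $P_k(i',z+c)$ still obey the second-order variable requirement (i). After this rewriting, $\Phi$ becomes an ordinary $\snl$ sentence (modulo the disjunctive-form bookkeeping of requirement (ii), which one checks is preserved), and then we invoke the already-known fact — used implicitly in Section~\ref{sec:syntactic-NL} and following \cite{Yam17a} — that $\snl \subseteq \nl$ in this non-parameterized setting (restricting size parameters to $m_{\|}$, as noted in the proof of Proposition~\ref{SNL-closure-NL}).

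Concretely, the $\nl$ algorithm guesses, step by step, the values of the functional variables $P_1,\ldots,P_l$ at consecutive indices $i, i+1, \ldots, i+a$, keeping only a constant-size window of these values in memory at any time (this is where requirement (i) is essential: each $\psi_j$ refers only to $P_k$ at indices within a bounded window). For each tuple of values of the universally quantified first-order variables $i_1,\ldots,i_r,y_1,\ldots,y_s$ — each ranging over a set of size $O(|x|)$, hence indexable in $O(\log|x|)$ bits — the machine verifies that $\bigwedge_j \psi_j$ holds, where the predicate symbols $S_i$ and constants $c_j$ are looked up in $\SSS_x$ via the input tape's random-access mechanism. Since the window of guessed $P_k$-values has constant size, and any $\mu$-term merely picks out the unique value in that window (a constant-space operation given that $Func(P_k)$ is enforced), the whole check runs in $O(\log|x|)$ space nondeterministically.

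The main obstacle I expect is the careful handling of the $\mu$-term requirement and the interaction between the $\mu$-operator and requirement (ii): one must confirm that after eliminating a $\mu$-term by introducing a universally quantified $z$, the formula can still be rewritten as a finite disjunction with at most two disjuncts containing second-order variables, each of the prescribed conjunctive shape. This is a syntactic but somewhat delicate argument, because the substitution $P_k(i',\mu z.P_k(i,z)+c)$ introduces a second-order atom whose arguments now contain the fresh variable $z$ tied to another second-order atom; one has to check that the conjunct $P_k(i,z)\wedge P_k(i',z+c)\wedge R$ fits the template of requirement (ii) and that the clause $Func(P_k)$, already present, suffices to make the rewriting a logical equivalence rather than merely an implication. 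Once this bookkeeping is discharged, the membership-in-$\nl$ argument is the same routine configuration-tracking argument already employed for $\snl$.
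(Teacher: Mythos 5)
Your first inclusion and your concrete $\nl$ algorithm are fine, but the step you put at the front of the second inclusion --- eliminating each $\mu$-term so that $\Phi$ ``becomes an ordinary $\snl$ sentence'' and then citing $\snl\subseteq\nl$ --- has a genuine gap, and it is exactly the one you flag as ``somewhat delicate'' without resolving. Unfolding $P_{k_3}(i_3,\mu z.P_{k_4}(i_4,z))$ into $(\forall z)[P_{k_4}(i_4,z)\to P_{k_3}(i_3,z)]$ injects an extra second-order literal $\neg P_{k_4}(i_4,z)$ into every clause in which the $\mu$-term occurs. Requirement (ii) already permits (and the intended applications actually use) clauses containing two second-order literals besides the $\mu$-term; for instance, in the proof of Theorem \ref{DSTNCON-mu} the formula $\xi_2$ involves $N(w,h)$, $N(w+1,\cdot)$ \emph{and} $P(w',e+1)$ in a single implication. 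After unfolding, such a clause has three disjuncts containing second-order atoms ($A \vee \neg P_{k_4}(i_4,z)\vee P_{k_3}(i_3,z)$ with $A$ itself second-order), and no regrouping puts it back into the ``at most two disjuncts, each of conjunctive shape'' template of requirement (ii); the existential variant $(\exists z)[\cdots]$ is not available either, since $\snl$ sentences admit no first-order existential quantifiers. If the elimination did go through in general, you would have proved $\mu\snl=\snl$, which the paper pointedly does not claim --- it is the whole reason $\mu\snl$ is introduced as a separate class and why $\mathrm{DSTNCON}\in\snl$ is left open.

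The repair is simply to discard the elimination and promote your third paragraph to the main argument: that direct simulation is essentially the paper's proof. There one first uses requirements (i)--(ii) and the $\mu$-term requirements to normalize each $\psi_j$ into clauses containing a bounded number of literals of the forms $P_{k}(i,\boldvec{v})$, $\neg P_{k}(i,\boldvec{v})$, and $P_{k_3}(i_3,\mu z.P_{k_4}(i_4,z))$, with all indices confined to a window of constant width $a$; the NTM then guesses the $P_k$-values on that window, stores the $O(\log n)$-bit witness $z$ of each $\mu$-term explicitly, and verifies each clause by table lookup in $\SSS_x$. No detour through plain $\snl$ is needed, and the $\mu$-terms are handled natively exactly as you describe (``picks out the unique value in the window''), so your proof becomes correct once the elimination step is removed.
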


\begin{yproof}
Since $\snl\subseteq \mu\snl$ is obvious, we only need to show that $\mu\snl\subseteq \nl$.
The following proof is a loose extension of the one for \cite[Proposition 4.11]{Yam17a}, in which every ``parameterized'' decision problem in $\para\snl_{\omega}$ is short SLRF-T-reducible to $(\mathrm{2SAT},m_{vbl})$.
In a similar strategy, we wish to prove this statement.
Let us take an arbitrary language $L$ in $\mu\snl$ and consider a $\mu\snl$ sentence $\Phi$ of the form $\exists^f P_1\cdots \exists^f P_l \forall\boldvec{i}\forall \boldvec{y} [ \bigwedge_{j=1}^{t} \psi_j(P_1,\ldots,P_k,\boldvec{i},\boldvec{y})]$ (similarly to Definition \ref{SNL-sentence}) that syntactically expresses $L$.
Let us construct a nondeterministic Turing machine (or an NTM) $M$ for solving $L$ in polynomial time using only log space.
The intended machine starts with an input $x$.

By the second-order variable requirements (i)--(ii), each $\psi_j$ is expressed as finite disjunctions such that at most two disjuncts have the second-order variables and have the form $((\bigwedge_{k_1,i_1,\mathbf{\it v}}  P_{k_1}(i_1,\boldvec{v}))\wedge (\bigwedge_{k_2,i_2,\mathbf{\it v}'} \neg P_{k_2}(i_2,\boldvec{v}'))\wedge
(\bigwedge_{k_3,i_3} P_{k_3}(i_3,\mu z.P_{k_4}(i_4,z))) \wedge (\bigwedge_{k_5,i_5,\mathbf{\it x}} Q(\boldvec{x},\mu z.P_{k_5}(i_5,z))) \wedge R$, where $R$ contains no second-order variables or $\mu$-terms and both $i_3$ and $i'_3$ are within a distance of a fixed constant $a$ (as in Definition \ref{SNL-sentence}).
We rephrase $\psi_j$ properly as finite conjunctions such that, if any conjunct of them contains a second-order variable, it must contain one of the following formulas: $P_{k_1}(i_1,\boldvec{v}_1)$, $\neg P_{i_1}(i_1,\boldvec{v}_1)$, $P_{k_3}(i_3,\mu z.P_{k_4}(i_4,z))$, $\neg P_{k_3}(i_3,\mu z.P_{k_4}(i_4,z))$,
$Q(\boldvec{x},\mu z.P_{k_5}(i_5,z))$,
$P_{k_1}(i_1,\boldvec{v}_1)\vee P_{k_2}(i_2,\boldvec{v}_2)$,
$P_{k_1}(i_1,\boldvec{v}_1)\vee \neg P_{k_2}(i_2,\boldvec{v}_2)$,
$\neg P_{k_1}(i_1,\boldvec{v}_1)\vee \neg P_{k_2}(i_2,\boldvec{v}_2)$, $P_{k_1}(i_1,\boldvec{v}_1)\vee P_{k_3}(i_3,\mu z.P_{k_4}(i_4,z))$, and
$\neg P_{k_1}(i_1,\boldvec{v}_1)\vee P_{k_3}(i_3,\mu z.P_{k_4}(i_4,z))$.
Here, let us focus on $P_{k_1}(i_1,\boldvec{v}_1)\vee P_{k_3}(i_3,\mu z.P_{k_4}(i_4,z))$. Note that, by the $\mu$-term requirement (iii), for instance,  $|i_3-i_4|\leq a$ follows from $P_{k_3}(i_3,\mu z.P_{k_4}(i_4,z))$, where $a$ is a fixed constant.

The machine $M$ nondeterministically guesses $(i,\boldvec{v})$, which corresponds to $P_k(i,\boldvec{v})$. To evaluate the truth value of $P_{k_3}(i_3,\mu z.P_{i_4}(i_4,z))$, we need to obtain $z$ for which $P_{k_4}(i_4,z)$ with index $i_4$ and move to index $i_3$ (within a distance of $a$) to check if $P_{k_3}(i_3,z)$ is true.
Obviously, we need only $O(\log{n})$ space to remember each value $z$ used in $\mu z.P_{k_4}(i_4,z)$. Therefore, this entire process can be done using $O(\log{n})$ space.
\end{yproof}

Immerman \cite{Imm88} and Szelcepsc\'{e}nyi \cite{Sze88} proved that $\nl$ is closed under complementation.
Their proofs utilize an algorithmic technique known as \emph{inductive counting}, which employs the following abstract argument to prove that $\mathrm{DSTNCON}$ is in $\nl$.
Given a graph $G=(V,E)$, we inductively determine the number $N_i$ of vertices that are reachable from a given vertex $s$ within $i$ steps for any number $i\in[0,|V|]_{\integer}$. It is possible to calculate $N_{i+1}$ from $N_i$ nondeterministically. From the value $N_{|V|}$, we can conclude that another vertex $t$ is \emph{not} reachable from $s$ by checking that $N_{|V|}$ equals the number of vertices in $V-\{t\}$ reachable from $s$ within $|V|$ steps.

We adapt this practical technique in a logical setting and intend to apply it  to $\mu\snl$ in order to demonstrate that $\mathrm{DSTNCON}$ belongs to $\mu\snl$.

\begin{theorem}\label{DSTNCON-mu}
$\mathrm{DSTNCON}$ is in $\mu\snl$.
\end{theorem}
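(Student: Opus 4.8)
The plan is to implement the inductive counting argument of Immerman and Szelepcsényi inside a $\mu\snl$ sentence. Fix a directed graph $G=(V,E)$ with $|V|=n$ and designated vertices $s,t$. We want a $\mu\snl$ sentence $\Phi$ that is true exactly when $t$ is unreachable from $s$. The key data to track is, for each time step $i\in[0,n]_{\integer}$, the count $N_i$ of vertices reachable from $s$ in at most $i$ steps. The quantity $N_n$ is the final reachability count, and $t$ is unreachable iff there is a "certificate" listing exactly $N_n$ vertices, all reachable, none of them equal to $t$. So I would use second-order functional variables to encode: (a) $Cnt(i)=N_i$, the running count; (b) for the inner loop that computes $N_{i+1}$ from $N_i$, a variable tracking, as we scan candidate target vertices $v$ in order, how many have been "confirmed reachable in $\le i+1$ steps so far" and how many "old" vertices (reachable in $\le i$ steps) have been re-verified; and (c) for each such verification, a path variable $Path(\cdot)$ of length $\le i$ witnessing reachability of a given vertex from $s$, together with a boolean flag recording whether the candidate vertex is reachable. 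The $\mu$-terms are used exactly as advertised in the preceding discussion: to write things like $Cnt(i+1)$ in terms of $\mu z.Cnt(i,z)$ so that no existential quantifier over the count value is needed, and similarly to chain the inner-loop accumulators from one candidate vertex to the next.

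The key steps, in order, would be: first, set up the vocabulary — a predicate $E$ for edges, constants for $s$, $t$, $n$, and the successor/predecessor machinery — and the domain structures bounding each second-order variable's range by $[0,n]\times$(polynomially many vertex-tuples). Second, write the "path witness" subformula: $Path$ is a function from $[0,i]$ to $V$ with $Path(0)=s$, consecutive values joined by an edge (using $E(Path(j),Path(j+1))$, which one expresses via $\mu$-terms $Path(j)=\mu u.Path(j,u)$ etc.), and the endpoint equal to the candidate vertex. Third, write the inner counting loop: introduce an index $m$ enumerating the $n$ candidate vertices $v_1,\dots,v_n$, a functional variable recording a triple $(\text{\#confirmed-new}, \text{\#reconfirmed-old}, \text{current-candidate-status})$ at position $m$, and transition formulas that increment the appropriate counter when the candidate is verified reachable in $\le i+1$ steps (either it is $s$, or it has an in-neighbor that was verified reachable in $\le i$ steps), while simultaneously checking — using $Cnt(i)$ retrieved via a $\mu$-term — that the number of re-confirmed old vertices at the end equals $N_i$ exactly (this is the crucial soundness check that forces honest counting). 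Fourth, tie the loop iterations together across $i$: set $Cnt(0)=1$ and demand that $Cnt(i+1)$ equals the final "#confirmed-new" accumulator of the inner loop at step $i$. Fifth, the acceptance condition: there is a final scan of all vertices $v$ with an accompanying reachability certificate such that exactly $Cnt(n)$ vertices get flagged reachable, all flagged ones come with a valid path, and none of the flagged vertices equals $t$; then $\Phi$ is true iff $t\notin\text{Reach}(s)$. Throughout, I must verify that every $\psi_j$ obeys the second-order variable requirements (i)–(ii) and the $\mu$-term requirements (iii)–(iv) — i.e., all second-order atoms within one conjunct refer to indices within a bounded window, at most one $\mu$-term per $\psi_j$, disjunctive normal form with at most two disjuncts carrying second-order literals.

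The main obstacle is the second one: squeezing the nested-loop structure of inductive counting into the rigid $\snl$ format, where each $\psi_j$ may reference a functional variable only at indices $i, suc(i),\dots,suc^a(i)$ for a fixed constant $a$, allows at most one $\mu$-term, and must be expressible as a short disjunction with at most two second-order disjuncts. Inductive counting is naturally a doubly-indexed process — outer index $i$ over path length, inner index $m$ over candidate vertices — and the comparison "the reconfirmed-old count at the end of the inner loop equals $Cnt(i)$" wants to relate a variable at inner-position $m=n$ to a variable at outer-position $i$, which are not within a constant window. The fix I would pursue is to linearize the two loops into a single time axis of length $O(n^2)$ (iterate $(i,m)$ in lexicographic order, packing both into one index via arithmetic on $n$, as is standard when showing such problems are in $\nl$), carry the needed "old count" along inside the state tuple of the functional variable so it is always available at the adjacent index, and keep the boolean reachability flag and its path-length budget as part of that same tuple — thereby making every cross-reference local. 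Verifying that this linearized encoding still respects all the $\mu$-term and second-order requirements, and that the arithmetic on indices stays within what the $suc/pred$ machinery plus the domain structures permit, is where the real care is needed; the rest is bookkeeping analogous to the $H\!ALT_M$ construction in the proof of Theorem \ref{equivalent-MonoSNL}.
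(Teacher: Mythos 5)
Your proposal follows essentially the same route as the paper's proof: both implement Immerman--Szelep\-cs\'enyi inductive counting by introducing functional variables for a path witness and for running reachability counts, chaining counter values across adjacent positions via $\mu$-terms, and---crucially---resolving the constant-window restriction by packing the nested loop indices into a single linear index through arithmetic encodings (the paper's $Enc_1(w,e,i)\equiv w=e(n+1)+i$ and $Enc_2(w,u,e,i)\equiv w=u(n+1)^2+i(n+1)+e$ are exactly your proposed lexicographic linearization, with the ``old count'' kept locally accessible at the adjacent encoded index). Your soundness check (the re-confirmed-old count must equal $N_i$ exactly) corresponds to the paper's $\Psi_4$, which equates non-reachability of $u$ with the per-vertex certificate count matching $\mu h.N(w',h)$, so the plan is sound and matches the paper's construction in all essentials.
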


\begin{yproof}
Let us consider an arbitrary instance $x$ of the form $(G,s,t)$ given to $\mathrm{DSTNCON}$ with a directed graph $G=(\hat{V},\hat{E})$ and two vertices $s,t\in \hat{V}$. Recall that $x$ is in $\mathrm{DSTNCON}$ iff there is no path from $s$ to $t$ in $G$. For simplicity, we assume that $\hat{V}= [0,n]_{\integer}$ with $s=0$ and $t=n$ and that there is no self-loop (i.e., $(v,v)\notin \hat{E}$).
we prepare a predicate symbol $E$ so that $E(u,w)$ expresses the existence of an edge $(u,w)$ in $G$. As mentioned in Section \ref{sec:syntactic-NL}, we write $\models \psi$ to mean that a logical formula $\psi$ is true on appropriately chosen relational and domain structures $\SSS_x$ and $\DD_x$.

To encode a pair $(e,i)$ of numbers into a single number $w$, we use the formula $w=e(n+1)+i$ and we abbreviate this formula as $Enc_1(w,e,i)$. Notice that $0\leq w\leq (n+1)^2$.
In addition, we write $Enc_2(w,u,e,i)$ as a shorthand for $w=u(n+1)^2+i(n+1)+e$. It follows that $0\leq w\leq (n+1)^3$.

We then  introduce a second-order functional variable $P$ and
abbreviate as $\tilde{P}(w,e,i,u)$ the formula $B_1(w,e,i)\wedge P(w,u)$, where $B_1(w,e,i)$ expresses $0\leq w<(n+1)^2 \wedge Enc_1(w,e,i)$. Intuitively, this formula $\tilde{P}(w_1,e,i,u)$ means that (i) $w_1$ encodes $(e,i)$, (ii) there exists a path of length $i$ from $s$ ($=0$) to the vertex $u$ in $G$, and (iii) if the path reaches $e$, then it stays on $e$.
Concerning $\tilde{P}$, we demand that the following formula $\Psi_0(P)\equiv (\forall w,u,u',e,i)[\Phi_1\wedge \Phi_2\wedge \Phi_3\wedge \Phi_4]$ should be true, where
the formulas $\Phi_1$, $\Phi_2$, $\Phi_3$, and $\Phi_4$ are defined as follows.
\renewcommand{\labelitemi}{$\circ$}
\begin{itemize}\vs{-1}
  \setlength{\topsep}{-2mm}%
  \setlength{\itemsep}{1mm}%
  \setlength{\parskip}{0cm}%

\item[] $\Phi_1\equiv \neg E(u,u) \wedge [ \tilde{P}(w,e,0,u)\to e=u=0]$.

\item[] $\Phi_2\equiv Enc_1(w,0,i)\to \tilde{P}(w,0,i,0)$.

\item[] $\Phi_3\equiv \tilde{P}(w,e,i,u)\wedge \tilde{P}(w+1,e,i+1,u')\wedge u\neq e\wedge u\neq u' \to E(u,u')$.

\item[] $\Phi_4\equiv  \tilde{P}(w,e,i,e)\wedge \tilde{P}(w+1,e,i+1,u) \to u=e$.
\end{itemize}\vs{-1}
It is important to note that $\Psi_0(P)$ satisfies the second-order variable requirements because $\Phi_2$ contains only $P(w,u)$ and $P(w+1,u')$,  and $\Phi_3$ as well as $\Phi_4$ contains only $P(w,e)$ and $P(w+1,u)$.

We further introduce another formula $\Psi_1$ defined as
\[
\Psi_1(P)\equiv (\forall w) [ Enc_1(w,t,n) \to \neg \tilde{P}(w,t,n,t) ],
\]
which asserts that $t$ ($=n$) is not reachable from $s$ within $n$ steps.

For convenience, let $B_2(w,u,e,i)\equiv 0\leq w<(n+1)^3\wedge Enc_2(w,u,e,i)$. We introduce another second-order functional variable $N$ and then define $\tilde{N}(w,u,i,e,h)$ to be $B_2(w,u,e,i) \wedge N(w,h)$, which is intended to assert the existence of exactly $h$ vertices in $[0,e]_{\integer}$ reachable from $s$ by at most $i$ edges.
Let us define $\Psi_2(P,N)$ to be $(\forall w,w',u,e,i)[ \xi_1\wedge \xi_2\wedge \xi_3]$, where $\xi_1$, $\xi_2$, and $\xi_3$ are the following formulas.
\renewcommand{\labelitemi}{$\circ$}
\begin{itemize}\vs{-1}
  \setlength{\topsep}{-2mm}%
  \setlength{\itemsep}{1mm}%
  \setlength{\parskip}{0cm}%

\item[] $\xi_1 \equiv  [Enc_2(w,u,0,i) \to \tilde{N}(w,u,0,i,1)] \wedge [Enc_2(w',u,e,0) \to \tilde{N}(w',u,e,0,1)]$.

\item[]  $\xi_2 \equiv  B_2(w,u,e,i) \wedge \tilde{P}(w',e+1,i,e+1)  \to \tilde{N}(w+1,u,e+1,i, \mu h.N(w,h)+1)$.

\item[] $\xi_3 \equiv B_2(w,u,e,i) \wedge  Enc_1(w',e+1,i) \wedge \neg\tilde{P}(w',e+1,i,e+1) \to \tilde{N}(w+1,u,e+1,i, \mu h.N(w,h))$.
\end{itemize}\vs{-1}
Here, the variable $u$ is used as a ``dummy'' variable for technical reason. Note that $\xi_2$ uses $N(w,h)$, $N(w+1,h+1)$, and $P(w',e+1)$, and $\xi_3$ uses $N(w,h)$, $N(w+1,h)$, and $P(w',e+1)$. Hence, the $\mu$-term requirements (i)--(iii) are all  satisfied.
We then claim the following.

\begin{claim}\label{Psi-2-case}
Assuming that $\Psi_2(P,N)$ is true, it follows  that $\tilde{N}(w,u,e,i,h)$ is true iff $Enc_2(w,u,e,i)$ is true and  $h$ equals $| \{v\in [0,e]_{\integer} :\:  \models (\exists w') \tilde{P}(w',v,i,v) \}|$.
\end{claim}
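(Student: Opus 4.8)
Toward a proof of Claim~\ref{Psi-2-case}, the plan is to fix functional interpretations of $P$ and $N$ for which $\Psi_2(P,N)$ holds; I will also invoke the clause $\Phi_2$ of $\Psi_0(P)$, which holds in the structures relevant to the proof of Theorem~\ref{DSTNCON-mu} (and which, as noted below, is genuinely needed). Write $c(e,i)$ for $|\{v\in[0,e]_{\integer} : \models(\exists w')\tilde{P}(w',v,i,v)\}|$, and for each triple $(u,e,i)$ in the bounded domain of $\DD_x$ let $\langle u,e,i\rangle$ denote the unique number with $Enc_2(\langle u,e,i\rangle,u,e,i)$. The crux is to prove, by induction on $e$ with $i$ (and the dummy index $u$) held fixed, that $N(\langle u,e,i\rangle)=c(e,i)$. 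Granting this, the stated biconditional follows routinely: $N$ is a functional variable, so $\tilde{N}(w,u,e,i,h)\equiv B_2(w,u,e,i)\wedge N(w,h)$ forces $h=N(w)$, and $Enc_2(w,u,e,i)$ over the bounded domain automatically yields $0\le w<(n+1)^3$ and hence $B_2(w,u,e,i)$.

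First I would record a normalization step. For fixed $v,i$ the conjunct $B_1(w',v,i)$ — and hence $\tilde{P}(w',v,i,v)$ — is satisfiable for at most one value of $w'$, namely the $Enc_1$-code $w'_{v,i}$ of $(v,i)$; thus $\models(\exists w')\tilde{P}(w',v,i,v)$ is equivalent to $\models P(w'_{v,i},v)$. The same observation applied to the $Enc$- and $B$-antecedents of $\xi_1,\xi_2,\xi_3$ shows that, for fixed values of the spectator indices, each of these three implications has a unique non-vacuous instantiation of the quantified $w,w'$; this is what lets me read $\xi_2$ and $\xi_3$ as genuine recurrences. I would also note the ``no-carry'' identity $\langle u,e+1,i\rangle=\langle u,e,i\rangle+1$, valid whenever $e+1$ still lies in the index range, since incrementing the least-significant slot of an $Enc_2$-code leaves the others untouched.

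For the base case $e=0$, the first conjunct of $\xi_1$ at $w=\langle u,0,i\rangle$ forces $\tilde{N}(w,u,0,i,1)$, so $N(\langle u,0,i\rangle)=1$; and $c(0,i)=1$ exactly because $\models(\exists w')\tilde{P}(w',0,i,0)$, which is what $\Phi_2$ of $\Psi_0(P)$ provides. For the inductive step, assume $e+1$ is in range and $N(\langle u,e,i\rangle)=c(e,i)$; set $w=\langle u,e,i\rangle$, so $\langle u,e+1,i\rangle=w+1$, and let $w'_0$ be the $Enc_1$-code of $(e+1,i)$. If $\models P(w'_0,e+1)$ then $\models\tilde{P}(w'_0,e+1,i,e+1)$, so $\xi_2$ at $(w,w'_0)$ forces $N(w+1)=\mu h.N(w,h)+1=N(w)+1=c(e,i)+1$; and here $e+1$ is $P$-reachable within $i$ steps, so $c(e+1,i)=c(e,i)+1$, matching. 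If $\not\models P(w'_0,e+1)$ then, by the normalization, $\not\models(\exists w')\tilde{P}(w',e+1,i,e+1)$, so $c(e+1,i)=c(e,i)$; and $\xi_3$ at $(w,w'_0)$ forces $N(w+1)=\mu h.N(w,h)=c(e,i)$, again matching. This closes the induction and hence the claim.

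I expect the main obstacle to be the encoding bookkeeping rather than any single deep step: verifying that the only non-vacuous instantiations of the universally quantified $\xi_1,\xi_2,\xi_3$ are the intended ones, that the ``no-carry'' identity holds over the range actually used, and that the $B_1$/$B_2$ range conditions hidden in the conclusions of $\xi_2$ and $\xi_3$ are always met (which is why the domain of the dummy index $u$ must be kept small). A secondary point worth stating is the dependence of the base case on $\Psi_0(P)$ — specifically on the clause $\Phi_2$, which certifies $\models(\exists w')\tilde{P}(w',0,i,0)$ for every $i$; without it the biconditional already fails at $e=0$, so the claim is really proved under the standing hypothesis $\Psi_0(P)\wedge\Psi_2(P,N)$.
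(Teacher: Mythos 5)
Your proof is correct and follows essentially the same route as the paper's: the base case is discharged by $\xi_1$ together with $\Phi_2$ (which certifies $0\in W_{e,i}$), and the inductive step is the case split on $\tilde{P}(w',e+1,i,e+1)$ resolved by $\xi_2$ versus $\xi_3$ plus the functionality of $N$; your single induction on $e$ for fixed $i$ is just a cleaner organization of the paper's nominal double induction on $i$ then $e$, whose outer layer is vestigial since the recurrence never varies $i$. Your observation that the claim really requires the standing hypothesis $\Psi_0(P)$ (via $\Phi_1$, $\Phi_2$) is accurate --- the paper's proof also invokes these clauses even though the claim's statement lists only $\Psi_2(P,N)$.
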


\begin{yproof}
We show the claim for all $i$ and $e$ by induction on $i$. For simplicity, we write $W_{e,i}$ to denote the set $\{v\in [0,e]_{\integer}\mid (\exists w)[ \tilde{P}(w,v,i,v)]\mbox{ is true} \}$. Consider the case of $i=0$. Clearly, $\xi_1$ implies $\models \tilde{N}(w',u,e,0,1)$ for any $(w',u,e)$ satisfying $\models Enc_2(w',u,e,0)$.
Since $P$ behaves as a function, by $\Phi_1$, $\tilde{P}(w',e,0,0)$ is also true. We then obtain $|W_{e,0}|=1$. Assuming $\models Enc_2(w,0,e)$, it follows that $\models \tilde{N}(w,u,e,0,h)$ iff $h=1$.

Next, we assume that $i>0$. In this case, we wish to show the claim for all $e$ by induction on $e$. Assuming  $\models Enc_2(w,i,0)$, it follows by $\xi_1$ that $\tilde{N}(w,u,0,i,1)$ is true. Since $\tilde{P}(w,0,i,0)$ by $\Phi_2$, $|W_{0,i}|=1$ follows.
Next, assume that $\tilde{N}(w+1,u,e+1,i,h')$ is true. By induction hypothesis, $\models \tilde{N}(w,u,e,i,h)$ iff $\models Enc_2(w,u,e,i)$ and $h=|W_{e,i}|$.
For any $w'$ that makes $Enc_1(w',e+1,i)$ true,
if $\models \tilde{P}(w',e+1,i,e+1)$, then $\xi_2$ implies  $\models \tilde{N}(w+1,u,e+1,i,h+1)$. Since $N$ behaves as a function,  $h'$ must be $h+1$. Therefore, $h'$ matches $|W_{e+1,i}|$. By contrast, if $\models \neg\tilde{P}(w',e+1,i,e+1)$, then $\xi_3$ implies that $\tilde{N}(w+1,u,e+1,i,h)$ is true. Moreover, by definition, $W_{e,i}=W_{e+1,i}$ follows. From this equality, we obtain $h=|W_{e+1,i}|$. Therefore, we conclude that $\models \tilde{N}(w,u,e+1,i,h)$ iff $\models Enc_2(w,u,e+1,i)$ and $h=|W_{e+1,i}|$.
\end{yproof}

Another second-order functional variable $C$ is introduced. We further set  $\tilde{C}(w,u,e,i,h)\equiv B_2(w,u,e,i) \wedge C(w,h)$, which is supposed to assert that $w$ encodes $(u,e,i)$ and that $h$ equals  the total number of vertices in $[0,e]_{\integer}-\{u\}$ reachable from $s$ through at most $i$ edges.
We then define $\Psi_3(P,C) \equiv (\forall w,w',u,e,i,h) [\eta_1\wedge \eta_2\wedge \eta_3 \wedge \eta_4]$, where
\renewcommand{\labelitemi}{$\circ$}
\begin{itemize}\vs{-1}
  \setlength{\topsep}{-2mm}%
  \setlength{\itemsep}{1mm}%
  \setlength{\parskip}{0cm}%

\item[] $\eta_1 \equiv  [Enc_2(w,0,0,i) \to \tilde{C}(w,0,0,i,0)] \wedge [u\geq1 \wedge Enc_2(w',u,e,0) \to \tilde{C}(w',u,e,0,1)]$.

\item[] $\eta_2 \equiv  \tilde{C}(w,u,e,i,h) \wedge u = e+1 \to \tilde{C}(w+1,u,e+1,i,h)$.

\item[] $\eta_3 \equiv  B_2(w,u,e,i) \wedge Enc_1(w',e+1,i) \wedge \neg \tilde{P}(w',e+1,i,e+1)  \wedge u\neq e+1 \to \tilde{C}(w+1,u,e+1,i, \mu h.C(w,h))$.

\item[] $\eta_4 \equiv  Enc_1(w',u,i+1) \wedge \neg \tilde{P}(w',u,i+1,u) \wedge B_2(w,u,e,i) \wedge E(e+1,u)   \to  \tilde{C}(w+1,u,e+1,i, \mu h.C(w,h))$.

\item[] $\eta_5 \equiv  B_3(w,u,e,i) \wedge \tilde{P}(w',e+1,i,e+1) \wedge  u\neq e+1 \to  \tilde{C}(w+1,u,e+1,i, \mu h. C(w,h)+1)$.
\end{itemize}

Note that $\eta_3$ as well as $\eta_4$ and $\eta_5$ contains $C(w,h)$, $C(w+1,h)$, and $P(w',e+1)$. This makes the $\mu$-term requirements satisfied.

\begin{claim}\label{Psi-3-case}
Consider the case where $\Psi_3(P,C)$ is true. Assuming that $Enc_1(w',u,i+1) \wedge \neg \tilde{P}(w',u,i+1,u)$ is true, it follows that $\tilde{C}(w,u,e,i,h)$ is true iff $Enc_2(w,u,e,i)$ is true and $h$ equals $|\{ v\in [0,e]_{\integer}-\{u\}:\: \models (\exists w')[\tilde{P}(w',v,i,v)\wedge \neg E(v,u)] \}|$.
\end{claim}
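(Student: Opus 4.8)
The plan is to follow the pattern of the proof of Claim~\ref{Psi-2-case}. Fix the $i$ and $u$ provided by the standing hypothesis $\models Enc_1(w',u,i+1)\wedge\neg\tilde{P}(w',u,i+1,u)$, and write $W'_{e,i}$ for the set $\{v\in[0,e]_{\integer}-\{u\}\mid\models(\exists w'')[\tilde{P}(w'',v,i,v)\wedge\neg E(v,u)]\}$, so that the goal becomes: $\models\tilde{C}(w,u,e,i,h)$ iff $\models Enc_2(w,u,e,i)$ and $h=|W'_{e,i}|$. I would prove this by induction on $e$, in parallel with Claim~\ref{Psi-2-case}.

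The first and crucial step, carried out before the induction, is a \emph{reconciliation} observation. Since $\tilde{C}(w'',u,e',i,h)$ carries the guard $B_2(w'',u,e',i)$, the variable $C$ behaves as a function, and the conclusions of $\eta_4$ and of $\eta_5$ at the same encoding $w$ differ by exactly $1$, the truth of $\Psi_3(P,C)$ forces that no vertex $v\in[1,e+1]_{\integer}$ with $v\neq u$ can satisfy both $\models\tilde{P}(w'',v,i,v)$ and $\models E(v,u)$ (otherwise $\eta_4$ and $\eta_5$ would jointly demand two different values of $C(w)$). Consequently, for such $v$, being $\tilde{P}$-reachable within $i$ steps already entails $\neg E(v,u)$, which is precisely what makes the increment tested by $\eta_5$ (which only inspects $\tilde{P}(w'',e+1,i,e+1)$) coincide with membership in $W'_{e,i}$ (which additionally requires $\neg E(\cdot,u)$). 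For $v=s=0$ the corresponding fact $\neg E(0,u)$ is read off from the description of $\tilde{P}$ at time $0$ supplied by $\Phi_1$ and $\Phi_2$ together with the standing hypothesis.

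For the base case $e=0$: if $u=0$ then $\eta_1$ gives $\tilde{C}(w,0,0,i,0)$ and $W'_{0,i}=[0,0]_{\integer}-\{0\}=\setempty$; if $u\geq1$ then (by the second conjunct of $\eta_1$) $i=0$, $\eta_1$ gives $\tilde{C}(w',u,0,0,1)$, and since at time $0$ only $0$ is $\tilde{P}$-reachable we get $W'_{0,0}=\{0\}$ by the reconciliation. For the inductive step $e\to e+1$, assume the claim for $e$ and split on the new vertex $e+1$: (a) if $e+1=u$, use $\eta_2$ and observe $W'_{e+1,i}=W'_{e,i}$; (b) if $e+1\neq u$ and $\not\models\tilde{P}(w'',e+1,i,e+1)$, use $\eta_3$ (and $\eta_4$, which has the same conclusion, if $\models E(e+1,u)$) and observe $e+1\notin W'_{e+1,i}$; (c) if $e+1\neq u$ and $\models\tilde{P}(w'',e+1,i,e+1)$, then $\neg E(e+1,u)$ by the reconciliation, $\eta_5$ fires, and $W'_{e+1,i}=W'_{e,i}\cup\{e+1\}$ (note that $\eta_3$ and $\eta_4$ do not fire here). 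In each case functionality of $C$ gives the ``only if'' direction while $\Psi_3$ produces the witnessing value for the ``if'' direction, and the $B_2$-guard confines $\tilde{C}$ to encodings of $(u,e+1,i)$.

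The main obstacle is the reconciliation step itself. In Claim~\ref{Psi-2-case}, $\xi_2$ and $\xi_3$ update the count using exactly the condition $\tilde{P}(\cdot,v,i,v)$ that defines $W_{e,i}$; here, by contrast, $W'_{e,i}$ is cut out by the conjunction $\tilde{P}(w'',v,i,v)\wedge\neg E(v,u)$ whereas $\eta_3$, $\eta_4$, $\eta_5$ mention only $\tilde{P}$-reachability of the single vertex $e+1$ and single edges into $u$. Matching these up is exactly where one must exploit both the functionality of $C$ and the hypothesis that $u$ is unreachable within $i+1$ steps; once this is in place, the remainder of the induction is a routine case analysis running line-by-line as in Claim~\ref{Psi-2-case}.
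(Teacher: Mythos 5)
Your proof is correct and is, at its core, the same argument as the paper's: an induction that adjoins the vertices of $[0,e]_{\integer}$ one at a time and reads the count off $\eta_1$--$\eta_5$, with the functionality of $C$ supplying the ``only if'' direction. You organize it differently in two respects. First, the paper wraps the $e$-induction in an outer induction on $i$, but since $\eta_1$--$\eta_5$ only relate index $e$ to $e+1$ at a fixed time, that outer hypothesis is never invoked; fixing $i$ and $u$ from the standing hypothesis and inducting on $e$ alone, as you do, is the leaner formulation. Second, your \emph{reconciliation} step has no counterpart in the paper: the paper splits on $E(e+1,u)$ first, so when it applies $\eta_5$ it already has $\neg E(e+1,u)$ by assumption, and it never remarks that in the overlap ($E(e+1,u)$, $\tilde{P}(w'',e+1,i,e+1)$, $u\neq e+1$) $\eta_4$ and $\eta_5$ would force two distinct values of $C$ at the same argument. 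You split on $\tilde{P}$ first and so must extract $\neg E(e+1,u)$ from exactly that conflict; this makes explicit a consistency fact the paper leaves implicit, at the cost of an extra lemma. One shared soft spot: your reconciliation, being driven by the $e\to e+1$ transitions, does not reach $v=0$, and $\neg E(0,u)$ is not really ``read off from $\Phi_1$ and $\Phi_2$''; the paper's base case ($V_{u,e,0}=\{0\}$ ``from $\Phi_1$'') has the identical gap, as does the case $u\geq1$, $e=0$, $i>0$, where the second conjunct of $\eta_1$ (pinned to time $0$) gives neither of you a base value --- you at least flag the restriction to $i=0$ honestly, so this is a defect of $\eta_1$ as written rather than of your argument.
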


\begin{yproof}
We show the claim for all $i$, $e$, and $u$ by induction on $i$.
We write $V_{u,e,i}$ for the set $\{ v\in[0,e]_{\integer}-\{u\} :\: \models (\exists w)[\tilde{P}(w,v,i,v)\wedge \neg E(v,u)] \}$.
Let us consider the base case of $i=0$.
For any $(w,u,e)$ that makes $Enc_2(w,u,e,0)$ true, it follows
by $\eta_1$ that $\models \tilde{C}(w,0,e,0,0)$ and $\models \tilde{C}(w,u,e,0,1)$ when $u\geq1$.
From $\Phi_1$, we also obtain $V_{0,e,0}=\setempty$ and $V_{u,e,0}=\{0\}$  whenever $u\geq1$.

Next, let us consider the inductive case of $i+1$. We study each case of $e$ inductively. When $e=0$, by $\eta_1$, if $u=0$, then $\tilde{C}(q,u,0,i+1,0)$ is true. Moreover, by $\Phi_1$, we obtain $V_{u,0,i+1}=\setempty$. In contrast, when $u\geq1$, we obtain $\models \tilde{C}(w,u,0,i+1,1)$ and $V_{u,0,i+1}=\{0\}$.

For the case of $e+1$, induction hypothesis implies that $\models \tilde{C}(w,u,e,i+1,h)$ iff $\models Enc_2(w,u,e,i+1)$ and $h=|V_{u,e,i+1}|$. Let us assume that $Enc_1(w',u,i+1)\wedge \neg \tilde{P}(w',u,i+2,u)$ is true. Assuming $\models \tilde{C}(w,u,e,i+1,h)$, if $E(e+1,u)$ is true, then $\eta_4$ makes $\tilde{C}(w+1,u,e+1,i+1,h)$ true. Moreover, we obtain $e+1\notin V_{u,e+1,i+1}$, and thus $|V_{u,e+1,i+1}|=|V_{u,e,i+1}|=h$ follows.
On the contrary, assume that $\models \neg E(e+1,u)$. If $u=e+1$, then $\eta_2$ implies $\models \tilde{C}(w+1,u,e+1,i+1,h)$. Since $e+1\notin V_{u,e+1,i+1}$, we conclude that $|V_{u,e+1,i+1}|=|V_{u,e,i+1}|=h$.
Next, we assume that $u\neq e+1$. If $\tilde{P}(w'',e+1,i+1,e+1)$ is true, then $\eta_5$ derives $\models \tilde{C}(w+1,u,e+1,i+1,h+1)$ and we obtain $e+1\in V_{u,e+1,i+1}$.
This last result leads to the equality of  $V_{u,e+1,i+1}=V_{u,e,i+1}\cup\{e+1\}$, and thus $|V_{u,e+1,i+1}|=|V_{u,e,i+1}|+1 = h+1$ follows.
By contrast, if $\neg \tilde{P}(w'',e+1,i+1,e+1)$ is true, then $\eta_3$ yields $\models \tilde{C}(w+1,u,e+1,i+1,h)$. Moreover, we obtain $e+1\notin V_{u,e+1,i+1}$ by the definition of $V_{u,e+1,i+1}$.
This consequence further implies that $|V_{u,e+1,i+1}| = |V_{u,e,i+1}|=h$. Therefore, we conclude that $\models \tilde{C}(w,u,e+1,i+1,h)$ iff $\models Enc_2(w,u,e+1,i+1)$ and $h=|V_{u,e+1,i+1}|$.
\end{yproof}

Moreover, we introduce $\Psi_4(P,N,C)$ defined as
\renewcommand{\labelitemi}{$\circ$}
\begin{itemize}\vs{-1}
\item[] $\Psi_4\equiv  B_1(w,u,i)\wedge B_2(w',u,n,i) \to [  \neg\tilde{P}(w,u,i,u) \leftrightarrow  \tilde{C}(w',u,n,i, \mu h.N(w',h) )$.
\end{itemize}\vs{-1}
Notice that the $\mu$-term requirements are clearly satisfied.

The desired sentence $\Phi$ is finally set to be
\[
(\exists^f P,N,C) [ \Psi_0(P) \wedge \Psi_1(P) \wedge
\Psi_2(P,N) \wedge \Psi_3(P,C) \wedge \Psi_4(P,N,C)]
\]
with the above three additional formulas $\Psi_2$, $\Psi_3$, and $\Psi_4$ to $\Psi_0$ and $\Psi_1$.
By induction on $i$, we verify the following claim.

\begin{claim}\label{claim-3}
Assume that $\Phi$ is true. For all $(w,u,i)$ satisfying $Enc_1(w,u,i)$, it follows that  $\tilde{P}(w,u,i,u)$ is true iff  $s\stackrel{\leq i}{\leadsto} u$ holds.
\end{claim}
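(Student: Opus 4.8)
The plan is to establish Claim~\ref{claim-3} by induction on $i$, using the various formulas $\Psi_0$ through $\Psi_4$ that are all asserted true under the hypothesis that $\Phi$ is true. Recall the notation $s\stackrel{\leq i}{\leadsto}u$ means there is a path of length at most $i$ from $s$ to $u$ in $G$. The statement to prove is a biconditional: for every $(w,u,i)$ with $\models Enc_1(w,u,i)$, we have $\models\tilde{P}(w,u,i,u)$ iff $s\stackrel{\leq i}{\leadsto}u$.

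\textbf{Base case and easy direction.} For $i=0$, formula $\Phi_1$ forces $\tilde{P}(w,e,0,u)\to e=u=0$, while $\Phi_2$ (with $e=0$) gives $\models\tilde{P}(w,0,0,0)$; since $P$ is a function this pins down $\tilde{P}(\cdot,\cdot,0,\cdot)$ exactly, and indeed $s\stackrel{\leq 0}{\leadsto}u$ holds iff $u=0$. For the inductive step, the ``only if'' direction ($\models\tilde{P}(w,u,i{+}1,u)$ implies $s\stackrel{\leq i+1}{\leadsto}u$) should follow directly from $\Phi_3$ and $\Phi_4$: given $\models\tilde{P}(w',v,i,v)$ for the predecessor value $v$ along the guessed sequence (which by induction hypothesis means $s\stackrel{\leq i}{\leadsto}v$), formula $\Phi_3$ forces either $v=u$ (path stays put, using $\Phi_4$'s ``stay on $e$'' behavior or $v=u$) or $E(v,u)$, so one extends the path by at most one edge. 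Here I would first argue that for every $i$ and every vertex $v$ there is a unique $w'$ with $\models Enc_1(w',v,i)$, and that $\tilde{P}(w',v,i,\cdot)$ is single-valued because $P$ is functional; this lets me speak of ``the value stored at $(v,i)$'' unambiguously.

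\textbf{The hard direction.} The nontrivial content is ``if'': whenever $s\stackrel{\leq i+1}{\leadsto}u$, then $\models\tilde{P}(w,u,i{+}1,u)$ where $\models Enc_1(w,u,i{+}1)$. Here the inductive counting machinery is essential, because the $\Psi_0$ formulas by themselves only \emph{constrain} $P$ — they do not force $P$ to record \emph{all} reachable vertices; a priori $P$ could under-report. This is exactly where $\Psi_1$, $\Psi_2$, $\Psi_3$, $\Psi_4$ come in, combined with Claims~\ref{Psi-2-case} and~\ref{Psi-3-case}. The argument I expect is: suppose toward a contradiction that at level $i+1$ the function $P$ fails to record some genuinely reachable vertex $u_0$ (with $s\stackrel{\leq i+1}{\leadsto}u_0$ via a path whose final edge comes from a vertex reachable in $\le i$ steps, which by the induction hypothesis \emph{is} recorded at level $i$). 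Then $N(w', \cdot)$ at $(u,n,i+1)$ counts $|W_{n,i+1}|$ (the vertices actually recorded by $P$ at level $i+1$) by Claim~\ref{Psi-2-case}, while $C(w',\cdot)$ at $(u_0,n,i+1)$ counts the vertices in $[0,n]\setminus\{u_0\}$ recorded at level $i$ that have no edge into $u_0$, by Claim~\ref{Psi-3-case}. Formula $\Psi_4$ asserts precisely $\neg\tilde{P}(w,u_0,i{+}1,u_0)\leftrightarrow \tilde{C}(w',u_0,n,i{+}1,\mu h.N(w',h))$, i.e.\ $u_0$ is \emph{not} recorded at level $i+1$ iff the count of level-$i$ vertices with no edge to $u_0$ equals the total count of level-$(i+1)$ recorded vertices. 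If $u_0$ is genuinely reachable in $\le i+1$ steps but not recorded, then on one hand $\Psi_4$ forces the two counts to be equal, yet on the other hand the existence of a recorded level-$i$ predecessor $v_0$ of $u_0$ (with $E(v_0,u_0)$) means $v_0$ is counted by $N$ but not by $C$, and no compensating over-count can occur because $P$'s recorded set at level $i+1$ is contained in the set of truly reachable vertices (from the ``only if'' direction just proven) — forcing the $N$-count to strictly exceed the $C$-count, contradicting $\Psi_4$. The remaining bookkeeping is checking $\Psi_1$ handles the specific vertex $t=n$ and that $\mu$-term evaluations line up with the intended counts.

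\textbf{Main obstacle.} I expect the genuinely delicate point to be formalizing the ``$P$ records only reachable vertices, hence cannot over-count'' half cleanly enough that the counting contradiction in the hard direction goes through; in particular one must be careful that $W_{e,i}$ as defined via ``$\exists w\,\tilde{P}(w,v,i,v)$ is true'' is exactly the set of vertices $P$ commits to at level $i$, and that the ``only if'' direction establishes $W_{e,i}\subseteq\{v: s\stackrel{\leq i}{\leadsto}v\}$ \emph{before} one invokes Claims~\ref{Psi-2-case}--\ref{Psi-3-case} and $\Psi_4$ at level $i+1$ — so the two directions of the biconditional in Claim~\ref{claim-3} must be proven in the right order within a single induction, with the easy direction at level $i+1$ feeding the hard direction at level $i+1$. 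A secondary nuisance is verifying all the $Enc_1$/$Enc_2$ encoding arithmetic is consistent (e.g.\ the $w,w',w''$ in the various $\eta_j$ and $\xi_j$ refer to the correctly shifted encodings), but that is routine once the main logical skeleton is in place.
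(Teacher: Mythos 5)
Your proposal is correct and takes essentially the same route as the paper's proof: induction on $i$, the base case from $\Phi_1$ and $\Phi_2$, the soundness direction from $\Phi_1$ and $\Phi_3$, and the hard direction by the inductive-counting contradiction through Claims~\ref{Psi-2-case} and~\ref{Psi-3-case} and $\Psi_4$ (the recorded predecessor lies in $W_{n,i}$ but not in $V_{u,n,i}$, so the two counts must differ and the biconditional in $\Psi_4$ forces $\tilde{P}(w,u,i+1,u)$ to be true). The only organizational difference is that the paper establishes the soundness direction as a separate standalone induction (its statement (*)) after the counting argument, rather than interleaving the two directions within a single induction as you suggest.
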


\begin{yproof}
Let us recall the notations $W_{e,i}$ and $V_{u,e,i}$ from the proofs of Claims \ref{Psi-2-case} and \ref{Psi-3-case}.
We proceed the proof by induction on $i$. When $i=0$, it clearly follows that $\models \tilde{P}(w,u,0,u)$  iff $u=0$ iff $s\stackrel{\leq 0}{\leadsto}u$.

In what follows, we examine the case of $i+1$.
Let us assume that $s\stackrel{\leq i+1}{\leadsto} u$. If $s\stackrel{\leq i}{\leadsto} u$, then we simply apply induction hypothesis. Thus, we now assume that $s\stackrel{\leq i}{\leadsto} u$ does not hold.
Toward a contradiction, we further assume that
$\tilde{P}(w,u,i+1,u)$ is false; that is, $\neg \tilde{P}(w,u,i+1,u)$ is true. By induction hypothesis, the claim implies that $W_{n,i} =\{v\in[0,n]_{\integer}\mid  s\stackrel{\leq i}{\leadsto}v\}$ and $V_{u,n,i} =\{v\in[0,n]_{\integer}-\{u\}\mid  s\stackrel{\leq i}{\leadsto}v \text{ and } \models \neg E(v,u)\}$.
Since $s\stackrel{\leq i+1}{\leadsto} u$, there exists a vertex $x$ for which $s\stackrel{\leq i}{\leadsto} x$ and $\models E(x,u)$. From this follows $x\neq u$.
By definition, we obtain $x\in W_{n,i}$ but $x\notin V_{u,n,i}$.  Therefore, $\models \tilde{C}(w',u,n,i+1,h) \wedge \tilde{N}(w',u,n,i+1,h')$ holds for $h=|V_{u,n,i+1}|$ and $h'=|W_{n,i}|$. Since $h\neq h'$, $\Psi_4$ implies that $\tilde{P}(w,u,i+1,u)$ is true. This is a contradiction.

We then show the converse. We first prove by induction on $i$ that (*) for any $x$, if $\models \tilde{P}(w,u,i,x)$, then $s\stackrel{\leq i}{\leadsto} x$ holds. When $i=0$, if $\models \tilde{P}(w,u,0,x)$, then $\Phi_1$ leads to $u=x=0$.
In addition, we obtain $s\stackrel{\leq 0}{\leadsto} s$. Next, we consider the case of $i+1$. Assume that $\tilde{P}(w,u,i+1,x)$ is true. If there is an element $y$ satisfying $\models \tilde{P}(w',u,i,y)\wedge E(y,x)$, then induction hypothesis leads to both $s\stackrel{\leq i}{\leadsto} y$ and $\models E(y,x)$, which yield $s\stackrel{\leq i+1}{\leadsto} x$. Conversely, let us assume that    $\tilde{P}(w',u,i,y)\wedge E(y,x)$ is false. If $\models \tilde{P}(w',u,i,y)$, then our assumption yields $\models \neg E(y,x)$. However, since $\tilde{P}(w,u,i+1,x)$ is true, $\Phi_3$ implies $\models E(y,x)$. This is a contradiction. Hence, Statement (*) is true. As a special case of (*), by setting $x=u$, we conclude that $\models \tilde{P}(w,u,i,u)$ implies $s\stackrel{\leq i}{\leadsto} u$.
\end{yproof}

Assuming $\models Enc_1(w,t,n)$, $\Psi_1(P)$ implies that  $\neg\tilde{P}(w,t,n,t)$ is true. This means that, by Claim \ref{claim-3},  there is no path from $s$ to $t$ in $G$. It then follows that $\Phi$ is true iff $G$ has no path from $s$ to $t$. Hence, $\Phi$ syntactically expresses $\mathrm{DSTNCON}$. This concludes that $\mathrm{DSTNCON}$ is in $\mu\snl$.
\end{yproof}

\section{Monotone Variant of SNL}\label{sec:two-variants}

We have discussed the basic structural properties in Section \ref{sec:structure}. We next intend to expand the scope of our study on $\snl$ in hopes of making its direct application to other areas of computer science.
In the past literature, there have been intensive studies on a wide range of variations of $\snp$.
We focus on the monotone restriction of $\snp$ ($\mathrm{MonoSNP}$). In natural analogy, we intend to investigate similar concepts induced from $\snl$ and to study their specific characteristics.

\subsection{Monotone SNL (or MonoSNL)}\label{sec:monotone}

In the polynomial-time setting, Feder and Vardi \cite{FV93,FV99} studied structural properties of two restricted versions of $\mathrm{SNP}$, called \emph{monotone SNP} (MonoSNP) and \emph{monotone monadic SNP}  (MMSNP). Bodinsky, Chen, and Feder \cite{BCF12} later gave a characterization of MMSNP under a certain natural restriction.
In a similar fashion, let us consider a natural subclass of $\snl$, which we call the \emph{monotone SNL} or succinctly \emph{MonoSNL}.
Suppose that our vocabulary $\VV$ contains predicate symbols $S_1,S_2,\ldots,S_d$.
Given a formula $\Phi$ over $\VV$, we then transform it to its conjunctive normal form (CNF).
This formula $\Phi$ is said to be \emph{monotone} if the predicates $S_j$'s appearing in this CNF formula are all negative (i.e., of the form $\neg S_i(v_{i_1},\ldots,v_{i_k})$).
In what follows, we wish to study the expressibility of monotone $\snl$ sentences.

\begin{definition}\label{def-monoSNL}
The notation $\mathrm{Mono}\snl$ denotes the class of all decision problems that are syntactically expressed by monotone $\snl$ sentences.
\end{definition}

It turns out that $\mathrm{MonoSNL}$ contains natural $\nl$ problems. For example, the problem $\mathrm{2COLOR}$, discussed in Example \ref{2COLOR}, belongs to $\mathrm{MonoSNL}$. Another example is the problem $\mathrm{exact}3\dstcon$.

\begin{yexample}\label{eaxct3DSTCON-monotone}
{\rm The problem $\mathrm{exact}3\dstcon$ is in $\mathrm{MonoSNL}$. Consider any instance $(G,s,t)$ of $\mathrm{exact3}\dstcon$ with $G=(V_G,E_G)$ and $s,t\in V_G$. For simplicity, we assume that $s$ has indegree $0$. We naturally extend $G$ by including the edge $(t,t)$ and write $G^{(ext)}$ for this extended graph.
We then introduce a predicate symbol $E$, which represents the edge set  $E_G$.
For a second-order variable $P$, $P(i,u)$ semantically indicates that a given vertex $u$ is the $i$th element of a path of $G^{(ext)}$.
We define an $\snl$ sentence $\Phi$ to be $(\exists^fP) (\forall i,u,v,v_1,v_2,v_3)[P(0,s)\wedge P(n,t)\wedge \Phi_1\wedge \Phi_2\wedge \Phi_3]$, where
$\Phi_1\equiv P(0,s)\wedge P(1,v)\wedge (\bigwedge_{k=1}^{3} E(s,v_k)) \wedge v_1\neq v_2\neq v_3\neq v_1 \to \bigvee_{k=1}^{3}(v=v_k)$,
$\Phi_2\equiv 0<i<n\wedge P(i,u)\wedge P(i+1,v)\wedge (\bigwedge_{k=1}^{2} E(u,v_k)) \wedge v_1\neq v_2 \to \bigvee_{k=1}^{2}(v=v_k)$, and  $\Phi_3\equiv 0\leq i<n \wedge P(i,t)\to P(i+1,t)$.
Intuitively, $\Phi_1$ means that the path contains an edge from $s$, $\Phi_2$ means that, if vertex $u$ is in the path, then the path contains an edge from $u$, and $\Phi_3$ means that, if the path reaches $t$ at some point, then the path stays on $t$.
It thus follows by definition that $\Phi$ syntactically expresses $\mathrm{exact}3\dstcon$.
The formula $\Phi$ is monotone because $\Phi_1$ is rewritten as  $\neg P(0,s)\vee  \neg P(1,v)\vee  (\bigvee_{k=1}^{3} \neg E(s,v_k)) \vee \neg(v_1\neq v_2\neq v_3\neq v_1) \vee \bigvee_{k=1}^{3}(v=v_k)$ and $\Phi_2$ is rewritten as
$\neg(0<i<n)\vee  \neg P(i,u)\vee \neg P(i+1,v)\vee (\bigvee_{k=1}^{2} \neg E(u,v_k)) \vee v_1= v_2 \vee  \bigvee_{k=1}^{2}(v=v_k)$.
}
\end{yexample}

Example \ref{eaxct3DSTCON-monotone} immediately leads to the following consequence.

\begin{corollary}
$\nl= \; \Lreduces\!(\mathrm{Mono}\snl)$.
\end{corollary}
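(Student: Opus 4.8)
The plan is to derive the corollary as a direct consequence of Example~\ref{eaxct3DSTCON-monotone} together with Proposition~\ref{SNL-closure-NL}. First I would observe that the inclusion $\Lreduces\!(\mathrm{Mono}\snl) \subseteq \nl$ is immediate: since $\mathrm{Mono}\snl \subseteq \snl$ by definition, we have $\Lreduces\!(\mathrm{Mono}\snl) \subseteq \Lreduces\!(\snl)$, and the latter equals $\nl$ by Proposition~\ref{SNL-closure-NL}. So the only substantive direction is $\nl \subseteq \Lreduces\!(\mathrm{Mono}\snl)$.

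For that direction, the key point is that $\mathrm{exact}3\dstcon$ is $\nl$-complete under $\dl$-m-reductions (a fact already invoked in the proof of Proposition~\ref{SNL-closure-NL}, attributed to \cite{Yam17a}). By Example~\ref{eaxct3DSTCON-monotone}, $\mathrm{exact}3\dstcon \in \mathrm{Mono}\snl$. Hence every $L \in \nl$ satisfies $L \Lreduces \mathrm{exact}3\dstcon$ with $\mathrm{exact}3\dstcon \in \mathrm{Mono}\snl$, which is exactly the statement $L \in \Lreduces\!(\mathrm{Mono}\snl)$. Combining the two inclusions gives $\nl = \Lreduces\!(\mathrm{Mono}\snl)$.

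I do not anticipate a real obstacle here, since all the necessary ingredients are in place: the membership $\mathrm{exact}3\dstcon \in \mathrm{Mono}\snl$ was just verified, and the $\nl$-completeness of $\mathrm{exact}3\dstcon$ under $\dl$-m-reductions is quoted from \cite{Yam17a}. The only point requiring a word of care is making sure the notation $\Lreduces\!(\cdot)$ is used consistently with its definition in Section~\ref{sec:basic-closure}, namely $\Lreduces\!(\CC) = \bigcup_{A \in \CC} \Lreduces\!(A)$, so that membership of a single $\nl$-complete problem in $\mathrm{Mono}\snl$ suffices to capture all of $\nl$ under closure by $\dl$-m-reductions. Thus the proof is essentially two lines: one for each inclusion.
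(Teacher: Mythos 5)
Your proposal is correct and follows essentially the same route as the paper: the easy inclusion via $\mathrm{Mono}\snl \subseteq \snl \subseteq \nl$ and closure of $\nl$ under $\dl$-m-reductions, and the converse via the $\nl$-completeness of $\mathrm{exact}3\dstcon$ together with Example~\ref{eaxct3DSTCON-monotone}. The only cosmetic difference is that you invoke Proposition~\ref{SNL-closure-NL} for $\Lreduces\!(\snl)=\nl$ where the paper argues directly, but the content is identical.
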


\begin{yproof}
Obviously, $\mathrm{MonoSNL}\subseteq \snl\subseteq \nl$ follows by definition. We thus obtain $\Lreduces\!(\mathrm{MonoSNL}) \subseteq \; \Lreduces\!(\nl) = \nl$.
By Example \ref{eaxct3DSTCON-monotone}, $\mathrm{MonoSNL}$ contains the decision problem $\mathrm{exact}3\dstcon$.  Since $\mathrm{exact}3\dstcon$ is complete for $\nl$ under $\dl$-m-reductions (see, e.g., \cite{Yam17a}), we instantly obtain $\nl\subseteq \; \Lreduces\!(\mathrm{MonoSNL})$.
\end{yproof}

Next, we ask whether the \emph{dichotomy theorem} holds for $\mathrm{MonoSNL}$; namely,   every decision problem in $\mathrm{MonoSNL}$ is either in $\dl$ or complete for $\nl$ under $\dl$-m-reductions. We do not know that this is the case. This situation is compared to the case of $\snl$ as shown in Corollary \ref{dichotomy-L-NL}. However, if $\mathrm{MonoSNL}$ equals $\snl$, then $\mathrm{MonoSNL}$ is unlikely enjoy the dichotomy theorem.


In close relation to the dichotomy theorem, as a simple example, we examine the ``monotone'' segments of 2SAT ($Polar^{(+)}\mbox{-}\twosat$ and $Polar^{(-)}\mbox{-}\twosat$) founded on the notion of the \emph{polarity} of 2CNF Boolean formulas.
Let us recall that a 2CNF formula $\psi$ has the form $\bigwedge_{i=1}^{k}\phi_i$ with $\phi_i\equiv z_{i1}\vee z_{i2}$, where each $z_{ij}$ is a literal. If every clause $\phi_i$ is of the form either $x\vee y$ or $\overline{x}\vee \overline{y}$ for variables $x$ and $y$, then $\psi$ is said to have \emph{positive polarity}.
In contrast, if every $\phi_i$ has the form $\overline{x}\vee y$ (or $x\vee \overline{y}$), then $\psi$ has \emph{negative polarity}. The problem $Polar^{(+)}\mbox{-}\twosat$ (resp.,  $Polar^{(-)}\mbox{-}\twosat$) is then defined as the collection of all satisfiable 2CNF formulas that have positive (resp., negative) polarity. These problems $Polar^{(+)}\mbox{-}\twosat$ and $Polar^{(-)}\mbox{-}\twosat$ can be syntactically expressed by monotone $\snl$ sentences.

\begin{lemma}\label{polar-MonoSNL}
$Polar^{(+)}\mbox{-}\twosat$ and $Polar^{(-)}\mbox{-}\twosat$ are both in $\mathrm{MonoSNL}$.
\end{lemma}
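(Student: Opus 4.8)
The plan is to exhibit explicit monotone $\snl$ sentences that syntactically express $Polar^{(+)}\mbox{-}\twosat$ and $Polar^{(-)}\mbox{-}\twosat$. The key observation is that satisfiability of a polarity-restricted 2CNF formula reduces to a reachability-type condition on an auxiliary graph, and that we already know from Example \ref{2COLOR} and Example \ref{eaxct3DSTCON-monotone} how to encode reachability/colouring conditions by monotone sentences (all predicate occurrences of the input signature appearing negatively). For $Polar^{(-)}\mbox{-}\twosat$, a clause $\overline{x}\vee y$ is the implication $x\to y$, so the formula is satisfiable iff there is no variable $x$ with both $x\leadsto \mathbf{false}$-type and $\mathbf{true}\leadsto x$-type forcing; equivalently, iff a truth assignment (a second-order functional variable $T$ mapping each variable index to $\{0,1\}$, ordered along the variable indices $1,\dots,n$) can be guessed so that every clause is satisfied. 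For $Polar^{(+)}\mbox{-}\twosat$, clauses are $x\vee y$ (``at least one true'') or $\overline{x}\vee\overline{y}$ (``at least one false''); again we guess a functional truth assignment and check each clause.

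First I would fix the vocabulary: predicate symbols $POS(i,j)$ meaning ``$x_i\vee x_j$ is a clause'', $NEG(i,j)$ meaning ``$\overline{x_i}\vee\overline{x_j}$ is a clause'' (for the positive-polarity problem), and for negative polarity a single symbol $IMP(i,j)$ meaning ``$\overline{x_i}\vee x_j$ is a clause''; together with constants $0,n$ and the successor apparatus. The universe is $[0,n]_\integer$ where $n$ is the number of variables. Next I would introduce one second-order functional variable $T$ with $T(i,b)$ meaning ``variable $x_i$ gets value $b$'' (enforced to be a genuine $0/1$ function via the built-in $Func$ machinery of $\exists^f$, plus a monotone clause $T(i,b)\to 0\le b\le 1$ exactly as $\Phi_1$ in Example \ref{2COLOR}). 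Then the core clauses: for $Polar^{(+)}$, take $\Phi_{pos}\equiv POS(i,j)\wedge T(i,b)\wedge T(j,c)\to b=1\vee c=1$ and $\Phi_{neg}\equiv NEG(i,j)\wedge T(i,b)\wedge T(j,c)\to b=0\vee c=0$; for $Polar^{(-)}$, take $\Phi_{imp}\equiv IMP(i,j)\wedge T(i,b)\wedge T(j,c)\to b=0\vee c=1$. Each such $\Phi$ rewrites in disjunctive form as $\neg POS(i,j)\vee\neg T(i,b)\vee\neg T(j,c)\vee(b=1)\vee(c=1)$, etc., so (a) the input predicates $POS,NEG,IMP$ occur only negatively, making the sentence monotone, and (b) the second-order-variable requirements (i)–(ii) are met since each conjunct contains only $T$-atoms at a single index with at most the prescribed shape (indeed this is even simpler than Example \ref{2COLOR}, as all $T$-occurrences share the same first index). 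Finally I would supply the relational structure $\SSS_\psi$ decoding a given polarity-restricted 2CNF $\psi$ into $POS/NEG$ (resp. $IMP$) relations over $[0,n]_\integer$, and the domain structure $\DD_\psi$ giving $T$ the scope $[n]\times\{0,1\}$ and each first-order variable the scope $[n]$ or $[2]$, and then verify $\psi\in Polar^{(\pm)}\mbox{-}\twosat$ iff $\Phi$ is true on $(\SSS_\psi,\DD_\psi)$: this is immediate because a satisfying assignment of $\psi$ is exactly a choice of the function $T$ making all clauses true.

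The only genuine point requiring care — and what I would flag as the main obstacle — is making sure the sentence really lies in $\snl$ in the strict sense of Definition \ref{SNL-sentence}: that is, that the distinct conjuncts $\psi_j$ do not share first-order variables (which forces me to use separate variable names $i,j,b,c$ for the $POS$-conjunct versus the $NEG$-conjunct, renaming as in the proof of Proposition \ref{intersection-union}), and that each $\psi_j$ has at most two disjuncts carrying second-order variables in the allowed shape. Here each clause-checking conjunct carries two $T$-atoms ($T(i,b)$ and $T(j,c)$) appearing positively inside a disjunction $\neg T(i,b)\vee\neg T(j,c)\vee R$, which matches requirement (ii) once we observe $R$ (namely $\neg POS(i,j)\vee b=1\vee c=1$) contains no second-order variable; so the requirement is satisfied. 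I would also double-check that, because the positive-polarity problem needs two input predicates $POS$ and $NEG$, the two corresponding conjuncts are bundled under a single block of universal first-order quantifiers with disjoint variables, exactly as permitted by the $\bigwedge_{j=1}^t\psi_j$ form. Everything else is bookkeeping, so the proof is short: define the sentence, note monotonicity from the disjunctive rewriting, check the second-order variable requirements, and state the relational and domain structures.
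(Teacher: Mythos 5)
Your core construction is the same as the paper's: existentially guess a truth assignment as a second-order functional variable $T$ and check each clause by a universally quantified implication whose disjunctive rewriting puts every vocabulary predicate under a negation, so monotonicity and the second-order variable requirements come out exactly as in Example \ref{2COLOR}. The bookkeeping differences are immaterial: you index $T$ by the $n$ variables with values in $\{0,1\}$ and split the clause relation into $POS/NEG/IMP$, while the paper uses a single clause predicate $C$ over the $2n$ literal indices and lets $T$ assign a bit to each literal, adding a conjunct $(T(i,0)\wedge T(n+i,1))\vee(T(i,1)\wedge T(n+i,0))$ to keep a literal and its negation complementary.

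The one genuine gap is that your sentence never verifies that the input formula actually has the claimed polarity. The paper defines $Polar^{(+)}\mbox{-}\twosat$ as the set of 2CNF formulas that are \emph{both} satisfiable \emph{and} of positive polarity, and accordingly its sentence carries a third conjunct $\Phi_3\equiv C(i'',j'')\to (1\leq i''\leq n\wedge 1\leq j''\leq n)\vee(n+1\leq i''\leq 2n\wedge n+1\leq j''\leq 2n)$ whose sole job is to reject instances containing a clause of the wrong shape. With your vocabulary, a mixed clause $\overline{x}\vee y$ cannot even be placed into $POS$ or $NEG$, so the relational structure $\SSS_\psi$ fails to describe such instances; if the decoding silently drops them, your sentence can evaluate to true on a 2CNF formula that lies outside $Polar^{(+)}\mbox{-}\twosat$. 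The fix is cheap --- keep $IMP$ in the vocabulary for the positive-polarity problem too and add the monotone conjunct $\neg IMP(i,j)$ (dually for the negative-polarity problem), or adopt the paper's literal-indexed $C$ with its polarity-checking conjunct --- but as written your final ``iff'' only holds under the unstated promise that $\psi$ is already polarity-restricted.
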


\begin{yproof}
We first consider the case of $Polar^{(+)}\mbox{-}\twosat$.
Let $\psi$ denote any instance of the form $\bigvee_{i=1}^{t}\phi_i$ given to the decision problem $Polar^{(+)}\mbox{-}\twosat$, where each clause $\phi_i$ is either $x\vee y$ or $\overline{x}\vee\overline{y}$ for appropriate variables $x$ and $y$.
Let $V=\{x_1,x_2,\ldots,x_n\}$ denote the set of all variables in $\psi$ and write $\overline{V}$ for the set $\{\overline{x_1},\overline{x_2},\ldots,\overline{x_n}\}$ of negated variables.
To simplify a later argument, we write $z_1,z_2\ldots,z_n$ to denote $x_1,\ldots,x_n$ and $z_{n+1},\ldots,z_{2n}$ for $\overline{x_1},\ldots,\overline{x_n}$, respectively.

We prepare a predicate symbol $C$ and a second-order functional variable $T$. Let $C(i,j)$ express that a pair $(z_i,z_j)$ appears as a clause of $\psi$ in the form $z_i\vee z_j$ or $z_j\vee z_i$. Moreover, $T$ behaves as a truth assignment so that $T(i,1)$ (resp., $T(i,0)$) means that literal $z_i$ is assigned to be true (resp., false).
We set $\Phi\equiv (\exists^{f}T )(\forall i, u, i', j', i'', j'') [\Phi_1(T,i,u) \wedge \Phi_2(C,T,i',j') \wedge \Phi_3(C,i'',j'')]$, where $\Phi_1\equiv  (T(i,0) \wedge T(n+i,1))\vee (T(i,1)\wedge T(n+i,0))$, $\Phi_2\equiv C(i',j') \rightarrow T(i',1)\vee T(j',1)$, and $\Phi_3\equiv C(i'',j'') \rightarrow
(1\leq i''\leq n \wedge 1\leq j''\leq n) \vee (n+1\leq i'' \leq 2n\wedge n+1\leq j''\leq 2n)$.
Notice that $\Phi_2$ is logically equivalent to $\neg C(i',j')\vee T(i',1)\vee T(j',1)$ and that $\Phi_3$ is equivalent to $\neg C(i'',j'')\vee (1\leq i'' \leq n \wedge 1\leq j''\leq n) \vee (n+1\leq i''\leq 2n \wedge n+1\leq j''\leq 2n)$. Therefore, $\Phi$ is monotone.
It is not difficult to see that $\psi$ is satisfiable iff there is a domain structure that satisfies $\Psi$.

For the case of $Polar^{(-)}\mbox{-}\twosat$, we change the above defined formula $\Phi_3$ to $\Phi'_3$ of the form $C(i'',j'') \rightarrow (1\leq i'' \leq n  \wedge n+1\leq j''\leq 2n) \vee (1\leq j'' \leq n \wedge  n+1\leq i''\leq 2n)$. An argument similar to the case of $Polar^{(+)}\mbox{-}\twosat$ also works for $Polar^{(-)}\mbox{-}\twosat$.
\end{yproof}

In the log-space setting, by contrast, positive and negative polarities act quite differently.

\begin{proposition}
$Polar^{(-)}\mbox{-}\twosat$ is in $\dl$ and $Polar^{(+)}\mbox{-}\twosat$ is complete for $\nl$ under $\dl$-m-reductions.
\end{proposition}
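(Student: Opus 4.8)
The plan is to treat the two polarities separately, since they reduce to structurally different connectivity questions. For $Polar^{(-)}\mbox{-}\twosat$, I would exploit the classical observation that a 2CNF clause $\overline{x}\vee y$ is exactly the implication $x\to y$ (equivalently $\overline{y}\to\overline{x}$). So an instance $\psi$ of negative polarity, where every clause has the form $\overline{x}\vee y$ (i.e., $x\to y$ and $\overline{y}\to\overline{x}$), gives rise to an implication digraph on the literals $z_1,\dots,z_{2n}$ which is precisely a \emph{transitive} relation closed under the usual contrapositive symmetry. The key point is that, because of the uniform polarity, this digraph has a special structure: it is (essentially) a partial order together with its order-reversal on the complementary literals, so the implication relation never forces both $x$ and $\overline{x}$ to be derivable from some common literal unless the instance is trivially unsatisfiable. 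Concretely, I would argue that $\psi$ is unsatisfiable iff there is a variable $x$ with $x \leadsto \overline{x}$ and $\overline{x}\leadsto x$ in the implication digraph \emph{restricted to a single "level"}, and that this restricted reachability can be decided in deterministic log space — for negative-polarity instances the implication digraph decomposes into forests/DAGs whose reachability is an $\dl$ question (one can route through the $\mathrm{2COLOR}$-style argument, or directly through the fact that the relevant reachability instances are on graphs of bounded structure solvable by $\mathrm{SL}=\dl$). The cleanest route is probably: show $Polar^{(-)}\mbox{-}\twosat$ $\dl$-m-reduces to undirected $s$-$t$ \emph{non}-connectivity or to $\mathrm{2COLOR}$ (reusing the fact, cited in the excerpt, that $\mathrm{2COLOR}\in\dl$ via $\mathrm{SL}=\dl$), by noting that negative-polarity satisfiability is equivalent to 2-colorability of an auxiliary graph built from the clause graph.

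For $Polar^{(+)}\mbox{-}\twosat$ I would prove $\nl$-hardness by a direct $\dl$-m-reduction from $\dstcon$ (or its complement via Immerman–Szelepcsényi, whichever is more convenient). Given a directed graph $G$ with source $s$ and sink $t$, introduce a Boolean variable $x_v$ for each vertex $v$ and, for each edge $(u,v)$, the positive-polarity clause $\overline{x_u}\vee\overline{x_v}$ would be wrong — instead I want clauses of the form that encode "if $u$ is reached then $v$ is reached," but that clause $\overline{x_u}\vee x_v$ is negative polarity, not positive. So the reduction must be arranged more carefully: a positive-polarity 2CNF over literals $z_1,\dots,z_{2n}$ where every clause is $x\vee y$ or $\overline{x}\vee\overline{y}$ has the property that the "$x\vee y$" clauses connect only \emph{positive} literals among themselves and the "$\overline{x}\vee\overline{y}$" clauses connect only \emph{negative} literals — which is essentially a 2-coloring / independent-set style constraint on two disjoint graphs — so $Polar^{(+)}\mbox{-}\twosat$ membership in $\nl$ is clear, but the hardness requires encoding directed reachability into this bipartite-clause-graph structure. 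I would do this by a standard padding: use implication chains simulated through pairs of clauses, or more simply reduce from the known $\nl$-complete problem $\mathrm{exact}3\dstcon$ (Example~\ref{eaxct3DSTCON-monotone}, also $\nl$-complete) by a gadget that turns each edge into a small fixed positive-polarity 2CNF fragment forcing the propagation of a truth value along the path, and place the unsatisfiability witness at vertex $t$.

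Membership of $Polar^{(+)}\mbox{-}\twosat$ in $\nl$ follows for free since it is a subproblem of $\twosat\in\nl$; the content is the two-sided claim. So the proof is: (1) $Polar^{(-)}\mbox{-}\twosat\in\dl$ via reduction to $\mathrm{2COLOR}$ or undirected connectivity, using $\mathrm{SL}=\dl$; (2) $Polar^{(+)}\mbox{-}\twosat\in\nl$ trivially; (3) $Polar^{(+)}\mbox{-}\twosat$ is $\nl$-hard under $\dl$-m-reductions via a gadget reduction from $\dstcon$ (or $\mathrm{exact}3\dstcon$) that encodes reachability using only clauses of the two allowed forms $x\vee y$ and $\overline{x}\vee\overline{y}$, with the target vertex $t$ forced so that a path $s\leadsto t$ corresponds exactly to unsatisfiability (then complement, invoking closure of $\nl$ under complement \cite{Imm88,Sze88}, if needed to land on satisfiability).

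The main obstacle I expect is step (3): a positive-polarity instance forbids "mixed" clauses $x\vee\overline{y}$, which are exactly the implications one naturally uses to propagate reachability. The trick will be to simulate an implication $x_u\to x_v$ using an \emph{auxiliary} variable $w$ and the pair of clauses $\overline{x_u}\vee\overline{w}$ (negative form) and $w\vee x_v$ (positive form) — wait, that still mixes — so in fact one must route the propagation through the symmetry of 2SAT: the clause $\overline{x}\vee\overline{y}$ is the implication $x\to\overline{y}$, and chaining two such gives $x\to\overline{y}\to z$, i.e., using an even number of negations one recovers a genuine implication $x_u\to x_v$ via an intermediate "shadow" variable. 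I would therefore introduce, for each edge $(u,v)$, a shadow variable $s_{uv}$ with clauses $\overline{x_u}\vee\overline{s_{uv}}$ and $\overline{s_{uv}}\vee\overline{x_v}$ — still both negative! — which forces $x_u\to \overline{s_{uv}}$ and $\overline{s_{uv}}\to \overline{x_v}$, hence $x_u\to\overline{x_v}$, which is the wrong direction. Getting the \emph{correct} direction out of purely-negative-plus-purely-positive clauses is precisely the delicate combinatorial core, and I would resolve it by working with the contrapositive bookkeeping carefully and, if necessary, by relabeling variables along a bipartition of the vertex set of $G$ (a layered/levelled version of the graph, available since $\mathrm{exact}3\dstcon$ instances can be taken layered) so that every edge goes between layers of opposite parity and thus every forced implication is realizable with the allowed clause shapes. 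Verifying that this relabeling is log-space computable and that it exactly preserves $s$-$t$ reachability is the step I expect to require the most care.
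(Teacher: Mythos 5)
There are two genuine gaps, one in each half. For $Polar^{(-)}\mbox{-}\twosat$ you overlook the one-line argument: by definition every negative-polarity clause has the form $\overline{x}\vee y$ or $x\vee\overline{y}$, i.e., it contains exactly one positive literal, so the all-true assignment satisfies every clause and every syntactically well-formed instance is a yes-instance; membership in $\dl$ is therefore immediate (this is precisely the paper's argument, phrased via the implication reading of the clauses). Your proposed detour through implication-digraph reachability, $\mathrm{2COLOR}$, and $\mathrm{SL}=\dl$ is not only unnecessary but unsupported: you assert that the relevant reachability instances live on ``forests/DAGs whose reachability is an $\dl$ question,'' yet directed reachability is $\nl$-complete even on DAGs, so nothing in your sketch actually places the problem in $\dl$, and you never establish the satisfiability criterion you would need.

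For the $\nl$-hardness of $Polar^{(+)}\mbox{-}\twosat$ you come within a hair of the paper's reduction and then talk yourself out of it. The paper reduces from $\twosat$ (which is $\nl$-complete) by replacing each mixed clause $x\vee\overline{y}$ with $(x\vee z)\wedge(\overline{z}\vee\overline{y})$ for a fresh variable $z$; both new clauses are homogeneous, hence allowed in a positive-polarity instance, and the conjunction is satisfiable exactly when the original clause is. You write down essentially the same gadget, $(\overline{x_u}\vee\overline{w})\wedge(w\vee x_v)$, and then reject it with ``wait, that still mixes'' --- but it does not mix: positive polarity requires each clause \emph{individually} to be of the form $x\vee y$ or $\overline{x}\vee\overline{y}$, and both of your clauses have exactly that form, while chaining them yields $x_u\to\overline{w}\to x_v$, the implication you wanted. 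Having discarded the correct gadget, you drift into a wrong-direction ``shadow variable'' construction and an unverified layered-relabeling idea, so the hardness half is left without an actual reduction.
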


\begin{yproof}
Notice by Lemma \ref{polar-MonoSNL} that $Polar^{(+)}\mbox{-}\twosat\in\nl$. To prove that $\dl$-m-hardness of $Polar^{(+)}\mbox{-}\twosat$ for $\nl$, we wish to reduce $\twosat$ to $Polar^{(+)}\mbox{-}\twosat$ by $\dl$-m-reductions.
Let $\psi$ be any instance of the form $\bigwedge_{i=1}^{k}\phi_i$ given to the decision problem $\twosat$ with $\phi_i\equiv z_{i1}\vee z_{i2}$, where each $z_{ij}$ is a literal. For each clause $\phi_i$, if it has the form of either $x\vee y$ or $\bar{x}\vee \bar{y}$, then we set $\tilde{\phi}_i$ to be $\phi_i$. When $\phi_i$ is of the form $x\vee \bar{y}$, we prepare a new variable $z$ and define $\tilde{\phi}_i^{(1)}\equiv x\vee z$ and $\tilde{\phi}_i^{(2)}\equiv \bar{z}\vee \bar{y}$. We then set  $\tilde{\phi}\equiv \tilde{\phi}_i^{(1)}\wedge \tilde{\phi}_i^{(2)}$. The case of $\bar{x}\vee y$ is similarly treated. Notice that $\phi_i$ is satisfiable exactly when so is $\tilde{\phi}_i$.
Since the formula $\bigwedge_{i=1}^{k}\tilde{\phi}_i$ has positive polarity, it follows that $\phi\in \twosat$ iff $\tilde{\phi}\in Polar^{(+)}\mbox{-}\twosat$. This means that $\twosat \Lreduces Polar^{(+)}\mbox{-}\twosat$.

Next, we wish to solve $Polar^{(-)}\mbox{-}\twosat$. If we replace $\bar{x}\vee y$ to $x\rightarrow y$ and $x\vee \bar{y}$ to $y\rightarrow x$, then we obtain a directed graph whose vertices are labeled by variables. In this case, if we assign $1$ (true) to all variables, then we can make the instance formula satisfiable. Thus, $Polar^{(+)}\mbox{-}\twosat$ falls in $\dl$.
\end{yproof}


Now, we look into $\mathrm{SNL}_{\omega}$ and its monotone version, $\mathrm{MonoSNL}_{\omega}$. These complexity classes contain quite natural restrictions of $\mathrm{SNL}$ and $\mathrm{MonoSNL}$.
To see this fact, let us first recall that any second-order functional variable, say, $P$ used in an $\snl$ sentence acts as a function mapping natural numbers to certain ``objects'' specified by an underlying domain structure $\DD$.
Here, we consider the special case where these objects are taken from the binary set $\{0,1\}$.
In other words, all second-order functional variables represent functions from natural numbers to $\{0,1\}$.
This makes $P$ behave like a single argument predicate by interpreting $P(\cdot,1)$ (resp., $P(\cdot,0)$) as ``true'' (resp., ``false''). We call any SNL sentence with this restriction a \emph{binary SNL sentence}.

\begin{definition}\label{def-BSNL}
The notation $\mathrm{BSNL}$ expresses the subclass of $\snl$ characterized by binary SNL sentences. With the use of $\mathrm{BSNL}$  in place of $\snl$, we define $\mathrm{MonoBSNL}$ from $\mathrm{MonoSNL}$.
\end{definition}

\begin{lemma}
$\mathrm{BSNL} \subseteq \snl_{\omega}$ and $\mathrm{MonoBSNL} \subseteq \mathrm{MonoSNL}_{\omega}$.
\end{lemma}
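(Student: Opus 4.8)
The plan is to exhibit, for every problem in $\mathrm{BSNL}$ (respectively $\mathrm{MonoBSNL}$), an SNL sentence that expresses it and meets the $\omega$-requirement of Definition~\ref{def:SNL-omega} (respectively, one that is moreover monotone). The key observation is that in a binary SNL sentence each second-order functional variable $P_h$ ranges, under its domain structure, over $[e_h]\times\{0,1\}$, so the hidden sentence $Func(P_h)$ is logically equivalent to the universally quantified, quantifier-free statement ``for every index $z$, exactly one of $P_h(z,0)$ and $P_h(z,1)$ holds.'' Such a statement contains no existential quantifier, so it can be absorbed directly into the matrix of the sentence, which is precisely what membership in $\snl_{\omega}$ asks for.

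Concretely, I would start from $A\in\mathrm{BSNL}$ with a binary SNL sentence $\Phi\equiv\exists^f P_1\cdots\exists^f P_l\,\forall\boldvec{i}\,\forall\boldvec{y}\,[\bigwedge_{j=1}^{t}\psi_j]$ expressing $A$ via relational structures $\SSS_x$ and domain structures $\DD_x$. For each $h\in[l]$ introduce a fresh first-order variable $z_h$ with scope $[e_h]$ and set $\chi_h\equiv(P_h(z_h,0)\vee P_h(z_h,1))\wedge\neg(P_h(z_h,0)\wedge P_h(z_h,1))$, and define $\Phi'\equiv\exists^f P_1\cdots\exists^f P_l\,\forall\boldvec{i}\,\forall\boldvec{y}\,\forall z_1\cdots\forall z_l\,[\bigwedge_{j=1}^{t}\psi_j\wedge\bigwedge_{h=1}^{l}\chi_h]$, keeping $\SSS_x$ and extending $\DD_x$ by the scopes $(z_h,[e_h])$. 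Two routine checks come first. (i) $\Phi'$ is a legitimate SNL sentence: each $\chi_h$ mentions only the atoms $P_h(z_h,0),P_h(z_h,1)$ with a common first index, so the second-order variable requirement~(i) of Definition~\ref{SNL-sentence} holds with $a=0$; the disjunctive normal form of $\chi_h$ is $(P_h(z_h,0)\wedge\neg P_h(z_h,1))\vee(P_h(z_h,1)\wedge\neg P_h(z_h,0))$, exactly two disjuncts of the shape allowed by requirement~(ii); and the $z_h$ are fresh and pairwise distinct, so no two conjuncts share a first-order variable. (ii) $\Phi'$ still syntactically expresses $A$ on $(\SSS_x,\DD_x)$: any tuple admitted by the functional quantifiers $\exists^f$ already satisfies $Func(P_h)$ and hence $\chi_h$, so $\Phi'$ holds precisely when $\Phi$ holds.

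It then remains to verify the $\snl_{\omega}$ condition for $\Phi'$. One implication is trivial (dropping the conjuncts $Func(P_h)$). For the converse, suppose relations $P_1,\dots,P_l$ over $[e_h]\times\{0,1\}$ — with no functionality assumed — satisfy $\forall\boldvec{i}\,\forall\boldvec{y}\,\forall z_1\cdots\forall z_l\,[\bigwedge_{j}\psi_j\wedge\bigwedge_{h}\chi_h]$; since the second coordinate is pinned to $\{0,1\}$, the part $\forall z_h\,\chi_h$ says exactly that $P_h$ is the graph of a function $[e_h]\to\{0,1\}$, i.e., that $Func(P_h)$ holds, so the same tuple witnesses the version with $\bigwedge_{h}Func(P_h)$ conjoined. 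This gives $A\in\snl_{\omega}$, hence $\mathrm{BSNL}\subseteq\snl_{\omega}$. For the monotone part, I would note that each added conjunct $\chi_h$ contains no input predicate symbol $S_j$, so passing $\Phi'$ to conjunctive normal form produces no new occurrence of any $S_j$; every occurrence of an $S_j$ in the CNF of $\Phi'$ comes from $\Phi$ and is negative by hypothesis, so $\Phi'$ is monotone, yielding $\mathrm{MonoBSNL}\subseteq\mathrm{MonoSNL}_{\omega}$. I do not expect a real obstacle here: the only delicate point is bookkeeping — confirming that $\forall z_h\,\chi_h$ is literally logically equivalent to $Func(P_h)$ once the domain restricts the second coordinate to $\{0,1\}$, and that $\chi_h$ together with the enlarged universal prefix respects all syntactic restrictions in the definition of SNL sentences.
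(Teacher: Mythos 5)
Your proposal is correct and follows essentially the same route as the paper's proof: both express $Func(P_h)$ for a binary functional variable as the universally quantified, quantifier-free formula $(P_h(z,0)\vee P_h(z,1))\wedge(\neg P_h(z,0)\vee\neg P_h(z,1))$ and absorb it into the matrix with fresh universal variables, then check the second-order variable requirements. Your version is somewhat more explicit about verifying the $\omega$-equivalence and the monotone case, but the underlying argument is the same.
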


\begin{yproof}
Let $L$ denote any language in $\mathrm{BSNL}$ and take a binary $\snl$ sentence $\Phi$ that syntactically expresses $L$. This $\Phi$ has the form $\exists^fP_1\cdots\exists^fP_l \forall\boldvec{i}\forall\boldvec{j}[  \psi(P_1,\ldots,P_l,\boldvec{i},\boldvec{j}) ]$. Notice that each  second-order functional variable $P_h$ represents a function mapping natural numbers to $\{0,1\}$. Now, we consider the logically equivalent sentence $\tilde{\Phi}\equiv \exists^fP_1\cdots \exists^f P_l\forall\boldvec{i}\forall\boldvec{j} [ \psi(P_1,\ldots,P_l,\boldvec{i},\boldvec{j}) \wedge (\bigwedge_{h=1}^{l} Func(P_h))]$.
Note that the sentence $Func(P_h)$ is logically expressed as $(\forall a)[(P_h(a,0)\vee P_h(a,1))\wedge (\neg P_h(a,0)\vee \neg P_h(a,1))]$.
Hence, $\tilde{\Phi}$ can be rewritten as $\exists^fP_1\cdots \exists^fP_l\forall\boldvec{i}\forall\boldvec{j} \forall a_1\cdots \forall a_l [ \psi(P_1,\ldots,P_l,\boldvec{i},\boldvec{j}) \wedge (\bigwedge_{h=1}^{l} \xi_h(a_h))]$, where $\xi_h(a_h) \equiv (P_h(a_h,0)\vee P_h(a_h,1))\wedge (\neg P_h(a_h,0)\vee \neg P_h(a_h,1))$. This formula  satisfies the second-order variable requirements. We thus obtain $\mathrm{BSNL}\subseteq \snl_{\omega}$.

The last part of the lemma follows similarly.
\end{yproof}

\subsection{Relationships to Constraint Satisfaction Problems}

Feder and Vardi \cite{FV99} demonstrated that every problem in MMSNP is polynomial-time equivalent to a constraint satisfaction problem (CSP).
We fix a set $V$ fo ``objects'' and a set $\Gamma$ of constraint functions $f$ mapping $V^k$ to $\{0,1\}$, where $k$ is the arity of $f$.
A \emph{CSP over $(V,\Gamma)$} consists of the following items:  a set $X=\{x_1,x_2,\ldots,x_n\}$ of variables and a set $C$ of constraints of the form $(f,(x_{i_1},x_{i_2},\ldots,x_{i_k}))$ with $f\in \Gamma$ and $x_{i_1},\ldots,x_{i_k}\in X$.
An \emph{assignment} $\rho$ is a function from $X$ to $V$. Given such an assignment $\rho$, we evaluate each constraint $(f,(x_{i_1},x_{i_2},\ldots,x_{i_k}))$ by computing the Boolean value  $f(\rho(x_{i_1}),\rho(x_{i_2}),\ldots,\rho(x_{i_k}))$.
A CSP $(X,C)$ is \emph{satisfiable} if there exists an assignment $\rho$ that makes all constraints of the CSP satisfied.
A CSP is said to be \emph{of arity at most $d$} if all of its constraints have arity at most $d$.
Given a pair $(V,\Gamma)$, we write $CSP_d(V,\Gamma)$ for the collection of all satisfiable CSPs over $(V,\Gamma)$ of arity at most $d$.
When $V=\{0,1\}$, in particular, a CSP over $(V,\Gamma)$ is called a \emph{binary CSP over $\Gamma$}.
To express the subproblem of $CSP_d(\Gamma)$ restricted to only binary CSPs over $\Gamma$, we use the special notation of $BCSP_d(\Gamma)$.

Let $\Gamma_{OR}$ denote the set of four constraint functions $f$ defined by setting  $f(x,y)$ to be one of $x\vee y$, $\bar{x}\vee y$, $x\vee \bar{y}$, and $\bar{x}\vee \bar{y}$ for two variables $x$ and $y$.
We first demonstrate that
$\mathrm{BCSP}_2(\Gamma_{OR})$ characterizes $\nl$.

\begin{proposition}\label{NL-vs-BCSP}
$\nl =\; \Lreduces\!(BCSP_2(\Gamma_{OR}))$.
\end{proposition}

Before proving this proposition, we claim the following basic properties.

\begin{lemma}\label{CSP-monoSNL}
Assume that $V$ and $\Gamma$ are finite sets. All CSPs over $(V,\Gamma)$ of arity at most $2$ belong to $\mathrm{MonoSNL}$ and all binary CSPs over $\Gamma$ of arity at most $2$ belong to $\mathrm{MonoBSNL}$.
\end{lemma}

\begin{yproof}
Given a CSP $(X,\hat{C})$ over $(V,\Gamma)$ of arity at most $2$, let $X$ denote a set of variables $x_1,x_2,\ldots,x_n$ and, for each $d\in\{1,2\}$,  let $\hat{C}_d$ denote a collection of constraints of arity exactly $d$ in $\hat{C}$.
Clearly, $\hat{C}$ coincides with $\hat{C}_1\cup \hat{C}_2$.

For simplicity, we express each variable $x_i$ as $i$.
For each index $d\in\{1,2\}$, we introduce a predicate symbol $C_d$ so that  $\models C_d(f,i_1,\ldots,i_d)$ iff $(f,(x_{i_1},\ldots,x_{i_d}))\in \hat{C}_d$.
We also introduce two more predicate symbols $S_1$ and $S_2$ so that, for any $d\in\{1,2\}$ and for any tuple $(v_1,\ldots,v_d)\in V^d$,  $S_d(f,v_1,\ldots,v_d)$ is true iff $f(v_1,\ldots,v_d) =0$. Let us consider the following sentence: $\Phi\equiv \exists^fP\forall f \forall i \forall v \forall i_1 \forall i_2 \forall v_1\forall v_2 [\Phi_1(P,C_1,S_1,f,i,v) \wedge \Phi_2(P,C_2,S_2,f,i_1,i_2,v_1,v_2)]$,
where $\Phi_1\equiv 1\leq i\leq n \wedge C_1(f,i)\wedge P(i,v)\to \neg S_1(f,v)$
and $\Phi_2\equiv 1\leq i_1\leq n\wedge 1\leq i_2\leq n\wedge C_2(f,i_1,i_2)\wedge P(i_1,v_1)\wedge P(i_2,v_2)\to \neg S_2(f,v_1,v_2)$. Here, $P$ represents an assignment, say, $\rho$ such that $P(i,v)$ is true iff $\rho(i)=v$.

The formula $\Phi$ is obviously a monotone $\snl$ sentence because $\Phi_1$ is equivalent to $\neg(1\leq i\leq n) \vee \neg C_1(f,i)\vee \neg P(i,v) \vee \neg S_1(f,v)$ and $\Phi_2$ is equivalent to $\neg(1\leq i_1\leq n)\vee \neg(1\leq i_2\leq n) \vee \neg C_2(f,i_1,i_2)\vee \neg P(i_1,v_1)\vee \neg P(i_2,v_2) \vee \neg S_2(f,v_1,v_2)$. It also follows by the definition of $\Phi$ that $\Phi$ is true iff $(X,\hat{C})$ is satisfiable.

For the second part of the proposition, we start with a binary CSP of arity at most $2$.
In this case, we need to replace $P(i,v)$ in the above argument   by $P(i,v)\wedge 0\leq v\leq 1$. Similarly, $P(i_1,v_1)$ and $P(i_2,v_2)$ should be replaced. A similar argument as above proves the desired second part.
\end{yproof}

\begin{proofof}{Proposition \ref{NL-vs-BCSP}}
By Lemma \ref{CSP-monoSNL}, $BCSP_2(\Gamma)$ is contained in $\mathrm{MonoSNL}$ ($\subseteq \nl$) for any $\Gamma$. In particular, $BCSP_2(\Gamma_{OR})$ belongs to $\nl$.
It therefore suffices to prove that $\twosat \Lreduces BCSP_2(\Gamma_{OR})$.
Let $\phi$ denote any 2CNF formula built from a variable set $X=\{x_1,x_2,\ldots,x_n\}$ and a clause set $C\subseteq (X\cup \bar{X})^2$, where $\bar{X}=\{\overline{x_1},\overline{x_2},\ldots,\overline{x_n}\}$.  Whenever each clause in $C$ is a ``single'' literal, say, $z$, we replace it with the clause $z\vee z$. It thus possible for us to assume that $\phi$ has the form $\bigwedge_{(z_{i_1},z_{i_2})\in C}(z_{i_1}\vee z_{i_2})$, where $z_{i_1}$ and $z_{i_2}$ are literals. Given a clause $z_{i_1}\vee z_{i_2}$, we choose $f\in\Gamma_{OR}$ such that $f(x_{i_1},x_{i_2})$ equals $z_{i_1}\vee z_{i_2}$, where $x_{i_1}$ and $x_{i_2}$ are the underlying variables of $z_{i_1}$ and $z_{i_2}$, respectively. Let $C$ denote the  collection $C$ of all such constraints $(f,(x_{i_1},x_{i_2}))$.

Let $I_{\phi}$ denote the CSP made up from $X$ and $C$.
It then follows that $\phi$ is satisfiable iff $I_{\phi}$ is satisfiable. We thus conclude that $\twosat \Lreduces BCSP_2(\Gamma_{OR})$. Since $\twosat$ is $\dl$-m-complete for $\nl$, so is $BCSP_2(\Gamma_{OR})$.
Therefore, we obtain $\nl= \;\Lreduces\!(BCSP_2(\Gamma_{OR}))$.
\end{proofof}


We now turn to the dichotomy theorem for $\mathrm{MonoBSNL}$. Recall that the dichotomy theorem for $\CC$ means that every decision problem in $\CC$ is either in $\dl$ or $\nl$-complete.
It is not yet known that $\mathrm{MonoBSNL}$ enjoys the dichotomy theorem; in sharp contrast, however, the dichotomy theorem holds for $BCSP_2(\Gamma)$.  Allender \etal~\cite{ABI+09} showed that, for any set $\Gamma$, $BCSP(\Gamma)$ is $\mathrm{AC}^0$-isomorphic either to $0\Sigma^*$ or to the ``standard'' complete problem (under $\mathrm{AC}^0$-reductions) for one of the following complexity classes: $\np$, $\p$, $\oplus\dl$, $\nl$, and $\dl$. In our restricted case, from their result, we obtain the following statement.

\begin{proposition}\label{BCSP-dichotomy}
For any set $\Gamma$, $BCSP_2(\Gamma)$ is either in $\dl$ or $\nl$-complete.
\end{proposition}


Inspired by the connection between $\mathrm{MMSNP}$ and CSPs, we wish to demonstrate a close relationship between $\mathrm{MonoBSNL}$ and $BCSP_2(\Gamma)$.

\begin{theorem}\label{MonoBSNL-to-BCSP}
For any decision problem $\Xi$ in $\mathrm{MonoBSNL}$, there exists a finite set $\Gamma$ such that $\Xi$ is $\dl$-m-reducible to $BCSP_2(\Gamma)$.
\end{theorem}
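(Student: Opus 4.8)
The plan is to extract from $\Xi$ a direct, ``syntactic'' logarithmic-space reduction to a binary constraint satisfaction problem. Fix a monotone binary SNL sentence $\Phi$ that syntactically expresses $\Xi$, written in the normal form of Definition~\ref{SNL-sentence} as $\Phi\equiv\exists^fP_1\cdots\exists^fP_l\,\forall\boldvec{i}\,\forall\boldvec{y}\,[\bigwedge_{j=1}^{t}\psi_j]$. Given an instance $x$, with its relational structure $\SSS_x$ and domain structure $\DD_x$, each $P_k$ ranges (via $\exists^f$) over functions from a number interval $[e_k]$, prescribed by $\DD_x$, into $\{0,1\}$; so I would introduce, for $k\in[l]$ and $a\in[e_k]$, a Boolean CSP variable $p_{k,a}$ with intended meaning ``$P_k(a,1)$ holds'' (thus ``$P_k(a,0)$ holds'' is encoded by $\neg p_{k,a}$). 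Since every universe occurring in $\SSS_x$ has size $O(|x|)$ and all ranges prescribed by $\DD_x$ are polynomially bounded in $|x|$, the number of such variables is polynomially bounded.

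The constraints of the instance $I_x$ arise by instantiating the universal part of $\Phi$. First I would put each $\psi_j$ into conjunctive normal form; here the second-order variable requirement~(ii) of Definition~\ref{SNL-sentence} is what makes everything work, since $\psi_j$ is logically equivalent to a finite disjunction with \emph{at most two} disjuncts containing second-order atoms, so distributing into a CNF produces clauses with \emph{at most two} second-order literals each (one drawn from each such disjunct), the remaining literals being built from the input predicates $S_i$, the constants, equality and $\leq$; monotonicity guarantees that all $S_i$-literals in this CNF are negative. Then, for every assignment $\boldvec{b}$ of domain values of $\DD_x$ to the first-order variables $\boldvec{i},\boldvec{y}$, every $j\in[t]$, and every CNF clause $\gamma$ of $\psi_j$, I would evaluate on $\SSS_x$ all ground literals of $\gamma$ that do not mention a second-order variable: if one of them is \emph{true}, $\gamma$ is vacuously satisfied and is discarded; if $\gamma$ contains no second-order atom and all its other literals are \emph{false}, then $\Phi$ cannot hold on $(\SSS_x,\DD_x)$ and I output a fixed trivially unsatisfiable instance; otherwise $\gamma$ collapses to a disjunction of at most two literals over the $p_{k,a}$, i.e.\ to a Boolean constraint of arity at most $2$. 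The finitely many constraint functions that can occur this way --- the four ``OR-type'' functions of $\Gamma_{OR}$, the two unary functions (a literal and its negation, arising when the two second-order atoms share a variable or one disjunct contributes nothing), and an arity-$0$ constant ``false'' --- are collected into the finite set $\Gamma$, and $I_x$ is declared to be the resulting CSP instance over $(\{0,1\},\Gamma)$.

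It then remains to verify correctness and the complexity bound. Because $\Phi$ is a \emph{binary} SNL sentence, interpretations of $(P_1,\ldots,P_l)$ as functions into $\{0,1\}$ are in bijection with assignments to $\{p_{k,a}\}$, and with the encoding above the hidden conjuncts $Func(P_k)$ unfold to tautologies and so impose nothing; under this bijection a clause $\gamma$ instantiated at $\boldvec{b}$ holds exactly when the constraint built from it holds, so $I_x$ is satisfiable iff every $\psi_j$ can be made true for every $\boldvec{b}$, i.e.\ iff $(\SSS_x,\DD_x)\models\Phi$, i.e.\ iff $x\in\Xi$. For the resource bound, there are only polynomially many tuples $\boldvec{b}$ (each first-order variable ranges over a universe of polynomial size and there are constantly many of them), constantly many $j$, and constantly many clauses $\gamma$ per $\psi_j$; enumerating all $(\boldvec{b},j,\gamma)$, evaluating the ground non-second-order literals of $\gamma$ by reading off $\SSS_x$ from $x$, and emitting the corresponding arity-$\leq 2$ constraint on the appropriate indices $p_{k,a}$ can all be done in $O(\log|x|)$ space, so $x\mapsto I_x$ belongs to $\fl$.

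The main obstacle is the CNF-normalization step: one must check in full generality that requirement~(ii) --- which only says that the two second-order disjuncts of $\psi_j$ are \emph{conjunctions} of second-order atoms glued to a second-order-free part $R$ --- indeed forces every CNF clause to carry at most two second-order literals, and that after instantiating $\boldvec{i},\boldvec{y}$ and evaluating the $S_i$- and order-literals each surviving clause is a genuine arity-$\leq 2$ constraint over a \emph{fixed} finite $\Gamma$; this requires a uniform treatment of the degenerate collapses (the two literals coinciding, one disjunct vanishing, an empty clause signalling unsatisfiability), and it is here that requirement~(i) --- confining the arguments of each $P_k$ to a bounded window $P_k(i),P_k(suc(i)),\ldots,P_k(suc^{a}(i))$ --- is needed, to ensure that instantiating $\boldvec{i}$ really pins each second-order atom down to a literal over a single variable $p_{k,a'}$. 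The rest is routine bookkeeping.
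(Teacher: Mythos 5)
Your proposal is correct and follows essentially the same route as the paper's proof: instantiate the first-order variables over the polynomial-size domains, convert each $\psi_j$ to CNF, invoke the second-order variable requirement (ii) to bound each clause to at most two second-order literals, evaluate the remaining ground literals against $\SSS_x$, and read off a binary CSP over a fixed finite $\Gamma$. If anything, your write-up handles the general case (degenerate clauses, the encoding of $Func(P_k)$, the explicit description of $\Gamma$) more completely than the paper, which carries out the details only for the concrete example of $\mathrm{2COLOR}$.
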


\begin{yproof}
Given a decision problem $\Xi$ in $\mathrm{MonoBSNL}$, we take a binary $\snl$ sentence $\Phi \equiv \exists^f\boldvec{P} \forall \boldvec{i} \forall \boldvec{y} [ \bigwedge_{j=1}^{t} \psi_j(\boldvec{P},\boldvec{i},\boldvec{y})]$, as stated in Definition \ref{SNL-sentence}, provided that $\boldvec{P}=(P_1,P_2,\ldots,P_l)$.
We then construct
$\bigwedge_{j=1}^{t} \bigwedge_{\underline{\boldvec{i}},\underline{\boldvec{y}}} \psi_j (\boldvec{P},\underline{\boldvec{i}},\underline{\boldvec{y}})$ by assigning all possible values $(\underline{\boldvec{i}},\underline{\boldvec{y}})$ sequentially one by one.
We wish to convert each of the obtained formulas, $\psi_j(\boldvec{P},\underline{\boldvec{i}},\underline{\boldvec{y}})$, into an equivalent CNF formula. By the second-order variable requirements (i)--(ii), each conjunct (in the obtained formula) has at most two appearances of $P_{k}(\cdot)$ or $\neg P_{k'}(\cdot)$. We then evaluate all the other predicates (except for $P_k(\cdot)$ and $\neg P_{k'}(\cdot)$) to be either $0$ (false) or $1$ (true), resulting in a formula consisting only of $P_k(\cdot)$ and $\neg P_{k'}(\cdot)$.

As a concrete example, recall $\mathrm{2COLOR}$ in Example \ref{2COLOR}. In this case, $\psi_j(\boldvec{P},\underline{\boldvec{i}},\underline{\boldvec{y}})$ is either of the following forms:
$\neg C(i,d) \vee 0\leq d\leq 1$ or $\neg E(i',d')\vee \neg C(i',d')\vee \neg C(j',e')\vee d'\neq e'$. Since $(i,d,i',d',j',e')$ are considered to be fixed in $\psi_j(\boldvec{P},\underline{\boldvec{i}},\underline{\boldvec{y}})$, the  subformulas, $0\leq d\leq 1$, $\neg E(i',d')$, and $d'\neq e'$, are all evaluated to be either $0$ or $1$. We then remove all such subformulas evaluated to be $0$ and remove the entire  $\psi_j(\boldvec{P},\underline{\boldvec{i}},\underline{\boldvec{y}})$  if its  evaluation is $1$. We then rephrase $\psi_j(\boldvec{P},\underline{\boldvec{i}},\underline{\boldvec{y}})$ and obtain $\neg C(i,d)$ or $\neg C(i',d')\vee \neg C(j',e')$.
For each pair $(i,d)$, we introduce a new variable $x_{i,d}$ to represent the value of $C(i,d)$.
We then define $I_n$ to be the set $\{(g,(x_{i,d},x_{j,e}))\mid i,j\in[n],(i,j)\in E_G, d=e\}$, where $g(x,y) = OR(\bar{x},\bar{y})$ for two variables $x$ and $y$ and $\bar{x}$ (resp., $\bar{y}$) denotes the negation of $x$ (resp., $y$).
It then follows that $\Phi$ is satisfiable iff $I_n$ is in $BCSP_2(\Gamma_{OR})$. This implies that $\mathrm{2COLOR}$ is $\dl$-m-reducible to $\mathrm{BCSP}_2(\Gamma_{OR})$.
\end{yproof}

\section{Optimization Variant of SNL}

We turn our attention to another variant of SNL in the field of optimization problems.
In general, an \emph{optimization problem} has the form $(op,I,S,cost)$, where $op\in\{\max,\min\}$, $I$ is a set of instances, $S$ is a set of (feasible) solutions, an $cost:I\times S\to\nat$ is a (partial) cost function. As a concrete example of optimization problems,
$\mathrm{MAX}\mbox{-}\mathrm{2SAT}$ asks to find the truth assignment of a given 2CNF Boolean formula that maximizes the number of satisfying clauses of the formula.
Despite a wide-range of studies on $\np$ optimization problems, there have been a few  works on logarithmic-space optimization \cite{Tan07,Yam13,Yam13b}.

\subsection{Maximal SNL (or MAXSNL)}\label{sec:maximal}

Papadimitriou and Yannakakis \cite{PY91} were the first to study the computational complexity of an optimization version of $\snp$, called $\mathrm{MAXSNP}$. In a similar vein, we intend to study an optimization version of $\snl$ for   promoting the better understandings of $\snl$.
Along this line of studies, we further explore $\snl$ optimization problems and log-space approximation schemes based on $\snl$.

\begin{definition}\label{def-MAXSNL}
We define $\mathrm{MAXSNL}$ to be composed of all maximization problems that satisfy the following condition: there exist $\snl$ sentences $\Phi$ of the form given in Definition \ref{SNL-sentence} with relational and domain structures $\SSS$ and $\DD$ for $\Phi$ for which
each maximization problem asks to find a solution $\boldvec{P} = (P_1,P_2,\ldots,P_l)$ satisfying $\bigwedge_{i=1}^{l} Func(P_i)$ that maximizes the value
$\Pi(\boldvec{\underline{P}}) =
|\{(\boldvec{\underline{i}},\boldvec{\underline{y}})\mid \bigwedge_{j=1}^{t}\psi_j(\boldvec{\underline{P}}, \boldvec{\underline{i}},\boldvec{\underline{y}}, \boldvec{\underline{S}}, \boldvec{\underline{c}}) \}|$ of the objective function,
where $\psi_j$, $\boldvec{P}, \boldvec{i}, \boldvec{y},\boldvec{S}, \boldvec{c}$
are sequences of variables, predicate and constant symbols appearing in $\Phi$, provided that $\boldvec{i}, \boldvec{y}$ are only first-order variables that appear in $\bigwedge_{j=1}^{t}\psi_j$ and $\boldvec{\underline{i}}, \boldvec{\underline{y}}, \boldvec{\underline{P}}, \boldvec{\underline{S}}, \boldvec{\underline{c}}$ are elements in $\SSS$ and $\DD$ associated respectively with $\boldvec{i}, \boldvec{y}, \boldvec{P}, \boldvec{S}, \boldvec{c}$.
It is important to remember that each $\psi_j$ must satisfy the second-order variable requirements (i)--(ii).
\end{definition}

It is obvious by definition that $\mathrm{MAXSNL}_{\omega} \subseteq \mathrm{MAXSNL} \subseteq \mathrm{MAXSNP}$. Let us see three simple examples of problems in $\mathrm{MAXSNL}$.
The parameterized decision problem $(\mathrm{2SAT},m_{ver})$ was shown  in \cite{Yam17a} to be in  $\para\snl$  by constructing an appropriate $\snl$ sentence for $(\mathrm{2SAT},m_{ver})$, where $m_{ver}(\phi)$ denotes the total number of variables appearing in a 2CNF Boolean formula $\phi$. From this $\snl$ sentence, by carefully eliminating the presence of $m_{ver}$, we can conclude that $\mathrm{MAX}\mbox{-}\mathrm{2SAT}$ belongs to  $\mathrm{MAXSNL}$.
As another concrete  example of problems in $\mathrm{MAXSNL}$, we consider $\mathrm{MAX}\mbox{-}\mathrm{CUT}$ whose goal is to find a set $S$ of vertices of a given undirected graph $G=(V,E)$ for which the number of edges crossing between $S$ and  $V-S$ is maximized.

\begin{yexample}\label{MAX-CUT-in-SNL}
{\rm
As an instance of $\mathrm{MAX}\mbox{-}\mathrm{CUT}$, we take an arbitrary undirected graph $G=(V_G,E_G)$ with $V_G=[n]$ and $E_G\subseteq [n]\times[n]$ for a number $n\in\nat^{+}$.
We introduce a predicate symbol $E$ for which $E(i,j)$ means that $(i,j)$ is an edge in $E_G$.
Given a second-order functional variable $P$, $P(i,1)$ (resp, $P(i,0)$) indicates that vertex $i$ belongs to a solution set $S$ (resp., $V-S$).
Consider the following $\snl$ sentence indicating the existence of such a set $S$: $\Phi\equiv \exists^fP\forall i\forall j [\phi_1(P,i)\wedge \phi_1(P,j) \wedge (E(i,j)\to \phi_2(P,E,i,j))]$, where
$\phi_1\equiv  (P(i,0)\vee P(i,1))\wedge \neg(P(i,0)\wedge P(i,1))$ and $\phi_2\equiv  (P(i,1)\wedge P(j,0))\vee (P(i,0)\wedge P(j,1))$. Intuitively, $\phi_1(P,i)$ means that $P$ forms a ``function'' on input $i$ and $\phi_2(P,E,i,j)$ means that either one of $i$ and $j$ (or both) belongs to $S$, provided that $(i,j)\in E_G$.

Since $Func(P)\equiv \forall i[\phi_1(P,i)]$, it then follows that $\Phi$ is true iff $\exists^fP\forall i\forall j[Func(P)\wedge \phi_2(P,E,i,j)]$. Thus, $\Phi$ is also an $\snl_{\omega}$ sentence.
The objective function $\Pi(P)$ for $\mathrm{MAX}\mbox{-}\mathrm{CUT}$  is expressed as $|\{(i,j)\mid \phi_1(P,i)\wedge \phi_1(P,j)\wedge E(i,j)\wedge \phi_2(P,E,i,j)\}|$.
We thus conclude that $\mathrm{MAX}\mbox{-}\mathrm{CUT}$ belongs to $\mathrm{MAXSNL}$.
}
\end{yexample}

We discuss another simple example, called $\mathrm{MAX}\mbox{-}\mathrm{UK}$, which is a maximization version of $\mathrm{UK}$ (discussed in Example \ref{example-UK}) of the following specific form: one asks  to find a subset $S\subseteq[n]$ that maximizes the value $\sum_{i\in S}a_i$, not exceeding  the upper bound $b$, for any given series $(1^b,1^{a_1},1^{a_2},\ldots,1^{a_n})$ of unary strings with $b,a_1,a_2,\ldots,a_n\in\nat^{+}$.

\begin{yexample}\label{example:MAXUK}
{\rm
Now, we assert that $\mathrm{MAX}\mbox{-}\mathrm{UK}$ belongs to  $\mathrm{MAXSNL}$. To verify this assertion, we recall the notation from Example \ref{example-UK}. We then introduce a second-order variable $P$ so that $P(k,w)$ means the equality $w = \sum_{i\in S\cap [k]}a_i$ for a  certain fixed solution $S$ ($\subseteq [n]$), provided that each instance $x$ of $\mathrm{MAX\mbox{-}UK}$ has the form $(1^b,1^{a_1},1^{a_2},\ldots,1^{a_n})$.
For convenience, we set $\sum_{i\in S\cap[k]}a_i$ to be $0$ whenever $k=0$.
Following Example \ref{example-UK}, we define two formulas:
$\phi_1\equiv 0\leq i<n\wedge 0<j\leq t \wedge 0\leq s+t\leq b \wedge P(0,0)\wedge P(i,s)\wedge P(i+1,s+t)$ and
$\phi_2\equiv t=0\vee I(i+1,t)$, where the new supplemental variable $j$ is meant to ``count'' the number of  elements $(i,0,s,t),(i,1,s,t),\ldots, (i,t-1,s,t)$ whose variables $i,s,t$ satisfy the above formulas $\phi_1$ and $\phi_2$.
The objective function $\Pi(\underline{P})$ for a solution $\underline{P}$ of a $\mathrm{MAX}\mbox{-}\mathrm{UK}$ instance $x$ is then set to be $|\{ (i,j,s,t)\mid \phi_1\wedge \phi_2 \}|$,
Notice that the objective function $\Pi(\underline{P})$ computes $\sum_{i\in S}a_i$ if $\underline{P}(k,w)$ indicates $w=\sum_{j\in S\cap[k]}a_j$. It follows that $\sum_{i\in S}a_i$ is the maximum within $b$ iff $\Pi(\underline{P})$ is the maximum. This places $\mathrm{MAX}\mbox{-}\mathrm{UK}$ to $\mathrm{MAXSNL}$.
}
\end{yexample}

As another formulation of $\mathrm{NAX\mbox{-}UK}$, we make $P(i,b)$ indicate that we choose the $i$th element if $b=1$ and do not choose any element if $b=0$. Let $\phi\equiv 0\leq i\leq n \wedge 0<j\leq z\wedge [(P(i,1)\wedge I(i,z))\vee (P(i,0)\wedge z=0)]$. Nevertheless, the formulation given in Example \ref{example:MAXUK} will be used in Section \ref{sec:subclass-APXL}. 
For any two optimization problems in $\mathrm{MAXSNP}$, a special reduction, called (polynomial-time) linear reduction, was introduced in \cite{PY91}.
Concerning log-space computing, we instead use the notion of logarithmic-space AP-reducibility \cite{Yam13b}.
Given two optimization (i.e., either maximization or minimization) problems $P_1$ and $P_2$, we say that $P_1$ is \emph{logarithmic-space (or log-space) AP-reducible to} $P_2$ if there are two constants $c_1,c_2>0$ and two  functions $f$ and $g$ in $\fl$ such that,
for any value $r\in\rational^{>1}$,
(i) for any instance $x$ of $P_1$, $f(x,r)$ is an instance of $P_2$ and
(ii) for any solution $s$ to the instance $f(x,r)$ of $P_2$, $g(x,s,r)$ is  a solution of the instance $x$ of $P_1$ satisfying $err(x,g(x,s,r)) \leq c_2 \cdot err(f(x,r),s)$. Here, $err(u,z)$ denotes the value  $\max\{\frac{cost(opt(u))}{cost(z)},\frac{cost(z)}{cost(opt(u))}\}-1$ for strings $u$ and $z$, assuming that these denominators are not zero, where $opt(u)$ means an optimal solution to instance $u$ and $cost(z)$ means the value (or cost) of string $z$.
To distinguish it from (standard) $\dl$-m-reductions, we use the special notation of $\leq^{\dl}_{AP}$ to mean these log-space AP-reductions.

\begin{lemma}\label{reduction-composite}
Let $r$ denote any nondecreasing function from $\nat$ to $\nat$. Given two optimization problems $\Xi_1$ and $\Xi_2$, if $\Xi_1\leq^{\dl}_{AP}\Xi_2$ and $\Xi_2$ is log-space approximable within ratio $r(n)$, then $\Xi_1$ is also log-space approximable within ratio $O(r(n^t+t))$ for a certain fixed constant $t\geq 1$.
\end{lemma}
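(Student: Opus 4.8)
The plan is to \emph{compose} the log-space AP-reduction witnessing $\Xi_1\leq^{\dl}_{AP}\Xi_2$ with the assumed log-space approximation algorithm for $\Xi_2$, and then to bookkeep how the approximation error propagates. Fix a log-space computable (hence polynomial-time) algorithm $\mathcal{A}_2$ that, on any instance $w$ of $\Xi_2$, outputs a feasible solution $\mathcal{A}_2(w)$ with $err(w,\mathcal{A}_2(w))\leq r(|w|)-1$, and fix the constants $c_1,c_2>0$ and the functions $f,g\in\fl$ supplied by the reduction. To avoid a clash with the ratio function $r$ of the statement, I would write $\rho$ for the approximation parameter occurring in the definition of $\leq^{\dl}_{AP}$.

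First I would hard-wire a single constant $\rho_0\in\rational^{>1}$ (for instance $\rho_0=2$) and define the candidate algorithm for $\Xi_1$ by $\mathcal{A}_1(x)=g\bigl(x,\mathcal{A}_2(f(x,\rho_0)),\rho_0\bigr)$. Since $f\in\fl$ runs in polynomial time and $\rho_0$ is a fixed constant, the intermediate instance $w=f(x,\rho_0)$ has length $|w|\leq |x|^{d}+d$ for some constant $d$; putting $t=\max\{d,1\}$ gives $|w|\leq n^{t}+t$ for every $n=|x|\geq1$. Using that $r$ is nondecreasing, the solution $s=\mathcal{A}_2(w)$ satisfies $err(w,s)\leq r(|w|)-1\leq r(n^{t}+t)-1$, and then condition (ii) of the AP-reduction yields that $\mathcal{A}_1(x)=g(x,s,\rho_0)$ is a feasible solution of $x$ with $err(x,\mathcal{A}_1(x))\leq c_2\cdot err(w,s)\leq c_2\bigl(r(n^{t}+t)-1\bigr)$. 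Hence the approximation ratio achieved on $x$ is at most $1+c_2\bigl(r(n^{t}+t)-1\bigr)\leq\max\{1,c_2\}\cdot r(n^{t}+t)=O(r(n^{t}+t))$, which is the claimed bound (note this holds whether or not $c_2\geq1$).

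It remains to verify that $\mathcal{A}_1$ is log-space computable, and this is the one place that needs care. It follows from the standard closure of $\fl$ under composition: $f$, $\mathcal{A}_2$, and $g$ each use only $O(\log)$ work space (and, being deterministic and halting, run in polynomial time), the constant $\rho_0$ is built into the machine, and whenever the simulation of $\mathcal{A}_2$ or of $g$ needs a symbol of an intermediate string (namely $f(x,\rho_0)$, or $\mathcal{A}_2(f(x,\rho_0))$) that symbol is recomputed on demand directly from $x$; the polynomial size bound $|w|\leq n^{t}+t$ keeps the position counters within $O(\log n)$ bits, so the whole computation stays in logarithmic space. I expect the main obstacle to be exactly this composition bookkeeping --- in particular arguing that the intermediate instance stays polynomially bounded, so that the recompute-on-demand simulation does not blow up the space, and that the constant $t$ depends only on $(f,g,\rho_0)$ --- together with the purely routine translation between the ``ratio $r$'' formulation and the additive-error quantity $err(\cdot,\cdot)$ used in the definition of $\leq^{\dl}_{AP}$; neither step is conceptually deep.
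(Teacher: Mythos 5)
Your proposal is correct and follows essentially the same route as the paper: compose the reduction functions with the approximation algorithm for $\Xi_2$ (the paper writes $k=g\circ h\circ f$), bound the intermediate instance length by $|x|^{t}+t$, and use condition (ii) together with the monotonicity of $r$ to obtain the ratio bound $c_2\bigl(r(n^{t}+t)-1\bigr)+1=O(r(n^{t}+t))$. Your additional care about fixing the parameter $\rho_0$ and about the recompute-on-demand composition of log-space functions only makes explicit what the paper leaves implicit.
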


\begin{yproof}
Let us take two constants $c_1,c_2>0$ and two functions $f,g\in\fl$ that make $\Xi_1$ log-space AP-reducible to $\Xi_2$.
Given a nondecreasing function $r$, we take another function $h$ in $\fl$ such that, for any $x$, $h(x)$ is an approximate solution to the instance $x$ of $\Xi_2$ within approximation ratio $r(|x|)$.
In what follows, we intend to construct a function $k$ that produces an approximate solution to each instance of $\Xi_1$.

Consider the composite function $k=g\circ h\circ f$. For any instance $x$, since $f(x)$ is an instance of $\Xi_2$, $h\circ f(x)$ is a solution to the instance $f(x)$. Thus, $k(x)$ is an approximate solution to the instance $x$ of $\Xi_1$. It then follows that $cost(opt(x))\leq c_1\cdot cost(opt(f(x)))$ and $err(k(x),opt(x))\leq c_2 \cdot err(h\circ f(x),opt(f(x)))$.
Since the approximation ratio $r(|z|)$ for $\Xi_2$ equals $err(h(z),opt(z))+1$ for any instance $z$ to $\Xi_2$,
it follows that $err(h\circ f(x),opt(f(x)))$ equals $r(|f(x)|)-1$. Thus, the value $err(k(x),opt(x))+1$ is upper-bounded by $c_2(r(|f(x)|)-1)+1$. Since $|f(x)|\leq |x|^t+t$ for a certain constant $t\in\nat^{+}$, $err(k(x),opt(x))+1$ is at most $c_2\cdot r(|x|^t+t)+1$. This implies that $\Xi_1$ is approximable within ratio $c_2r(|x|^t+t)+1$.
\end{yproof}

It is important to note that every minimization problem can be log-space AP-reducible to its associated maximization problem \cite{Yam13,Yam13b}.
See also \cite{PY91} for a similar result in the polynomial-time setting.  This fact helps us focus only on maximization problems in the following statement.

\begin{proposition}\label{complexity-equality}
$\leq^{\dl}_{AP}\!(\mathrm{MAXSNP}) = \;
\leq^{\dl}_{AP}\!(\mathrm{MAXSNL}) = \; \leq^{\dl}_{AP}\!(\mathrm{MAXSNL}_{\omega})$.
\end{proposition}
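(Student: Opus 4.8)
The plan is to obtain the two ``downward'' inclusions for free from the class containments and then to close the loop through $\mathrm{MAX}\mbox{-}\mathrm{CUT}$. Since every $\snl_{\omega}$ sentence is an $\snl$ sentence and every $\snl$ sentence is a fortiori an $\snp$ sentence, Definition~\ref{def-MAXSNL} gives $\mathrm{MAXSNL}_{\omega}\subseteq\mathrm{MAXSNL}\subseteq\mathrm{MAXSNP}$, and taking the downward closures under log-space AP-reductions yields
\[
\leq^{\dl}_{AP}\!(\mathrm{MAXSNL}_{\omega})\;\subseteq\;\leq^{\dl}_{AP}\!(\mathrm{MAXSNL})\;\subseteq\;\leq^{\dl}_{AP}\!(\mathrm{MAXSNP}).
\]
It therefore remains to prove $\leq^{\dl}_{AP}\!(\mathrm{MAXSNP})\subseteq\leq^{\dl}_{AP}\!(\mathrm{MAXSNL}_{\omega})$, and the mechanism is a ``sandwich'': one single maximization problem, $\mathrm{MAX}\mbox{-}\mathrm{CUT}$, is both complete for the top class $\mathrm{MAXSNP}$ under log-space AP-reductions and a member of the bottom class $\mathrm{MAXSNL}_{\omega}$.

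First I would check that $\leq^{\dl}_{AP}$ is transitive. Given $\Xi_1\leq^{\dl}_{AP}\Xi_2$ via $(f_1,g_1,c_1,c_2)$ and $\Xi_2\leq^{\dl}_{AP}\Xi_3$ via $(f_2,g_2,c_1',c_2')$, the composite reduction is $f(x,r)=f_2(f_1(x,r),r)$ together with $g(x,s,r)=g_1(x, g_2(f_1(x,r),s,r), r)$; membership $f,g\in\fl$ holds because $\fl$ is closed under composition (one re-invokes the inner transducer on demand rather than storing its polynomially long output), and chaining the two error inequalities --- together with the accompanying optimum bounds --- multiplies the constants, so that $(f,g,c_1c_1',c_2c_2')$ witnesses $\Xi_1\leq^{\dl}_{AP}\Xi_3$. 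This is the log-space analogue of the classical composability of AP-reductions in \cite{PY91}, in the spirit of the composition already carried out in Lemma~\ref{reduction-composite}.

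Then I would assemble three ingredients: (1) $\mathrm{MAX}\mbox{-}\mathrm{CUT}\in\mathrm{MAXSNL}_{\omega}$, by Example~\ref{MAX-CUT-in-SNL}; (2) $\mathrm{MAX}\mbox{-}\mathrm{CUT}$ is complete for $\mathrm{MAXSNP}$ under log-space AP-reductions, the strengthening of the Papadimitriou--Yannakakis completeness result announced in Section~\ref{sec:introduction} (the linear reductions of \cite{PY91} witnessing completeness are realizable by $\fl$ functions); and (3) transitivity of $\leq^{\dl}_{AP}$ as above. Now let $\Xi\in\leq^{\dl}_{AP}\!(\mathrm{MAXSNP})$, say $\Xi\leq^{\dl}_{AP}\Xi'$ with $\Xi'\in\mathrm{MAXSNP}$. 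By (2), $\Xi'\leq^{\dl}_{AP}\mathrm{MAX}\mbox{-}\mathrm{CUT}$; by (3), $\Xi\leq^{\dl}_{AP}\mathrm{MAX}\mbox{-}\mathrm{CUT}$; and by (1), $\mathrm{MAX}\mbox{-}\mathrm{CUT}\in\mathrm{MAXSNL}_{\omega}$, so $\Xi\in\leq^{\dl}_{AP}\!(\mathrm{MAXSNL}_{\omega})$. This gives the missing inclusion, hence the three-way equality.

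The main obstacle is not the final chase but the two facts underpinning it: (a) verifying that log-space AP-reductions compose, where all the care lies in keeping the composite reduction within $O(\log n)$ space while correctly tracking the multiplicative error constants and the rational parameter $r$; and (b) checking that $\mathrm{MAX}\mbox{-}\mathrm{CUT}$ genuinely is $\mathrm{MAXSNP}$-complete under \emph{log-space} AP-reductions, i.e., that the completeness reductions of \cite{PY91} are computable in logarithmic space. Once both are secured, the collapse of the three closures is immediate.
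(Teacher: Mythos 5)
Your proposal is correct and follows essentially the same route as the paper: the downward inclusions come from $\mathrm{MAXSNL}_{\omega}\subseteq\mathrm{MAXSNL}\subseteq\mathrm{MAXSNP}$, and the reverse inclusion is closed by sandwiching through $\mathrm{MAX}\mbox{-}\mathrm{CUT}$, which lies in $\mathrm{MAXSNL}_{\omega}$ (Example \ref{MAX-CUT-in-SNL}) and is $\mathrm{MAXSNP}$-hard under log-space AP-reductions. The only divergence is one of detail: you cite the hardness of $\mathrm{MAX}\mbox{-}\mathrm{CUT}$ as a known log-space strengthening of \cite{PY91} (be aware that the introduction's parenthetical remark is a forward reference to this very section, so the substance must indeed come from your own observation that the PY91 reductions are log-space realizable), whereas the paper derives it explicitly via the chain $\mathrm{MAXSNP}\leq^{\dl}_{AP}\mathrm{MAX}\mbox{-}\mathrm{3SAT}\leq^{\dl}_{AP}\mathrm{MAX}\mbox{-}\mathrm{2SAT}\leq^{\dl}_{AP}\mathrm{MAX}\mbox{-}\mathrm{CUT}$, using Williams' ten-clause gadget for the middle step and the weighted-cut construction from the proof of Lemma \ref{MAX-CUT-complete} for the last.
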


Papadimitriou and Yannakakis \cite{PY91} demonstrated that every maximization problem in $\mathrm{MAXSNP}$ can be polynomial-time linear reducible to $\mathrm{MAX}\mbox{-}\mathrm{3SAT}$. Their reduction is in fact carried out using only log space. This fact immediately implies that every maximization problem in $\mathrm{MAXSNP}$ is log-space AP-reducible to $\mathrm{MAX}\mbox{-}\mathrm{3SAT}$. In other words, $\mathrm{MAX}\mbox{-}\mathrm{3SAT}$ is complete for $\mathrm{MAXSNP}$ under log-space AP-reductions. For MAXSNL, we can show the following completeness claim for $\mathrm{MAX\mbox{-}CUT}$.

\begin{lemma}\label{MAX-CUT-complete}
$\mathrm{MAX}\mbox{-}\mathrm{CUT}$ is complete for $\mathrm{MAXSNL}$ under log-space AP-reductions. It is also possible to replace $\mathrm{MAXSNL}$ by $\mathrm{MAXSNL}_{\omega}$.
\end{lemma}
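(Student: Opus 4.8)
The plan is to prove completeness in two directions. For \emph{membership} in $\mathrm{MAXSNL}$, we already have Example~\ref{MAX-CUT-in-SNL}, which exhibits an $\snl$ sentence whose objective function counts exactly the edges crossing the cut; moreover that sentence was observed to be an $\snl_{\omega}$ sentence, so $\mathrm{MAX}\mbox{-}\mathrm{CUT}\in \mathrm{MAXSNL}_{\omega}$ as well. The substance of the lemma is therefore \emph{hardness}: every maximization problem in $\mathrm{MAXSNL}$ (equivalently, by Proposition~\ref{complexity-equality}, in $\mathrm{MAXSNL}_{\omega}$) is log-space AP-reducible to $\mathrm{MAX}\mbox{-}\mathrm{CUT}$.

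First I would reduce an arbitrary $\Xi\in\mathrm{MAXSNL}$, given by an $\snl$ sentence $\Phi\equiv \exists^f P_1\cdots\exists^f P_l\,\forall\boldvec{i}\,\forall\boldvec{y}\,[\bigwedge_{j=1}^t\psi_j]$ with objective $\Pi(P_1,\ldots,P_l)$, to a constraint-maximization problem in the style of Papadimitriou--Yannakakis: expand the universal quantifiers by substituting all $O(|x|^{O(1)})$ tuples $(\bar{\boldvec{\imath}},\bar{\boldvec{y}})$, evaluate every predicate symbol $S_k$ and every arithmetic/comparison atom using the relational structure $\SSS_x$, and collect the resulting ground formulas over the Boolean variables $x_{k,i,\boldvec v}$ encoding $P_k(i,\boldvec v)$. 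By the second-order variable requirements~(i)--(ii), each surviving ground clause contains at most two literals over these variables (after removing truth-evaluated subformulas), so the whole thing is an instance of a bounded-arity, bounded-constraint constraint maximization problem---essentially $\mathrm{MAX}\mbox{-}\mathrm{2SAT}$ together with the $\mathit{Func}(P_k)$ consistency clauses, which are themselves expressible as $O(1)$-size $2$-clauses on the relevant variables; all of this construction is clearly log-space computable. The value of an assignment corresponds, up to the additive constant coming from always-satisfied clauses, to $\Pi$, so this is a log-space AP-reduction (indeed an $L$-reduction) $\Xi\leq^{\dl}_{AP}\mathrm{MAX}\mbox{-}\mathrm{2SAT}$. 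I would then invoke the classical chain $\mathrm{MAX}\mbox{-}\mathrm{2SAT}\leq^{\dl}_{AP}\mathrm{MAX}\mbox{-}\mathrm{3SAT}\leq^{\dl}_{AP}\mathrm{MAX}\mbox{-}\mathrm{CUT}$ of \cite{PY91}, observing (as the paper already does for the $\mathrm{MAX}\mbox{-}\mathrm{3SAT}$-completeness remark preceding the lemma) that those reductions are carried out with only logarithmic work space. Composing, and using Lemma~\ref{reduction-composite}-style composability of log-space AP-reductions, yields $\Xi\leq^{\dl}_{AP}\mathrm{MAX}\mbox{-}\mathrm{CUT}$.

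The main obstacle I anticipate is not the outer chain but the very first step: making the blow-up genuinely log-space and confirming the approximation-preserving bookkeeping. One must check that instantiating $\forall\boldvec{i}\,\forall\boldvec{y}$ can be driven by a log-space enumeration of tuples over universes of size $O(|x|)$, that evaluating the non-$P$ atoms against $\SSS_x$ uses only $O(\log|x|)$ space, and---crucially for AP-reducibility rather than mere many-one reducibility---that the additive constant between $\Pi$ and the number of satisfied constraints is bounded in a way that keeps the cost ratio controlled (this is exactly the ``linear reduction'' condition of \cite{PY91}, and one should verify $cost(opt(\Xi\text{-instance}))$ is $\Theta$ of the number of introduced constraints, using that $\Pi$ is at least a constant fraction of its maximum possible value for the feasible $P_k$'s). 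Once this is in place, the $\mathrm{MAXSNL}_{\omega}$ version is immediate: the reduced sentence arising in Example~\ref{MAX-CUT-in-SNL} is already an $\snl_{\omega}$ sentence, and $\mathrm{MAXSNL}_{\omega}\subseteq\mathrm{MAXSNL}$ gives the other inclusion, so $\mathrm{MAX}\mbox{-}\mathrm{CUT}$ is complete for $\mathrm{MAXSNL}_{\omega}$ under $\leq^{\dl}_{AP}$ as well, exactly as in the statement.
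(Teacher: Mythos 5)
Your proposal is correct and follows essentially the same route as the paper: membership comes from Example~\ref{MAX-CUT-in-SNL}, and hardness is shown by grounding the universal quantifiers to obtain a log-space AP-reduction to $\mathrm{MAX}\mbox{-}\mathrm{2SAT}$ and then applying the Papadimitriou--Yannakakis gadget reductions down to $\mathrm{MAX}\mbox{-}\mathrm{CUT}$. The only cosmetic difference is the intermediate problem (the paper passes explicitly through a weighted $\mathrm{MAX}\mbox{-}\mathrm{CUT}$ with even weights and then removes the weights, rather than detouring through $\mathrm{MAX}\mbox{-}\mathrm{3SAT}$), and your remarks on the AP-reduction bookkeeping make explicit a point the paper only cites from \cite{PY91}.
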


\begin{yproof}
Let us recall from Example \ref{MAX-CUT-in-SNL} that $\mathrm{MAX}\mbox{-}\mathrm{CUT}$ falls in $\mathrm{MAXSNL}$. Consider any maximization problem $\Xi$ in $\mathrm{MAXSNL}$ with an associated $\snl$  sentence $\Phi$ of the form $\exists^fP_1\cdots \exists^fP_l\forall \boldvec{i} \forall \boldvec{y}[\bigwedge_{j=1}^{t}\phi_j(P_1,\ldots,P_l,\boldvec{i},\boldvec{y})]$.  We take three steps to construct a log-space AP-reduction from $\Xi$ to $\mathrm{MAX}\mbox{-}\mathrm{CUT}$.
Firstly, we reduce $\Xi$ to $\mathrm{MAX}\mbox{-}\mathrm{2SAT}$ and then  reduce $\mathrm{MAX}\mbox{-}\mathrm{2SAT}$ to $\mathrm{MAX}\mbox{-}\mathrm{WTDCUT}$. Here, $\mathrm{MAX}\mbox{-}\mathrm{WTDCUT}$ is a ``weighted'' version of $\mathrm{MAX}\mbox{-}\mathrm{CUT}$, which is obtained by allowing each edge to hold a (positive integer) weight and maximizing the total weight of edges whose endpoints are assigned to two different sets $S$ and $V-S$.
Finally, we reduce $\mathrm{MAX}\mbox{-}\mathrm{WTDCUT}$ to $\mathrm{MAX}\mbox{-}\mathrm{CUT}$ to complete the proof.

(1) When a domain structure $\DD_x$ for $\Phi$ is given for an instance $x$, the variable tuple $(\boldvec{i},\boldvec{y})$ takes only polynomially many different values $(\boldvec{\underline{i}},\boldvec{\underline{y}})$.
We assign those values $(\boldvec{\underline{i}},\boldvec{\underline{y}})$ to $(\boldvec{i},\boldvec{y})$ one by one to generate polynomially many ``formulas'' $\phi_j(P_1,\ldots,P_l,\boldvec{\underline{i}},\boldvec{\underline{y}})$. Notice that each $\phi_j$ can be rewritten as a formula made up from  variables of the form $P_k(\underline{i},\underline{y})$ or $\neg P_k(\underline{i},\underline{y})$ as well as the constants $T$(true)  and $F$ (false) because $S_r(\boldvec{\underline{i}},\boldvec{\underline{y}})$'s can be evaluated to be either $T$ or $F$. The second-order variable requirement of $\Phi$ forces this formula $\phi_j$ to be expressed as a 2CNF formula. Thus, $\Xi$ is reduced to $\mathrm{MAX}\mbox{-}\mathrm{2SAT}$.

(2) The reduction $\mathrm{MAX}\mbox{-}\mathrm{2SAT} \leq^{\dl}_{AP}\mathrm{MAX}\mbox{-}\mathrm{WTDCUT}$ is constructed as follows. We loosely follow an argument of \cite[Theorem 2]{PY91}. Let $\phi$ be any 2CNF Boolean formula of the form $\bigwedge_{i=1}^{t}\phi_i$ with $\phi_i\equiv z_{i,1}\vee z_{i,2}$ for certain literals $z_{i,1}$ and $z_{i,2}$. We then  construct a weighted undirected graph $G=(V_G,E_G)$. The vertices are labeled with variables as well as their negations except for a special vertex $w$. Sequentially, we choose a clause $\phi_i$ and then add three edges to form a triangle among three vertices $z_{i,1},z_{i,2},w$. Moreover, we add an edge between every variable $x$ and its negation $\bar{x}$ with weight of $2k$, where $k$ is the number of times that either $x$ or $\bar{x}$ appears in $\phi$. The weight of any edge in   each triangle $(z_{i,1},z_{i,2},w)$ is $2k'$, where $k'$ is the number of clauses in which the pair $z_{i,1},z_{i,2}$ appears simultaneously (ignoring their appearance order).
Notice that any edge weight is always even.
This modification can be done in log space. As argued in \cite{PY91}, the objective value is twice as large as the sum of the number of literal occurrences and the number of satisfying clauses.

(3) The third reduction $\mathrm{MAX}\mbox{-}\mathrm{WTDCUT} \leq^{\dl}_{AP}\mathrm{MAX}\mbox{-}\mathrm{CUT}$ is shown as follows.
From the argument of (2), it suffices to consider the case where the weight of each edge of an undirected graph is an even number. For each edge $(v_1,v_2)$ with weight $2k$, we prepare $k$ new vertices, say, $u_1,u_2,\ldots, u_k$ and add two edges $(v_1,u_i)$ and $(u_i,v_2)$ for each index $i\in[k]$.

The second part of the lemma follows immediately by analyzing (1)--(3).
\end{yproof}

Proposition \ref{complexity-equality} follows from Lemma \ref{MAX-CUT-complete} since $\mathrm{MAX}\mbox{-}\mathrm{CUT}$ is in $\mathrm{MAXSNL}$ by Example \ref{MAX-CUT-in-SNL}.


\vs{2}
\begin{proofof}{Proposition \ref{complexity-equality}}
Since  $\mathrm{MAXSNL}_{\omega}\subseteq \mathrm{MAXSNL}\subseteq \mathrm{MAXSNP}$, this fact instantly implies that $\leq^{\dl}_{AP}\!(\mathrm{MAXSNL}_{\omega})\subseteq \; \leq^{\dl}_{AP}\!(\mathrm{MAXSNL}) \subseteq \; \leq^{\dl}_{AP}\!(\mathrm{MAXSNP})$.
To see another inclusion, as noted earlier, every maximization problem in $\mathrm{MAXSNP}$ is  log-space AP-reducible to $\mathrm{MAX}\mbox{-}\mathrm{3SAT}$. This fact implies that $\mathrm{MAXSNP} \subseteq \; \leq^{\dl}_{AP} \!(\mathrm{MAX}\mbox{-}\mathrm{3SAT})$.

Next, we consider the maximization problem $\mathrm{MAX}\mbox{-}\mathrm{CUT}$, which belongs to $\mathrm{MAXSNL}_{\omega}$ by Example \ref{MAX-CUT-in-SNL}.
It then suffices to show that (*) $\mathrm{MAX}\mbox{-}\mathrm{3SAT}\leq^{\dl}_{AP} \mathrm{MAX}\mbox{-}\mathrm{CUT}$
because  we obtain $\leq^{\dl}_{AP}\!(\mathrm{MAX}\mbox{-}\mathrm{3SAT}) \subseteq \; \leq^{\dl}_{AP}\!(\mathrm{MAX}\mbox{-}\mathrm{CUT}) \subseteq \; \leq^{\dl}_{AP}\!(\mathrm{MAXSNL}_{\omega})$ from Lemma \ref{MAX-CUT-complete}.
The desired reduction (*) will be achieved by proving (**) $\mathrm{MAX}\mbox{-}\mathrm{3SAT}\leq^{\dl}_{AP} \mathrm{MAX}\mbox{-}\mathrm{2SAT}$ because $\mathrm{MAX}\mbox{-}\mathrm{2SAT}\leq^{\dl}_{AP} \mathrm{MAX}\mbox{-}\mathrm{CUT}$ has already been shown in the proof of Lemma \ref{MAX-CUT-complete}.
Our target reduction (**) comes from a note of Williams \cite{Wil04} on
a transformation of a clause $\phi$ of the form $z_1\vee z_2\vee z_3$, where $z_1,z_2,z_3$ are literals, to the following ten clauses: $z_1\vee z_1$, $z_2\vee z_2$, $z_3\vee z_3$, $w\vee w$, $\overline{z_1}\vee \overline{z_2}$, $\overline{z_2}\vee \overline{z_3}$, $\overline{z_1}\vee \overline{z_3}$, $z_1\vee \overline{w}$, $z_2\vee\overline{w}$, $z_3\vee \overline{w}$, where $w$ is a new variable associated with $\phi$ and $\overline{z_i}$ ($i\in\{1,2,3\}$) denotes $\overline{x}$ (resp., $x$) if $z_i$ is a variable $x$ (resp., if $z_i$ is the negation of a variable $x$). As noted in  \cite{Wil04}, for any given assignment, (i) if it satisfies $\phi$, then exactly $7$ out of ten classes are satisfied and (ii) if it does not satisfy $\phi$, then exactly $6$ out of ten classes are satisfied.
\end{proofof}


In the end of this subsection, we briefly argue a relationship to the work of Bringman et al. \cite{BCFK21}, who discussed a subclass of $\mathrm{MAXSNP}$, called $\mathrm{MAXSP}$.
As a concrete example, let us consider a typical  maximization problem in $\mathrm{MAXSP}$, known as $\mathrm{MAX}\mbox{-}\mathrm{IP}$ \cite{BCFK21}, in which one asks to find a pair $(x_1,x_2)$ in $X_1\times X_2$ that maximizes the value $|\{j\in[d(n)] \mid x_1[j]\cdot x_2[j]=1\}|$ for two given sets $X_1,X_2\subseteq \{0,1\}^{d(n)}$ with $|X_1|=|X_2|=n$, where $d(n)=n^{\gamma}$ for a small constant $\gamma>0$ and $x[j]$ denotes the $j$th bit of $x$.
We show that this maximization problem is indeed in $\mathrm{MAXSNL}$.

\begin{lemma}
$\mathrm{MAX}\mbox{-}\mathrm{IP}$ belongs to $\mathrm{MAXSNL}$.
\end{lemma}

\begin{yproof}
To see this, let us consider the sentence $(\exists^fP)(\forall i) [ \Phi_1(P,\tilde{X}_1,\tilde{X}_2)\wedge \Phi_2(P,Bit,i)]$, where $\tilde{X}_1$, $\tilde{X}_2$, and $Bit$  are predicate symbols for which each $\tilde{X}_k(z)$ ($k\in\{1,2\}$) means that $z$ is in set $X_k$ and $Bit(z,i)$ means that the $i$th bit of $z$ is $1$.
The two formulas $\Phi_1$ and $\Phi_2$ are defined as follows: $\Phi_1\equiv (\forall z_1,z_2)[P(1,z_1)\wedge P(2,z_2)\to \tilde{X}_1(z_1)\wedge \tilde{X}_2(z_2)]$ and $\Phi_2\equiv (\forall z_1,z_2)[ P(1,z_1)\wedge P(2,z_2) \to Bit(z_1,i) \wedge Bit(z_2,i)]$.
It is clear that the objective function $\Pi(P)$ is expressed as $\Pi(P)= |\{i\mid \Phi_1\wedge \Phi_2\}|$.
\end{yproof}

\subsection{Subclass of MAXSNL in APXL}\label{sec:subclass-APXL}

We discuss the approximability of optimization problems in $\mathrm{MAXSNL}$.  
In the polynomial-time setting, it is known from \cite[Theorem 1]{PY91} that every optimization problem in $\mathrm{MAXSNP}$ falls in APX; namely, it is approximated in polynomial time within a certain fixed approximation ratio.
The log-space approximability of optimization problems was studied earlier in \cite{Tan07,Yam13,Yam13b}.
Following \cite{Yam13,Yam13b}, we use the notation $\mathrm{APXL}$ (also denoted $\mathrm{APXL}_{\mathrm{NLO}}$ in \cite{Yam13b}) for the collection of $\nl$ optimization problems that can be approximated in polynomial time using log space with fixed constant approximation ratios.
Several NL optimization problems are known to be ``complete'' for APXL under various reductions. Those complete problems include the maximization binary 2-bounded close-to-unary knapsack problem and the maximum fixed-length $\lambda$-nondeterministic finite automata problem \cite{Yam13b}.
Since $\mathrm{MAXSNL}$ contains optimization problems of extremely high complexity by Proposition \ref{complexity-equality}, it seems difficult to show that every problem in $\mathrm{MAXSNL}$ is in $\mathrm{APXL}$.
Hence, it is natural to ask what subclass of $\mathrm{MAXSNL}$ makes its problems fall in $\mathrm{APXL}$.

As an example of such problem, we first recall $\mathrm{MAX\mbox{-}UK}$ from Example \ref{example:MAXUK}.

\begin{proposition}\label{MAXUK-approximation}
$\mathrm{MAX}\mbox{-}\mathrm{UK}$ is in $\mathrm{APXL}$.
\end{proposition}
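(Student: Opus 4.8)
The plan is to show first that $\mathrm{MAX}\mbox{-}\mathrm{UK}$ is an $\nl$ optimization problem and then to exhibit a log-space, polynomial-time algorithm that approximates it within the fixed ratio $2$. Membership in the class of $\nl$ optimization problems is routine: a feasible solution is a subset $S\subseteq[n]$ with $\sum_{i\in S}a_i\le b$, the cost $\sum_{i\in S}a_i$ is bounded by the input length and easily computed, and deciding whether a feasible solution of cost at least a given threshold exists is essentially $\mathrm{UK}$, which lies in $\nl$ (indeed in the subclass $\mathrm{1t1NCA}$, as noted in Section~\ref{sec:syntactic-NL}); also $\mathrm{MAX}\mbox{-}\mathrm{UK}\in\mathrm{MAXSNL}$ by Example~\ref{example:MAXUK}. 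So the real content is the approximation algorithm, for which I would use a simple greedy procedure that, crucially, requires no sorting.

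Given $x=(1^b,1^{a_1},\ldots,1^{a_n})$, the algorithm first discards every index $i$ with $a_i>b$, since such items lie in no feasible set; if nothing remains, it outputs $S=\setempty$ (here $OPT=0$ and the output is optimal). Otherwise it splits into two cases. Case (a): some surviving $a_i$ satisfies $a_i\ge b/2$; then output $S=\{i\}$ for the first such $i$, of cost $a_i\ge b/2$. Case (b): every surviving $a_i$ is $<b/2$; then scan $i=1,2,\ldots,n$ while maintaining a running sum $s$ (initialised to $0$), adding $i$ to $S$ and updating $s:=s+a_i$ exactly when $s+a_i\le b$, and finally output $S$. For correctness, note $OPT\le b$ always. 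In case (a) the output has cost $\ge b/2\ge OPT/2$. In case (b), if the scan added every surviving item, then $S$ is the set of all items that can possibly belong to a feasible solution, so $cost(S)=OPT$; otherwise some surviving item $j$ was rejected, at which moment the running sum $s_j$ obeyed $s_j+a_j>b$, and since the running sum is nondecreasing the final cost satisfies $cost(S)\ge s_j>b-a_j>b/2\ge OPT/2$, using $a_j<b/2$. Hence in every case $OPT/cost(S)\le 2$, so $\mathrm{MAX}\mbox{-}\mathrm{UK}$ is approximable within ratio $2$.

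It then remains to confirm the resource bounds. Since $b$ is presented in unary, $b\le|x|$, so the running sum $s\le b$ fits in $O(\log|x|)$ bits; the scan index $i\le n\le|x|$ likewise uses $O(\log|x|)$ bits; extracting the length $a_i$ of the $i$-th block of $x$ is a single left-to-right pass with an $O(\log|x|)$-bit counter; and the output set $S$ can be emitted on a write-once tape as its characteristic bit string during the scan. Thus the procedure runs in polynomial time using $O(\log|x|)$ work space, and therefore $\mathrm{MAX}\mbox{-}\mathrm{UK}\in\mathrm{APXL}$.

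I do not expect a genuine obstacle here; the only point needing care is the bookkeeping that all the comparisons and block-length computations really stay within logarithmic space under the unary encoding, and that the greedy analysis is order-independent precisely so that no space-consuming sorting step is ever required. Everything else is routine, and the argument parallels the well-known factor-$2$ greedy for knapsack-type problems, adapted to the log-space setting.
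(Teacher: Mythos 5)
Your proof is correct, and it reaches the same factor-$2$ greedy conclusion as the paper by a slightly different route. The paper's algorithm first reorders the items in \emph{descending} order (relying on a log-space enumeration of unary integers in descending order, cited from \cite{Yam23}) and then runs the same running-sum greedy; its analysis assumes $\sum_{j\in S}a_j<\frac{b}{2}$, observes that the first rejected item must exceed $\frac{b}{2}$, and uses the sorted order to conclude that every accepted item also exceeds $\frac{b}{2}$, a contradiction. You dispense with sorting altogether by splitting on whether some feasible item satisfies $a_i\geq b/2$: if so, that single item already has cost at least $OPT/2$; if not, the input-order greedy either takes every feasible item (and is then exactly optimal) or halts with running sum greater than $b-a_j>b/2$. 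Both analyses are sound and give ratio $2$; yours is more self-contained in the log-space setting because it avoids the sorting subroutine entirely, whereas the paper's sorted formulation is not incidental --- it is the template that Theorem \ref{MAX-tau-SNL} generalizes to $\mathrm{MAX}\tau\mathrm{SNL}$, where enumerating candidates in decreasing order of $g^{(-)}_{P,i+1}$ plays the role of the descending sort. One minor point in your favour: you also handle the degenerate case $OPT=0$ (all items exceeding $b$) explicitly, which the paper sidesteps by assuming $a_i\leq b$ for all $i$.
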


\begin{yproof}
Let $x$ denote  any instance of the form $(1^b,1^{a_1},1^{a_2},\ldots,1^{a_n})$ given to $\mathrm{MAX}\mbox{-}\mathrm{UK}$. For simplicity, we assume that $0<a_i\leq b$ for all $i\in[n]$.
Consider the following simple, greedy algorithm. It is possible to enumerate the integers $a_1,a_2,\ldots,a_n$ given in the unary representation in the descending order using log space by scanning each input symbol of $x$ back and forth (see, e.g., \cite{Yam23}). Thus, we can assume without  loss of generality that $a_1\geq a_2\geq \cdots \geq a_n$ in the rest of our argument. We wish to inductively construct a subset $S$ of $[n]$,  starting with $S=\setempty$, by executing the following scheme.
By incrementing $i$ by one from $i=0$, we pick $a_i$ and check whether  $\sum_{j\in S}a_j+a_i\leq b$. If so, then we expand $S$ by adding $i$ to $S$; otherwise, we do nothing. After reading all $a_i$'s, we output the value $\sum_{j\in S}a_i$.

Next, we intend to verify that the above scheme is indeed an approximation scheme with an approximation ratio of at most $2$. Let $S=\{i_1,i_2,\ldots,i_k\}$ be the set constructed by this approximation scheme with $1\leq i_1< i_2 < \cdots < i_k\leq n$.
In the case of $\sum_{i=1}^{n}a_i\leq b$, $S$ must be $[n]$, and thus the scheme correctly solves the given problem.
In the other case of $\sum_{i=1}^{n}a_i>b$, we wish to prove that $\frac{b}{2} \leq \sum_{i\in S}a_i \leq b$. To prove this, on the contrary, we first assume that $\sum_{i\in S}a_i<\frac{b}{2}$.
If $i_k<n$, then we obtain $a_{i_k+1}>\frac{b}{2}$ because, otherwise, we obtain $\sum_{i\in S}a_i+a_{i_k+1}\leq b$ and thus the algorithm should  place $i_{k}+1$ in $S$, a contradiction to the definition of $S$.
Since $a_{i_1}\geq a_{i_2}\geq \cdots \geq a_{i_k}\geq a_{i_k+1}$, it follows that $a_{i_j}>\frac{b}{2}$ for all indices $j\in[k]$. This is in contradiction with $\sum_{i\in S}a_i<\frac{b}{2}$.
By contrast, let us consider the case of $i_k=n$. Since $i_k=n$ and $\sum_{i\in S}a_i<\frac{b}{2}$, we conclude that $a_i<\frac{b}{2}$ for all $i\in[n]$. Since $\sum_{i\in [n]}a_i>b$, $S\neq [n]$ follows. We then take the smallest number $i'$ in $[n]-S$. This means that the algorithm does not choose $a_{i'}$. Hence, it follows that $\sum_{i\in S\cap[i']}a_i + a_{i'}>b$. From this, we obtain $a_{i'}>b-\sum_{i\in S\cap[i']}a_i \geq b-\frac{b}{2} = \frac{b}{2}$. This contradicts the upper bound $a_i<\frac{b}{2}$ for all $i\in[n]$.
 Therefore, $\sum_{i\in S}a_i\geq \frac{b}{2}$ must hold. Since the optimal solution has a value of at most $b$, the approximation ratio cannot exceed $2$.
\end{yproof}


If we denote the ``optimal'' choice of $\boldvec{P}$ (i.e., an \emph{optimal solution}) by $\boldvec{P}_{opt}$, then $\Pi(\boldvec{\underline{P}})\leq \Pi(\boldvec{\underline{P_{\mathit{opt}}}})$ holds for any $\boldvec{\underline{P}}$.


Now, let us introduce a new subclass of $\mathrm{MAXSNL}$, called  $\mathrm{MAX}\tau\mathrm{SNL}$, whose elements all fall in $\mathrm{APXL}$.
We first recall an SNL sentence $\Phi$ given in Definition \ref{def-MAXSNL} and, for a variable sequence $\boldvec{P}=(P_1,P_2,\ldots,P_l)$,
its associated objective function $\Pi(\boldvec{\underline{P}})$ that computes the value
$|\{ (\boldvec{\underline{i}},\boldvec{\underline{y}}) \mid \bigwedge_{j=1}^{t} \psi_j (\boldvec{\underline{P}},\boldvec{\underline{i}},\boldvec{\underline{y}}, \boldvec{\underline{S}},\boldvec{\underline{c}})\}|$, where   $\boldvec{{i}}=({i_1},\ldots,{i_{r}})$, $\boldvec{{y}}=({y_1},\ldots,{y_{s}})$,  $\boldvec{{S}}=({S_1},\ldots,{S_d})$, and $\boldvec{{c}}=({c_1},\ldots,{c_{d'}})$ are sequences of variables, predicate and constant symbols.
As noted in Definition \ref{def-MAXSNL}, only first-order variables in $\boldvec{i}$ and $\boldvec{y}$ appear in $\bigwedge_{j=1}^{t}\psi_j$.

By the second-order variable requirement (ii), each $\psi_j$ can be rewritten as finite disjunctions where at most two such disjuncts have the form $(\bigwedge_{k,i,\boldvec{v}}P_k(i,\boldvec{v}))\wedge (\bigwedge_{k',i',\boldvec{v'}}\neg P_{k'}(i',\boldvec{v'})) \wedge R$ for a first-order quantifier-free subformula $R$.
Toward the introduction of $\mathrm{MAX}\tau\mathrm{SNL}$, we further place a restriction that the formula $\bigwedge_{j=1}^{t}\psi_j$ has the following special form:
\[
(*) \hs{5} (\bigwedge_{i=1}^{l} P_k(e,\boldvec{u}_i))\wedge (\bigwedge_{i=1}^{l} P_{k}(e+1,\boldvec{v}_i)) \wedge R^{(-)}(e+1, \boldvec{{i}},\boldvec{{y}}, \boldvec{{u_1}},\ldots,\boldvec{{u_l}}, \boldvec{{v_1}},\ldots,\boldvec{{v_l}},  \boldvec{{S}},\boldvec{{c}})
\]
for a first-order quantifier-free SNL formula $R^{(-)}$ having no second-order variables.

Using $R^{(-)}$, we then define $\Pi(\boldvec{P})\lceil_a$ to be the value $|\{ (\underline{e},\boldvec{\underline{i}},\boldvec{\underline{y}})\mid 0<e\leq a\wedge (\bigwedge_{j=1}^{l} P_j(e,\boldvec{u_j}))\wedge (\bigwedge_{j=1}^{l} P_j(e+1,\boldvec{v_j})) \wedge  R^{(-)}(\underline{e}+1, \boldvec{\underline{i}},\boldvec{\underline{y}}, \boldvec{\underline{u_1}},\ldots,\boldvec{\underline{u_l}}, \boldvec{\underline{v_1}},\ldots,\boldvec{\underline{v_l}},  \boldvec{\underline{S}},\boldvec{\underline{c}})\}|$.
We remark that, for any $\boldvec{\underline{P}}$, $\Pi(\boldvec{\underline{P}})\lceil_0=0$, $\Pi(\boldvec{\underline{P}})\lceil_a\leq \Pi(\boldvec{\underline{P}})\lceil_{a+1}$, $\Pi(\boldvec{\underline{P}})\lceil_{n} =\Pi(\boldvec{\underline{P}})$ for all $a\in[0,n-1]_{\integer}$.
For convenience, we also define  $g_a(\boldvec{\underline{u_1}},\ldots,\boldvec{\underline{u_l}}, \boldvec{\underline{v_1}}, \ldots,\boldvec{\underline{v_l}})$
to be the value $|\{(\boldvec{\underline{i}},\boldvec{\underline{y}})\mid R^{(-)}(a,  \boldvec{\underline{i}},\boldvec{\underline{y}}, \boldvec{\underline{u_1}},\ldots,\boldvec{\underline{u_l}}, \boldvec{\underline{v_1}},\ldots,\boldvec{\underline{v_l}},  \boldvec{\underline{S}},\boldvec{\underline{c}})\}|$.
It then follows that, if $\bigwedge_{j=1}^{l}P_j(e,\boldvec{u}_j)$ and $\bigwedge_{j=1}^{l}P_j(e+1,\boldvec{v}_j)$ are true, then $\Pi(\boldvec{\underline{P}})\lceil_{a+1} = \Pi(\boldvec{\underline{P}})\lceil_{a}\: + \: g_{a+1}(\boldvec{\underline{u_1}},\ldots, \boldvec{\underline{u_l}},\boldvec{\underline{v_1}},
\ldots,\boldvec{\underline{v_l}})$.
This seems to provide a simple, easy way to compute $\Pi(\boldvec{\underline{P}})\lceil_{a+1}$ from $\Pi(\boldvec{\underline{P}})\lceil_{a}$.

Let us see a concrete example of $\Pi(\boldvec{P})\lceil_a$ in Example \ref{example-Rminus}.

\begin{example}\label{example-Rminus}
Consider the maximization problem $\mathrm{MAX\mbox{-}UK}$.
Let us recall its formulation given in Example \ref{example:MAXUK}.
We further define $R^{(-)}$ as follows: $R^{(-)}(i,j,z,u,v) \equiv 0< i<n\wedge 0<j\leq z \wedge u+z=v \wedge 0\leq v \leq b \wedge (z=0 \vee I(i,z))$.
For each value $a\in[n]$, the restriction $\Pi(\underline{P})\lceil_a$ thus has the form $\Pi(\underline{P})\lceil_a = |\{ (i,j,z)\mid 0<i\leq a\wedge P(i,u)\wedge P(i+1,v)\wedge R^{(-)}(i+1,j,z,u,v)\}|$.
Moreover, $g_a(u,v)$ takes the value $|\{(j,z)\mid R^{(-)}(a,j,z,u,v)\}|$. Clearly, when $P(a,u)$ and $P(a+1,v)$ are true, we obtain $\Pi(\underline{P})\lceil_{a+1} = \Pi(\underline{P})\lceil_{a} \:+\: g_{a+1}(u,v)$.
\end{example}

As noted, the use of $g_a$ provides a simple, easy way to compute $\Pi(\boldvec{\underline{P}})\lceil_{a+1}$ from $\Pi(\boldvec{\underline{P}})\lceil_a$. However, this does not seem to immediately guarantees a ``log-space'' procedure of computing  $\Pi(\underline{P})$.
To overcome this difficulty, we wish to expand an underlying vocabulary by appending a new function symbol ``$h$''.
We write ``$h(i)$'' with a variable $i$ to mean the outcome of this function on input $i$. We then syntactically replace each formula ``$P_k(i,\boldvec{u})$'' in $\bigwedge_{j=1}^{t}\psi_j$ by ``$P_k(h(i),\boldvec{u})$''.
Note that, since this is just a syntactical replacement, we keep the same variable sequence $\boldvec{u}$ as ``symbolic'' terms.  
For notational convenience, we hereafter write ``$P_k^{(h)}(i,\boldvec{u})$'' instead of ``$P_k(h(i),\boldvec{u})$''. Let $\boldvec{P}^{(h)} =(P^{(h)}_1,P^{(h)}_2,\ldots,P^{(h)}_l)$.
The formula obtained from $\bigwedge_{j=1}^{t}\psi_j$ by this syntactical replacement is succinctly referred to as an \emph{$h$-term expansion of} $\bigwedge_{j=1}^{t}\psi_j$. Naturally, we further obtain $\Pi(\boldvec{\underline{P}}^{(h)})$ and $\Pi(\boldvec{\underline{P}}^{(h)})\lceil_a$ from $\Pi(\boldvec{\underline{P}})$ and $\Pi(\boldvec{\underline{P}})\lceil_a$, respectively.

\begin{example}\label{example-h-term}
We consider an $h$-term expansion of $\Pi(P)\lceil_a$ given in Example \ref{example-Rminus}. Let $h$ denote a newly introduced function symbol, which represents a permutation on $[0,n]_{\integer}$ with $\underline{h}(0)=0$. We introduce $P^{(h)}$ as a new variable associated with $h$ and consider $\Pi(P^{(h)})\lceil_a =|\{ (i,j,z) \mid 0<i\leq a  \wedge P^{(h)}(i,u)\wedge P^{(h)}(i+1,v) \wedge R^{(-)}(h(i+1),j,z,u,v)\}|$ and $g^{(h)}_a(u,v) = |\{(j,z)\mid R^{(-)}(h(a),j,z,u,v)\}|$. It then follows that $\Pi(P^{(h)})\lceil_{a+1} = \Pi(P^{(h)})\lceil_a \:+\: g_{a+1}^{(h)}(u,v)$ if $P^{(h)}(a,u)$ and $P^{(h)}(a+1,v)$ are true.
\end{example}

Now, let us define the complexity class $\mathrm{MAX}\tau\mathrm{SNL}$.

\begin{definition}\label{def-MAXtauSNL}
A maximization problem is in $\mathrm{MAX}\tau\mathrm{SNL}$ if there exists
a quantifier-free SNL formula $R^{(-)}$ having no functional variables, which  naturally induces $\Pi(\boldvec{P})$, $\Pi(\boldvec{P})\lceil_a$, and $g_a$ as shown above, with the following extra three conditions such that the maximization problem asks to find a solution $\boldvec{P}$ that maximizes the value $\Pi(\boldvec{P})$ of the objective function.
Let $a$, $\boldvec{u}_j$, $\boldvec{v}_j$, $\boldvec{P}$, $\boldvec{P'}$, $\boldvec{\hat{P}}$, $\boldvec{S}$, and $\boldvec{c}$ denote sequences of variables, predicates and constant symbols. Let $h$ be any newly introduced function symbol whose interpretation $\underline{h}$ is a permutation on $[0,n]_{\integer}$ with $\underline{h}(0)=0$. Let $A^{(h)}_{a}(\boldvec{\underline{u_1}},\ldots, \boldvec{\underline{u_l}})$ denote the set $\{ g^{(h)}_a(\boldvec{\underline{u_1}},\ldots, \boldvec{\underline{u_l}}, \boldvec{\underline{u'_1}},\ldots, \boldvec{\underline{u'_l}}) \mid \boldvec{\underline{u'_1}},\ldots, \boldvec{\underline{u'_l}}\}$ and     $\boldvec{\underline{u'_1}},\ldots,\boldvec{\underline{u'_l}}$ range over all possible elements.
\begin{enumerate}\vs{-2}
  \setlength{\topsep}{-1mm}%
  \setlength{\itemsep}{1mm}%
  \setlength{\parskip}{0cm}%

\item[(1)] If both $\bigwedge_{j=1}^{l}P_j^{(h)}(a,\boldvec{u_j})$ and $\bigwedge_{j=1}^{l}P_j^{(h)}(a+1,\boldvec{v_j})$ are true, then $\Pi(\boldvec{\underline{P}}^{(h)})\lceil_{a+1} = \Pi(\boldvec{\underline{P}}^{(h)})\lceil_a \: + \: g^{(h)}_{a+1}(\boldvec{\underline{u_1}},\ldots,\boldvec{\underline{u_l}}, \boldvec{\underline{v_1}},\ldots,\boldvec{\underline{v_l}})$.

\item[(2)] If both $\bigwedge_{j=1}^{l} P_j^{(h)}(a,\boldvec{u}_j)$ and $\bigwedge_{j=1}^{l} {\hat{P}_j}^{(h)}(a,\boldvec{v}_j)$ are true, then $\Pi(\boldvec{\underline{P}}^{(h)})\lceil_a \geq \Pi(\boldvec{\underline{\hat{P}}}^{(h)})\lceil_a$ implies $A^{(h)}_{a+1}(\boldvec{\underline{u_1}},\ldots, \boldvec{\underline{u_l}}) \subseteq A^{(h)}_{a+1}(\boldvec{\underline{v_1}},\ldots, \boldvec{\underline{v_l}})$.

\item[(3)]
    (i) For any $\boldvec{P'}$, there exists a $\boldvec{P}^{(h)}$ such that $\Pi(\boldvec{\underline{P'}}) \leq \Pi(\boldvec{\underline{P}}^{(h)})$.
    (ii) For any $\boldvec{P}^{(h)}$, there exists a $\boldvec{P'}$ such that $\Pi(\boldvec{\underline{P}}^{(h)}) \leq \Pi(\boldvec{\underline{P'}})$.
\end{enumerate}
\end{definition}

The symbol ``$\tau$'' in $\mathrm{MAX}\tau\snl$ indicates the ``transitive'' relation of $\Pi(\boldvec{\underline{P}})\lceil_a$ over all values $a$.
The condition (3), in particular, expresses the invariance of the final outcome of $\Pi(\boldvec{\underline{P}}^{(h)})$ over the choice of any permutation $h$.

In what follows, we demonstrate that $\mathrm{MAX}\mbox{-}\mathrm{UK}$ is an example problem of $\mathrm{MAX}\tau\mathrm{SNL}$.
However, it is not clear that $\mathrm{MAX\mbox{-}UK}$ is ``complete'' for $\mathrm{MAX}\tau\mathrm{SNL}$ under naturally chosen reductions.


Give a formula $\phi$, we set $[\![\phi]\!]=1$ if $\phi$ is true and $[\![\phi]\!]=0$ if $\phi$ is false.

\begin{lemma}
$\mathrm{MAX\mbox{-}UK}$ is in $\mathrm{MAX}\tau\mathrm{SNL}$.
\end{lemma}

\begin{yproof}
Firstly, let us recall the objective function $\Pi(P)$ defined in Example \ref{example:MAXUK} and $R^{(-)}$, $g_a$, and $\Pi(P)\lceil_a$
defined in Example \ref{example-Rminus} for $\mathrm{MAX}\mbox{-}\mathrm{UK}$.
For the containment of $\mathrm{MAX\mbox{-}UK}$ in $\mathrm{MAX}\tau\mathrm{SNL}$, we argue that this $R^{(-)}$ satisfies the desired conditions (1)--(3) of Definition \ref{def-MAXtauSNL}.
We immediately obtain $\Pi(P)\lceil_{0} = 0$.

Let $x$ denote an instance of the form $(1^b,1^{a_1},1^{a_2},\ldots,1^{a_n})$.
We introduce a new functional variable $P$ indicating that, for each $(e,u_e)$,  $P(e,u_e)$ is true exactly when  $u_e=\sum_{j=1}^{e-1}a_{i_j}\cdot [\![\underline{P}(j,u_j)]\!] + a_{i_e}\cdot z$ for $z\in\{0,1\}$, provided that $P(0,u_0),P(1,u_1),\ldots, P(e-1,u_{e-1})$ are already determined.

Let $h$ denote a newly introduced function symbol, which represents a permutation on $[0,n]_{\integer}$ with $\underline{h}(0)=0$. We introduce $P^{(h)}$ as a new variable associated with $h$ and consider the relevant values  $\Pi(P^{(h)})\lceil_a =|\{ (i,j,z) \mid 0<i\leq a  \wedge P^{(h)}(i,u)\wedge P^{(h)}(i+1,v) \wedge R^{(-)}(h(i+1),j,z,u,v)\}|$ for all $a\in[0,n]_{\integer}$ and $g^{(h)}_a(u,v) = |\{(j,z)\mid R^{(-)}(h(a),j,z,u,v)\}|$.  Note that, if $P^{(h)}(a,u)$ and $P^{(h)}(a+1,u')$ are true, then
$g^{(h)}_{a+1}(u,u')\in \{0,a_{h(a+1)}\}$.

The condition (1) of Definition \ref{def-MAXtauSNL} comes from Example \ref{example-h-term}.
Next, we show the condition (2). Assume that $P^{(h)}(a,u)$, $\hat{P}^{(h)}(a,v)$, $\Pi(P^{(h)})\lceil_a \geq \Pi(\hat{P}^{(h)})\lceil_a$ are true.
Let $A^{(h)}_{a+1}(u) = \{g_{a+1}^{(h)}(u,u')\mid u'\}$ and $A^{(h)}_{a+1}(v)=\{g_{a+1}^{(h)}(v,v')\mid v'\}$, where $u'$ and $v'$ range over all possible values.
It then follows that $A^{(h)}_{a+1}(u),A^{(h)}_{a+1}(v)\subseteq \{0, a_{h(a+1)}\}$. Note that $|A^{(h)}_a(u)|\geq 1$ for all $a$ and $u$.
Since $\Pi(P^{(h)})\lceil_{a}\geq \Pi(\hat{P}^{(h)})\lceil_{a}$, we obtain $A^{(h)}_{a+1}(u)\subseteq A^{(h)}_{a+1}(v)$.

Toward the condition (3), this comes from the fact that the inputs $(1^b,1^{a_1},1^{a_2},\ldots,1^{a_n})$ and its permutated version $(1^b,1^{a_{i_1}},1^{a_{i_2}},\ldots,1^{a_{i_n}})$ with $[n]=\{i_1,i_2,\ldots,i_n\}$ can have the same solutions with the same value of their objective functions.
\end{yproof}


Finally, we claim that $\mathrm{MAX}\tau\mathrm{SNL}$ is contained in $\mathrm{APXL}$.

\begin{theorem}\label{MAX-tau-SNL}
Every maximization problem in $\mathrm{MAX}\tau\mathrm{SNL}$ belongs to  $\mathrm{APXL}$.
\end{theorem}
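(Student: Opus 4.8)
The plan is to lift the greedy algorithm behind Proposition~\ref{MAXUK-approximation} for $\mathrm{MAX}\mbox{-}\mathrm{UK}$ to all of $\mathrm{MAX}\tau\mathrm{SNL}$, using the ``transitive'' structure of its objective functions that is isolated in conditions~(1)--(2). Fix $\Xi\in\mathrm{MAX}\tau\mathrm{SNL}$ with its associated $\snl$ sentence $\Phi$, functional variables $\boldvec{P}=(P_1,\ldots,P_l)$, objective $\Pi(\boldvec{P})$ of the prescribed form, and the relaxed quantities $\Pi^{(-)}$ and $g^{(-)}_{\mathbf{P},a+1}$ introduced before the theorem. On an instance $x$ with structures $\SSS_x,\DD_x$, I would run a single left-to-right sweep over $a=0,1,\ldots,n-1$: maintaining $\boldvec{s}=(P_1(0,\cdot),\ldots,P_l(0,\cdot))$ (fixed once, chosen so as to maximize the initial increment, or as forced by $\Phi$) and the current tuple $\boldvec{u}^{(a)}=(P_1(a,\cdot),\ldots,P_l(a,\cdot))$, at step $a$ the algorithm scans all candidate values $\boldvec{v}$ for $\boldvec{u}^{(a+1)}$ and picks the \emph{feasible} one --- one with $(\exists\boldvec{w})[R(a+1,\boldvec{s},\boldvec{u}^{(a)},\boldvec{v},\boldvec{w})]$ --- that maximizes $g^{(-)}_{\mathbf{P},a+1}(\boldvec{s},\boldvec{u}^{(a)},\boldvec{v})$; if none exists it sets $\boldvec{u}^{(a+1)}$ to a default ``no-gain'' value. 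Along the way it prints the partial description of $\boldvec{P}$ and accumulates the running value $\Pi(\boldvec{P})\lceil_a$.

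The first thing to check will be that this sweep is an $\mathrm{APXL}$-type algorithm. The tuples $(i,\boldvec{s},\boldvec{u},\boldvec{v},\boldvec{w})$ range over only polynomially many input objects (as noted after Definition~\ref{def-MAXSNL}), so each of $\boldvec{s},\boldvec{u}^{(a)},\boldvec{v},\boldvec{w}$ is stored in $O(\log|x|)$ bits; the subformulas $R$ and $R^{(-)}$ are quantifier-free and every other predicate is read off $\SSS_x$, so the cardinalities $g^{(-)}_{\mathbf{P},a+1}(\boldvec{s},\boldvec{u},\boldvec{v})=|\{\boldvec{w}\mid R^{(-)}(\cdot)\}|$ and the feasibility tests $(\exists\boldvec{w})[R(\cdot)]$ are computed in log space by an exhaustive scan over $\boldvec{w}$. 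The only data carried across the $n$ iterations are $\boldvec{s}$, $\boldvec{u}^{(a)}$, the counter $a$, and the accumulator $\Pi(\boldvec{P})\lceil_a\le\mathrm{poly}(|x|)$, all of size $O(\log|x|)$; hence the algorithm runs in polynomial time and logarithmic space and returns a feasible $\boldvec{P}$ together with its value.

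For the approximation ratio I would argue as in the $\mathrm{MAX}\mbox{-}\mathrm{UK}$ case. Whenever the greedy makes a feasible choice at step $a+1$, condition~(1) forces $\Pi^{(-)}(\boldvec{P})\lceil_{a+1}=\Pi^{(-)}(\boldvec{P})\lceil_{a}+g^{(-)}_{\mathbf{P},a+1}(\boldvec{s},\boldvec{u}^{(a)},\boldvec{u}^{(a+1)})\le\Pi(\boldvec{P}_{opt})$, and a no-gain step leaves the value unchanged, so inductively $\Pi^{(-)}(\boldvec{P})\le\Pi(\boldvec{P}_{opt})$ and therefore $\Pi(\boldvec{P})=\Pi^{(-)}(\boldvec{P})$. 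Writing $g_a=g^{(-)}_{\mathbf{P},a}(\boldvec{s},\boldvec{u}^{(a-1)},\boldvec{u}^{(a)})\ge0$, condition~(2) along the greedy path shows $g_a$ dominates every increment available at step $a+1$, so in particular $\Pi(\boldvec{P})=\sum_{a=1}^n g_a\ge g_a$. If the greedy is never blocked, a short exchange argument based on condition~(2) should show it realises the maximum of $\Pi^{(-)}$ over all feasible $\boldvec{P}$; since $\boldvec{P}_{opt}$ is feasible (hence has no relaxation overflow), this gives $\Pi(\boldvec{P})=\Pi^{(-)}(\boldvec{P})\ge\Pi^{(-)}(\boldvec{P}_{opt})=\Pi(\boldvec{P}_{opt})$, i.e.\ optimality. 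Otherwise let $a^{*}+1$ be the first blocked step and $G=\max_{\boldvec{z}}g^{(-)}_{\mathbf{P},a^{*}+1}(\boldvec{s},\boldvec{u}^{(a^{*})},\boldvec{z})$; since no $\boldvec{v}$ is feasible there, the contrapositive of condition~(1) gives $\Pi^{(-)}(\boldvec{P})\lceil_{a^{*}}+G>\Pi(\boldvec{P}_{opt})$, while condition~(2) gives $G\le g_{a^{*}}\le\Pi(\boldvec{P})$ and trivially $\Pi^{(-)}(\boldvec{P})\lceil_{a^{*}}\le\Pi(\boldvec{P})$, so $\Pi(\boldvec{P}_{opt})<2\,\Pi(\boldvec{P})$. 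Either way the ratio stays below $2$, placing $\Xi$ in $\mathrm{APXL}$; the degenerate cases (an essentially empty instance, or $a^{*}=0$) would be dispatched separately, since there $\Pi(\boldvec{P}_{opt})$ is itself zero or bounded by a constant.

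The routine part is the log-space bookkeeping of the second paragraph; the delicate point I expect to fight with is the ``never blocked implies optimal'' step. Conditions~(1)--(2) are tailored precisely so that the greedy's running value is comparable with $\Pi(\boldvec{P}_{opt})$ and so that its increments are non-increasing, but converting ``the greedy takes the largest feasible increment at each step'' into ``the greedy maximises $\Pi^{(-)}$'' needs the largest available increment at a given step to be essentially independent of which feasible path reached it --- this is transparent for $\mathrm{MAX}\mbox{-}\mathrm{UK}$, where the best gain at stage $a$ is always the $a$-th largest weight, and it is exactly what the transitivity hypothesis on the $P_k$'s is meant to secure; extracting it cleanly from conditions~(1)--(2) is where I expect the real work to lie. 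A secondary, minor obstacle is justifying that the ``no-gain'' default choice is always available and truly contributes $0$ to $\Pi^{(-)}$, on which the induction in the ratio argument rests.
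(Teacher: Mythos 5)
Your proposal follows essentially the same route as the paper's proof: the identical greedy sweep over $i$ choosing the feasible $\boldvec{v}$ with maximal $g^{(-)}_{\mathbf{P},i+1}$, with conditions (1)--(2) used exactly as in the paper to split into the blocked case (where the monotonicity of the increments yields ratio at most $2$) and the unblocked case (where a prefix-by-prefix comparison with $\boldvec{P}_{opt}$ yields ratio $1$). The ``never blocked implies optimal'' step you flag as delicate is handled in the paper by precisely the induction on $\Pi(\boldvec{P})\lceil_a$ versus $\Pi(\boldvec{P}_{opt})\lceil_a$ that you sketch, so nothing essential is missing.
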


\begin{yproof}
Let $D$ denote any maximization problem in $\mathrm{MAX}\tau\mathrm{SNL}$ and consider an appropriately chosen SNL formula $R^{(-)}$ satisfying the aforementioned form (*) with functional variables $\boldvec{P}=(P_1,\ldots,P_l)$ and an objective function $\Pi(\boldvec{P})$ associated with $D$ in Definition \ref{def-MAXtauSNL}.
The restriction $\Pi(\boldvec{P})\lceil_{a}$ is obtained from $\Pi(\boldvec{P})$ by the use of the supplemental function
$g_{a}(\boldvec{u},\boldvec{v}) = |\{ (\boldvec{\underline{i}},\boldvec{\underline{y}}) \mid R^{(-)}(a,\boldvec{\underline{i}}, \boldvec{\underline{y}}, \boldvec{\underline{u}}, \boldvec{\underline{v}}, \boldvec{\underline{S}},\boldvec{\underline{c}}) \}|$ for each $a\in[0,n]_{\integer}$.

To proceed this proof further, we take a new term of the form $h(i)$  and syntactically replace $P(i,\boldvec{u})$ in $\bigwedge_{j=1}^{t}\psi_j$ by $P(h(i),\boldvec{u})$. We introduce a new functional variable $P^{(h)}$ to express $P(h(i),\boldvec{u})$ as $P^{(h)}(i,\boldvec{u})$.

Let $\boldvec{P_{opt}}$ denote an optimal solution of $D$.
Hereafter, we intend to approximate $\boldvec{P_{opt}}$ by defining an appropriate permutation $h$ on $[0,n]_{\integer}$ with $h(0)=0$
and searching for an approximate solution $\boldvec{P}^{(h)}$, which makes $\Pi(\boldvec{P}^{(h)})$ close enough to $\Pi(\boldvec{P_{opt}}^{(h)})$. This is possible because the condition (3) of Definition \ref{def-MAXtauSNL} helps us replace $\Pi(\boldvec{P_{opt}})$ by $\Pi(\boldvec{P_{opt}}^{(h)})$.
To simplify the notation in the subsequent argument, we assume $l=t=1$,  write $P, i,j$ for $\boldvec{P},\boldvec{i},\boldvec{j}$, drop  ``$j$'' from $\psi_j$, and omit $\boldvec{S}$ and $\boldvec{c}$ entirely.
Note that all variables in the tuple  $(i,j,\boldvec{u},\boldvec{v})$ are evaluated as logarithmic-size ``objects''.

Our goal is to determine $P^{(h)}$ that maximizes the value $\Pi(P^{(h)})$ by employing the following greedy approximation algorithm, called $\BB$, which is in essence a generalization of the one given in the proof of Proposition \ref{MAXUK-approximation}.
Initially, we set $s_0=0$, define $h(0)=0$, and take $\boldvec{\underline{v_0}}$ to satisfy $P(h(0),\boldvec{\underline{v_0}})$.
By induction hypothesis, we assume that $s_0,s_1,\ldots,s_{i}, a_0,a_1,\ldots, \boldvec{\underline{v_0}},\ldots, \boldvec{\underline{v_i}}$ are already determined.
Assume also that, for each number $j\in[0,i]_{\integer}$, the value $h(j)$ is already determined. We then define $S_h=\{h(j)\mid j\in[0,i]_{\integer}\}$.
Moreover, we assume that  $P^{(h)}(0,\boldvec{\underline{v_0}}),P^{(h)}(a,\boldvec{\underline{v_1}}), \ldots, P^{(h)}(i,\boldvec{\underline{v_i}})$ are all true.
We then choose $a,\boldvec{\underline{v}}$ with $a\notin S_h$ that maximize the value $g_{a}(\boldvec{\underline{v_i}},\boldvec{\underline{v}})$ over all possible elements for $\boldvec{v}$. Remember that there are only polynomially many possible values assigned to $\boldvec{v}$.
We then define $h(i+1)=a$, include $h(i+1)$ to the set $S_h$, and write  $\boldvec{v_{i+1}}$ for $\boldvec{v}$. We also define $s_{i+1}$ to be $s_{i} \:+\: g_{h(i+1)}(\boldvec{\underline{v_i}},\boldvec{\underline{v_{i+1}}})$ and make $P^{(h)}(i+1,\boldvec{\underline{v_{i+1}}})$ true.
It is not difficult to show that $h$ is indeed a permutation on $[0,n]_{\integer}$ with $h(0)=0$ and that $\Pi(P^{(h)})\lceil_{i+1} = \Pi(P^{(h)})\lceil_{i} \:+\: g^{(h)}_{i+1}(\boldvec{\underline{v_{i}}},\boldvec{\underline{v_{i+1}}})$ holds for all $i\in[0,n-1]_{\integer}$.

Now, we wish to prove by contradiction that $P^{(h)}$ is an approximate solution of $P^{(h)}_{opt}$ with approximation ratio of at most 3; namely, $\Pi(P^{(h)}) \geq \frac{1}{3} \Pi(P^{(h)}_{opt})$.
Toward an intended contradiction, we now assume that $\Pi(P^{(h)}) < \frac{1}{3} \Pi(P^{(h)}_{opt})$.
This immediately yields $\Pi(P^{(h)})\lceil_j< \frac{1}{3} \Pi(P^{(h)}_{opt})$ for all numbers $j\in[0,n]_{\integer}$.
Let us take the smallest number $j_0\in[n]$ such that $\Pi(P^{(h)})\lceil_j < \Pi(P^{(h)}_{opt})\lceil_j$ holds for all $j\geq j_0$. Such a number $j_0$ exists because of $\Pi(P^{(h)})\lceil_{j} < \frac{1}{3} \Pi(P^{(h)}_{opt})$ for all $j\in[0,n]_{\integer}$.
Since the algorithm $\BB$ always chooses elements making the value of $g_a$ the largest, we obtain, in particular, $\Pi(P^{(h)})\lceil_1\geq \Pi(P^{(h)}_{opt})\lceil_{1}$, and thus $j_0>1$ follows.
Hence, we obtain $\Pi(P^{(h)}_{opt})\lceil_{j_0-1} \leq \Pi(P^{(h)})\lceil_{j_0-1}\leq \frac{1}{3} \Pi(P^{(h)}_{opt})$.

If $P^{(h)}_{opt}(j_0-1,\boldvec{\underline{u'}})$ and $P^{(h)}_{opt}(j_0,\boldvec{\underline{v'}})$ are true, then we obtain $\Pi(P^{(h)}_{opt})\lceil_{j_0} = \Pi(P^{(h)}_{opt})\lceil_{j_0-1} \:+\: g^{(h)}_{j_0}(\boldvec{\underline{u'}},\boldvec{\underline{v'}})$  by the condition (1) of Definition \ref{def-MAXtauSNL}.
It then follows that $g^{(h)}_{j_0}(\boldvec{\underline{u'}}, \boldvec{\underline{v'}}) - g^{(h)}_{j_0}(\boldvec{\underline{v_{j_0-1}}}, \boldvec{\underline{v_{j_0}}})
= \Pi(P^{(h)}_{opt})\lceil_{j_0} - \Pi(P^{(h)}_{opt})\lceil_{j_0-1} - ( \Pi(P^{(h)})\lceil_{j_0} - \Pi(P^{(h)})\lceil_{j_0-1})
= (\Pi(P^{(h)}_{opt})\lceil_{j_0} - \Pi(P^{(h)})\lceil_{j_0}) + (\Pi(P^{(h)})\lceil_{j_0-1} - \Pi(P^{(h)}_{opt})\lceil_{j_0-1}) >0$.
This leads to the inequality of  $g^{(h)}_{j_0}(\boldvec{\underline{v_{j_0-1}}},\boldvec{\underline{v_{j_0}}}) < g^{(h)}_{j_0}(\boldvec{\underline{u'}},\boldvec{\underline{v'}})$.

Next, we further claim that   $g^{(h)}_{j_0}(\boldvec{\underline{u'}}, \boldvec{\underline{v'}}) < \frac{1}{3} \Pi(P^{(h)}_{opt})$. Assuming that $g^{(h)}_{j_0}(\boldvec{\underline{u'}}, \boldvec{\underline{v'}})\geq \frac{1}{3} \Pi(P^{(h)}_{opt})$, since the algorithm $\BB$ has chosen $\boldvec{v_2}$ so that  $g_1^{(h)}(\boldvec{v_1},\boldvec{v})$ is the maximum among all possible values $g^{(h)}_a(\boldvec{w},\boldvec{w'})$ for any $(a,\boldvec{w},\boldvec{w'})$, it follows that
$\Pi(P^{(h)})\lceil_1 = \Pi(P^{(h)})\lceil_0 \:+\: g^{(h)}_1(\boldvec{\underline{v_0}}, \boldvec{\underline{v_1}})\geq g^{(h)}_{j_0}(\boldvec{\underline{u'}}, \boldvec{\underline{v'}}) \geq \frac{1}{3} \Pi(P^{(h)}_{opt})$, a contradiction. As a consequence, $g^{(h)}_{j_0}(\boldvec{\underline{u'}}, \boldvec{\underline{v'}})$ is less than $\frac{1}{3} \Pi(P^{(h)}_{opt})$.

It then follows that $\Pi(P^{(h)}_{opt})\lceil_n - \Pi(P^{(h)}_{opt})\lceil_{j_0} \geq \frac{1}{3} \Pi(P^{(h)}_{opt})$ since, otherwise, $\Pi(P^{(h)}_{opt}) = \Pi(P^{(h)}_{opt})\lceil_n \leq \Pi(P^{(h)}_{opt})\lceil_{j_0-1} + g^{(h)}_{j_0}(\boldvec{\underline{u'}}, \boldvec{\underline{v'}}) + (\Pi(P^{(h)}_{opt})\lceil_{n} - \Pi(P^{(h)}_{opt})\lceil_{j_0}) < 3\cdot \frac{1}{3} \Pi(P^{(h)}_{opt}) = \Pi(P^{(h)}_{opt})$, yielding a contradiction.
Since $\Pi(P^{(h)}_{opt})\lceil_j \geq \Pi(P^{(h)})\lceil_j$ for all $j\geq j_0$,  if $P^{(h)}_{opt}(j,\boldvec{\underline{u'}})$ and $P^{(h)}_{opt}(j+1,\boldvec{\underline{v'}})$ are true, then the condition (2) of Definition \ref{def-MAXtauSNL} ensures that  $A^{(h)}_{j+1}(\boldvec{\underline{u'}}) \subseteq A^{(h)}_{j+1}(\boldvec{\underline{v_j}})$, where $A^{(h)}_{a}(\boldvec{\underline{w}}) = \{g_a^{(h)}(\boldvec{\underline{w}},\boldvec{\underline{w'}}\mid \boldvec{\underline{w'}}\}$.
Notice that $g^{(h)}_{j+1}(\boldvec{\underline{u'}}, \boldvec{\underline{v'}}) \in A^{(h)}_{j+1}(\boldvec{\underline{u'}})$ and
$g^{(h)}_{j+1}(\boldvec{\underline{v_j}}, \boldvec{\underline{v_{j+1}}}) \in A^{(h)}_{j+1}(\boldvec{\underline{v_j}})$.
Since the algorithm $\BB$ chooses $\boldvec{v_{j+1}}$ so that $g^{(h)}_{j+1}(\boldvec{\underline{v_j}}, \boldvec{\underline{v_{j+1}}})$ takes the maximum value in $A^{(h)}_{j+1}(\boldvec{\underline{v_j}})$, we conclude that $g^{(h)}_{j_0}(\boldvec{\underline{u'}}, \boldvec{\underline{v'}}) \leq g^{(h)}_{j_0}(\boldvec{\underline{v_{j_0-1}}}, \boldvec{\underline{v_{j_0}}})$  for all $j\geq j_0$.

For each number $j\geq j_0$, we write $t_{j}$ for the value $\Pi(P^{(h)}_{opt})\lceil_{j}- \Pi(P^{(h)})\lceil_{j}$. We then calculate the value $t_{j+1}-t_j$ as $t_{j+1}-t_j = (\Pi(P^{(h)}_{opt})\lceil_{j+1} - \Pi(P^{(h)}_{opt})\lceil_{j}) - (\Pi(P^{(h)})\lceil_{j+1} - \Pi(P^{(h)})\lceil_j) =  g^{(h)}_{j+1}(\boldvec{\underline{u'}}, \boldvec{\underline{v'}}) - g^{(h)}_{j+1}(\boldvec{\underline{v_j}}, \boldvec{\underline{v_{j+1}}}) \leq 0$, provided that $P^{(h)}(j,\boldvec{u'})$ and $P^{(h)}(j+1,\boldvec{v'})$ are true.
This implies that $t_{j+1} \leq t_j$ for all $j\geq j_0$. It then follows that $\sum_{j=j_0-1}^{n-1}t_{j+1} \leq \sum_{j=j_0-1}^{n-1}t_j$, which implies
$\Pi(P^{(h)}_{opt})\lceil_n - \Pi(P^{(h)}_{opt})\lceil_{j_0} \leq  \Pi(P^{(h)})\lceil_{n-1} - \Pi(P^{(h)})\lceil_{j_0-1}$.
We thus conclude that
$\Pi(P^{(h)})\lceil_{n-1} - \Pi(P^{(h)})\lceil_{j_0-1} \geq \Pi(P^{(h)}_{opt})\lceil_n - \Pi(P^{(h)}_{opt})\lceil_{j_0}$ for all $j\geq j_0$.

Since $\Pi(P^{(h)})\lceil_n\geq \Pi(P^{(h)})\lceil_{n-1}$, we conclude that $\Pi(P^{(h)})\lceil_n - \Pi(P^{(h)})\lceil_{j_0-1} \geq \Pi(P^{(h)}_{opt})\lceil_n - \Pi(P^{(h)}_{opt})\lceil_{j_0} \geq  \frac{1}{3} \Pi(P^{(h)}_{opt})$. It then follows that $\Pi(P^{(h)}) \geq \frac{1}{3} \Pi(P^{(h)}_{opt}) + \Pi(P^{(h)})\lceil_{j_0-1} \geq \frac{1}{3} \Pi(P^{(h)}_{opt})$. This is a clear  contradiction with our assumption that $\Pi(P^{(h)})< \frac{1}{3} \Pi(P^{(h)}_{opt})$.

Therefore, the algorithm should approximate $P_{opt}$ with approximation ratio of at most $3$.
\end{yproof}

\section{Brief Conclusion and Open Questions}\label{sec:open-problem}

Turing machines and circuit families have been used in the mainstream of computational complexity theory as basic computational models to solve various computational problems. In sharp contrast, a logical approach has taken to measure the complexity of these problems using the expressibility of specific logical sentences.

For a better understanding of $\nl$, the first logical approach was taken in \cite{Yam17a} using the notion of \emph{Syntactic NL} (or succinctly,  \emph{SNL}). $\snl$ sentences are characterized in the form of second-order ``functional''  existential quantifiers followed by first-order universal quantifiers together with two particular requirements called the \emph{second-order variable requirements}. Those sentences syntactically express certain types of languages.
The complexity class $\snl$ (and its variant $\snl_{\omega}$), consisting of all languages expressed syntactically by $\snl$ (and $\snl_{\omega}$) sentences
have a direct association with a practical, working hypothesis, known as the \emph{linear space hypothesis} (LSH).

In this work, we have continued the study of the structural properties of $\snl$. In particular, we have focused on three major issues:
(1) the expressibility of complementary problems of $\snl$ problems with an introduction of $\mu\snl$, which is a variant of $\snl$,
(2) the computational complexity of the monotone variant of $\snl$ (called $\mathrm{MonoSNL}$) together with its restriction (called $\mathrm{MonoBSNL}$), and
(3) the computational complexity of the optimization version of $\snl$ (called $\mathrm{MAXSNL}$) together with its variant (called $\mathrm{MAX}\tau\mathrm{SNL}$).

For the interested reader, we wish to raise a few important open questions associated with this work.
\renewcommand{\labelitemi}{$\circ$}
\begin{enumerate}\vs{-1}
  \setlength{\topsep}{-2mm}%
  \setlength{\itemsep}{1mm}%
  \setlength{\parskip}{0cm}%

\item Concerning various complexity classes discussed in this work, still unknown is any of the following class equalities: $\snl_{\omega} = \snl$, $\mathrm{MonoSNL}=\snl$, $\mathrm{MonoSNL}=\mathrm{CSP}_2$, $\mathrm{MonoBSNL}=\mathrm{BCSP}_2$,  $\mathrm{MAXSNL}=\mathrm{MAXSNP}$, and $\mathrm{MAX}\tau\mathrm{SNL} = \mathrm{MAXSNL}$. Proving the equalities or the inequalities will significantly deepen our understanding of SNL.

\item We have shown in Section \ref{sec:mu-operator} that the complementary problem of a particular $\snl$ problem is in $\mu\snl$. We thus wonder if the complementary problems of all problems in $\snl$ are in $\mu\snl$ (or more strongly, in $\snl$).

\item Given an $\snl$ sentence $\Phi$, we specifically call it an \emph{SNL sentence without comparison symbol} if the equality ($=$) as well as the less-than-or-equal symbol ($\leq$) is not used in $\Phi$.
    What is the computational complexity of decision problems expressed by such restrictive sentences?

\item In Definition \ref{def-mu-term}, we disallow any nested application of the $\mu$-operator for constructing $\mu$-terms. When we allow such a nested application of the $\mu$-operator for $\mu$-terms, then how dose the computational complexity of $\mu\snl$ change?

\item We have stated in Corollary \ref{BCSP-dichotomy} that the dichotomy theorem holds for binary CSPs. In the polynomial-time setting, MMSNP is known to satisfy the dichotomy theorem. Does $\mathrm{MonoBSNL}$ also enjoy the same dichotomy theorem?

\item We also expect a further study on $\mathrm{MAX}\tau\mathrm{SNL}$ and other natural subclasses of $\mathrm{MAXSNL}$ that are nicely contained in $\mathrm{APXL}$.
\end{enumerate}

\let\oldbibliography\thebibliography
\renewcommand{\thebibliography}[1]{%
  \oldbibliography{#1}%
  \setlength{\itemsep}{-2pt}%
}
\bibliographystyle{plainurl}

\end{document}